\DeclareFontFamily{U}{nxlmi}{}
\DeclareFontShape{U}{nxlmi}{m}{it}{
    <-6.3>    nxlmi05
    <6.3-8.6> nxlmi07
    <8.6->    nxlmi0
}{}
\DeclareFontShape{U}{nxlmi}{b}{it}{
    <-6.3>    nxlbmi05
    <6.3-8.6> nxlbmi07
    <8.6->    nxlbmi0
}{}
\renewcommand{\partial}{{\text{\scalebox{1.14}{\usefont{U}{nxlmi}{m}{it}\symbol{64}}}\mspace{1mu}}}
\g@addto@macro\bfseries{\boldmath}
\definecolor{maroon}{RGB}{167,74,74}
\definecolor{magreen}{RGB}{59,125,37}
\definecolor{mablue}{RGB}{59,136,195}
\definecolor{mayellow}{RGB}{242,147,24}
\colorlet{mabrown}{maroon!50!magreen}
\colorlet{maorange}{maroon!50!mayellow}
\colorlet{macyan}{magreen!50!mablue}
\colorlet{red}{maroon}
\colorlet{green}{magreen}
\colorlet{blue}{mablue}
\colorlet{yellow}{mayellow}
\colorlet{brown}{mabrown}
\colorlet{orange}{maorange}
\colorlet{cyan}{macyan}
\definecolor{rred}{RGB}{167,33,74}
\definecolor{bblue}{RGB}{29,136,255}
\definecolor{ppurple}{RGB}{113,84,165}
\definecolor{ppink}{RGB}{255,55,219}
\renewcommand\section{\@startsection{section}{1}{\z@}%
  {-3.5ex \@plus -1.3ex \@minus -.7ex}%
  {2.3ex \@plus.4ex \@minus .4ex}%
  {\large\scshape\bfseries}}
\renewcommand\subsection{\@startsection{subsection}{2}{\z@}%
    {-2.3ex\@plus -1ex \@minus -.5ex}%
    {1.2ex \@plus .3ex \@minus .3ex}%
    {\normalsize\scshape\bfseries}}
\renewcommand\subsubsection{\@startsection{subsubsection}{3}{\z@}%
    {-2.3ex\@plus -1ex \@minus -.5ex}%
    {1ex \@plus .2ex \@minus .2ex}%
    {\normalsize\scshape\bfseries}}
\renewcommand\paragraph{\@startsection{paragraph}{4}{\z@}%
    {1.75ex \@plus1ex \@minus.2ex}%
    {-1em}%
    {\normalsize\bfseries}}
\renewcommand\subparagraph{\@startsection{subparagraph}{5}{\parindent}%
    {1.75ex \@plus1ex \@minus .2ex}%
    {-1em}%
    {\normalsize\bfseries}}
\newcommand{\st}{\texttwemoji{monkey face}}
\newcommand{\jf}{\texttwemoji{fish}}
\newcommand{\nn}{\nonumber \\}
\renewcommand{\emptyset}{\varnothing}
\let\coloneqq\relax
\let\eqqcolon\relax
\newcommand*{\coloneqq}{\mathrel{\rlap{\raisebox{0.3ex}{$\m@th\cdot$}}\raisebox{-0.3ex}{$\m@th\cdot$}}=}
\newcommand*{\eqqcolon}{\mathrel{=\llap{\raisebox{0.3ex}{$\m@th\cdot$}}\llap{\raisebox{-0.3ex}{$\m@th\cdot$}}}}
\newcommand{\demeqq}{\overset{!}{=}}
\newcommand{\parti}{\mathcal{Z}}
\newcommand{\pd}{\partial} 
\newcommand{\inv}[1]{{{#1}^{-1}}} 
\newcommand{\trans}[1]{{{#1}^{\top}}} 
\newcommand{\Half}{\frac{1}{2}}
\newcommand{\suchthat}{\;\middle|\;}
\renewcommand{\vec}{\boldsymbol}
\newcommand{\w}{\mathbin{\scalebox{0.8}{$\wedge$}}} 
\newcommand{\ve}{\mathbin{\scalebox{0.8}{$\vee$}}} 
\newcommand{\ex}[1]{\mathrm{e}^{#1}} 
\newcommand{\ii}{\mathrm{i}} 
\newcommand{\pdual}[1]{\mathrm{PD}[#1]} 
\let\C\undefined
\let\U\undefined
\newcommand{\R}{\mathbb{R}}
\newcommand{\C}{\mathbb{C}}
\newcommand{\Z}{\mathbb{Z}}
\renewcommand{\S}{\mathbb{S}}
\newcommand{\set}[1]{\qty{#1}}
\newcommand{\eqv}{\Longleftrightarrow}
\newcommand{\U}{\mathrm{U}}
\renewcommand{\H}{\mathrm{H}} 
\renewcommand{\t}[1]{{\text{#1}}}
\newcommand{\nspace}[1][1]{\kern-#1em}
\newcommand{\blob}{\bullet}
\renewcommand{\mapsto}{\longmapsto}
\newcommand{\xto}[1]{\overset{#1}{\to}}
\newcommand{\xmapsto}[1]{\overset{#1}{\mapsto}}
\renewcommand{\mod}{\ \text{mod}\ }
\renewcommand{\emptyset}{\varnothing}
\renewcommand{\xmapsto}[2][]{\ext@arrow 0599{\mapstofill@}{#1}{#2}}
\def\mapstofill@{\arrowfill@{\mapstochar\relbar}\relbar\rightarrow}
\def \cD {\mathcal{D}}
\def \cE {\mathcal{E}}
\def \cG {\mathcal{G}}
\def \cH {\mathcal{H}}
\def \cO {\mathcal{O}}
\def \cS {\mathcal{S}}
\DeclareMathAlphabet{\mathfrak}{U}{jkpmia}{m}{n}
\def \fA {\mathfrak{A}}
\def \fU {\mathfrak{U}}
\def \fZ {\mathfrak{Z}}
\def \fq {\mathfrak{q}}
\def \fv {\mathfrak{v}}
\def \fw {\mathfrak{w}}
\def \bbB {\mathbb{B}}
\def \bbC {\mathbb{C}}
\def \bbG {\mathbb{G}}
\def \bbK {\mathbb{K}}
\def \bbL {\mathbb{L}}
\def \bb1 {{\mathbb{1}}}
\def \sfI {\mathsf{I}}
\def \sfJ {\mathsf{J}}
\def \sfc {\mathsf{c}}
\def \sfi {\mathsf{i}}
\def \sfj {\mathsf{j}}
\def \sfk {\mathsf{k}}
\newcommand{\BF}{\t{BF}}
\newcommand{\W}{\mathrm{W}}
\newcommand{\hW}{\hat{\mathrm{W}}}
\newcommand{\hV}{\hat{\mathrm{V}}}
\newcommand{\aent}{R} 
\newcommand{\coaent}{\co{\aent}} 
\newcommand{\inr}[1]{\overline{#1}} 
\newcommand{\lattice}{\Lambda}
\renewcommand{\b}{\mathrm{b}} 
\newcommand{\CC}{\mathrm{C}}
\newcommand{\ZZ}{\mathrm{Z}}
\newcommand{\dH}{\check{\H}}
\newcommand{\Ocl}{\Omega_{\text{cl}}}
\newcommand{\ib}{\iota_{\scriptscriptstyle \pd}}
\newcommand{\lapl}{\triangle}
\newcommand{\beq}{\begin{equation}}
        \newcommand{\eeq}{\end{equation}}
\newcommand{\pa}{\partial}
\newcommand{\mf}[1]{\mathfrak{#1}}
\newcommand{\spn}{\text{span}}
\newcommand{\mc}[1]{\mathcal{#1}}
\newcommand{\eom}{\t{\textsc{eom}}}
\newcommand{\hin}{\mathbin{\tilde{\in}}}
\newcommand{\nothin}{\mathbin{\cancel{\tilde{\in}}}}
\newcommand{\co}[1]{{#1}^\mathsf{c}}
\newcommand{\cco}[1]{{#1}^\mathsf{cc}}
\newcommand{\fall}{{{}^\forall\!}}
\newcommand{\h}{\mathrm{h}} 
\newcommand{\cpos}{\mathbf{q}} 
\newcommand{\cmom}{\mathbf{p}} 
\newcommand*\obox[1]{%
    \fcolorbox{white}{blue!17}{\hspace{1em}#1\hspace{1em}}}
\newenvironment{claim}{\par\begin{minipage}{\linewidth}\textbf{Claim:}}{\end{minipage}}
\newenvironment{proof}{\begin{mdframed}[linecolor=blue!17,  hidealllines=true,leftline=true, linewidth=.2em, leftmargin=+1cm]\noindent\ignorespaces \textbf{Proof:}}{\end{mdframed}}
\newcommand{\alge}[1]{\cS_{#1}}
\newcommand{\vne}{\cS_\t{vN}}
\newcommand{\kb}[2]{\ketbra{#1}{#2}}
\def\fdiffd{\mathrm{D}}
\DeclareDocumentCommand\fdifferential{ o g d() }{ 
    \IfNoValueTF{#2}{
        \IfNoValueTF{#3}
        {\fdiffd\IfNoValueTF{#1}{}{^{#1}}}
        {\mathinner{\fdiffd\IfNoValueTF{#1}{}{^{#1}}\argopen(#3\argclose)}}
    }
    {\mathinner{\fdiffd\IfNoValueTF{#1}{}{^{#1}}#2} \IfNoValueTF{#3}{}{(#3)}}
}
\DeclareDocumentCommand\DD{}{\fdifferential}
\DeclareDocumentCommand\variation{ o g d() }{ 
    \IfNoValueTF{#2}{
        \IfNoValueTF{#3}
        {\updelta \IfNoValueTF{#1}{}{^{#1}}}
        {\mathinner{\updelta \IfNoValueTF{#1}{}{^{#1}}\argopen(#3\argclose)}}
    }
    {\mathinner{\updelta \IfNoValueTF{#1}{}{^{#1}}#2} \IfNoValueTF{#3}{}{(#3)}}
}
\DeclareDocumentCommand\var{}{\variation} 
\def\sdiffd{\mathbf{d}}
\DeclareDocumentCommand\sdifferential{ o g d() }{ 
    \IfNoValueTF{#2}{
        \IfNoValueTF{#3}
        {\sdiffd\IfNoValueTF{#1}{}{^{#1}}}
        {\mathinner{\sdiffd\IfNoValueTF{#1}{}{^{#1}}\argopen(#3\argclose)}}
    }
    {\mathinner{\sdiffd\IfNoValueTF{#1}{}{^{#1}}#2} \IfNoValueTF{#3}{}{(#3)}}
}
\DeclareDocumentCommand\sd{}{\sdifferential}
\DeclareDocumentCommand\detp{}{\opbraces{\determinant{}'}}
\DeclareMathOperator{\volume}{vol}
\DeclareDocumentCommand\vol{}{\opbraces{\volume}}
\DeclareMathOperator{\image}{im}
\DeclareDocumentCommand\im{}{\opbraces{\image}}
\DeclareMathOperator{\cokernel}{coker}
\DeclareDocumentCommand\coker{}{\opbraces{\cokernel}}
\crefname{subsection}{subsection}{subsections}
\crefname{equation}{}{}
\numberwithin{equation}{section}
\begin{document}

\vspace*{-6em}
\begin{center}
    {\Large\bfseries\scshape{Entanglement in BF theory I: \\  Essential topological entanglement}}
\end{center}

\begin{center}
    \textbf{
        Jackson R. Fliss\textsuperscript{1,\jf} and
        Stathis Vitouladitis\textsuperscript{2,\st}
    }
\end{center}

\begin{center}
    \textbf{1}  Department of Applied Mathematics and Theoretical Physics, \\ University of
    Cambridge, Cambridge CB3 0WA, United Kingdom
    \\[0.5em]
    \textbf{2}  Institute for Theoretical Physics, University of Amsterdam, \\ 1090 GL
    Amsterdam, The Netherlands
    \\[\baselineskip]


    \jf\ \href{mailto:jf768@cam.ac.uk}{\small \sf jf768@cam.ac.uk} \qquad \st\ \href{mailto:e.vitouladitis@uva.nl}{\small \sf e.vitouladitis@uva.nl}
\end{center}

\section*{Abstract}
\vspace{-1em}
We study the entanglement structure of Abelian topological order described by \(p\)-form BF theory in arbitrary dimensions. We do so directly in the low-energy topological quantum field theory by considering the algebra of topological surface operators. We define two appropriate notions of subregion operator algebras which are related by a form of electric-magnetic duality. To each subregion algebra we assign an entanglement entropy which we coin \emph{essential topological entanglement}. This is a refinement to the traditional topological entanglement entropy. It is intrinsic to the theory, inherently finite, positive, and sensitive to more intricate topological features of the state and the entangling region. This paper is the first in a series of papers investigating entanglement and topological order in higher dimensions.

\vspace{10pt}
\tableofcontents
\vspace{10pt}

\section{Introduction}\label{sect:intro}

Quantum entanglement is an invaluable framework for modern theoretical physics. This framework has led to profound insights into quantum information theory, quantum field theory, and quantum gravity.\footnote{See \cite{Faulkner:2022mlp} and references therein for an (obviously) non-exhaustive summary.} Yet some of the most profound applications can be found in the theory of quantum phases of matter.  In particular, in (2+1) dimensional gapped systems the presence of topological order cannot be diagnosed by any local order parameter. Entanglement is a non-local phenomenon. It stands to reason that long-range entanglement can provide a clean signature of topological order in (2+1) dimensions: the celebrated \textquote{topological entanglement entropy} (TEE) \cite{kitaev2006topological,Levin:2006zz}.

Low-energy effective field theories are potent tools for exploring TEE in manifestly universal manner. These are topological quantum field theories (TQFTs), the prototypical example being Chern-Simons theory in (2+1) dimensions.  Topological order in higher-dimensions is expected to be richer: already the discovery of (3+1) dimensional systems displaying \textquote{fracton topological order} \cite{bravyi2011topological,bravyi2013quantum,chamon2005quantum} has broadened our understanding of gapped phases.  Yet even the traditional classification of TQFTs can involve a large set of non-Gaussian interactions which induce richer forms of operator statistics \cite{Ye:2015eba,Tiwari:2016zru}.  It remains a broad open question as to what universal entanglement signatures diagnose and distinguish topological order in higher dimensions.  Here we take modest steps towards understanding this question, focusing on Abelian topological orders described by Abelian BF theory. This focus buys us some muscle: we will be able to make broad statements about Abelian topological order in arbitrary dimensions and quantized on (almost) arbitrary manifolds.\footnote{We do restrict to torsion-free manifolds as well.} We will use this muscle to address two conceptually puzzling aspects of the traditional treatements of TEE.

The first conceptual puzzle we want to address is the area law. Traditional computations of TEE involve an area law stemming from short-distance correlations at the UV scale and to which the TEE appears as a subleading, scale-independent, correction.  Heuristically the scale-independence of this subleading correction is a signal of its universality (however there are subtleties applying this argument to lattice and tensor network models \cite{Zou:2016dck,Kim:2023ydi}).  It is initially surprising that a TQFT, which has a finite dimensional Hilbert space when quantized on a compact surface, can support a divergent entanglement entropy.  However, the area law arises from an explicit addition of UV degrees of freedom when calculating TEE.  These either come in the form of an embedding into a microscopic model (e.g. a lattice gauge theory \cite{Buividovich:2008gq,Buividovich:2008kq,Buividovich:2008yv}, a \textquote{coupled wires} model \cite{Cano:2014pya}, or a tensor network model) or in the form of \textquote{edge-modes} living on an entangling surface \cite{Li:2008kda,chandran2011bulk,Swingle:2011hu,Qi:2012aa}.  These UV degrees of freedom play an important role in calculating entanglement entropy: TQFTs are quantum gauge theories which have a well-known obstruction to factorizing the Hilbert space into local subregions \cite{Buividovich:2008gq,Donnelly:2011hn,Casini:2013rba,Donnelly:2014fua,Soni:2015yga}. In this context the UV degrees of freedom provide an arena, the \textquote{extended Hilbert space,} in which the Hilbert space can be factorized and the entanglement entropy defined. Here we ask if there is another manner for defining entanglement entropy that (i) bypasses invoking UV degrees of freedom, (ii) is strictly topological, and (iii) is commensurate with a finite dimensional Hilbert space in the IR.

There is indeed an alternative for dealing with this obstruction.  In a seminal paper, Casini, Huerta, and Rosabal \cite{Casini:2013rba} illustrated how operator algebras provide a natural definition of entanglement in gauge theories. The lack of Hilbert space factorization manifests itself as a non-trivial center in the algebra of operators associated to a region. Algebraic definitions of entanglement in gauge theories and their relation to the extended Hilbert space have been largely explored in the context of lattice gauge theories \cite{Lin:2018bud,Radicevic:2015sza}.  However, the algebraic approach to entanglement is, in principle, valid even in the continuum. For TQFTs it provides an intrinsically IR avenue for defining entanglement entropy.  I.e. a definition that utilizes only the ground states and operators available at low-energies, without involving UV degrees of freedom, and is strictly finite.  Despite the hotbed of research in entanglement entropy in topological phases and quantum gauge theories, this aspect of topological entanglement has been left relatively unexplored.

The second conceptual puzzle we want to address is \textquote{semi-locality} of the traditional TEE which in (2+1) dimensions involves topological aspects (Betti numbers) intrinsic to the entangling surface; this relation is argued to hold in higher dimensions \cite{Grover:2011fa}. The ground states of topological field theories display extreme long-range entanglement.\footnote{Illustrated, for instance, in \textquote{multi-boundary} set-ups in Chern-Simons theory \cite{Balasubramanian:2016sro}.} It is perhaps then surprising that TEE does not \textquote{sense} farther than the entangling surface itself and is insensitive, say, to how the entangling surface is topologically embedded into the Cauchy slice defining the Hilbert space.

We will address these two conceptual puzzles in this paper. Namely, we consider the operator algebra, $\fA[\Sigma]$, acting on a Cauchy slice, $\Sigma$, that is directly available in Abelian BF theory. The operators generating this algebra are higher-form Wilson surface operators.  Due to the topological nature of the field theory, these surface operators are invariant under deformations and so are naturally associated with homology cycles of $\Sigma$. Clearly this algebra, and any subalgebra, is inherently topological and defined directly in the IR.  However, such operators are not wont to be localized to spatial subregions; as a result, there are potentially large ambiguities in ascribing a subalgebra, $\fA[\aent]$, to region, $\aent$.  We describe two natural choices that can roughly be stated as \textquote{the set of operators that \emph{can} act entirely in $\aent$} and \textquote{the set of operators that \emph{must} act, at least partially, in $\aent$.}  We name these two algebras the \emph{topological magnetic algebra} and the \emph{topological electric algebra}, respectively, for reasons that will become clear in due time. They are related by a form of {\it subregion electric-magnetic duality} which we will make precise below.

We utilize these two notions of subregion algebra to assign an entanglement entropy to ground states in the theory.  This entanglement entropy is by nature (i) topological, and (ii) finite and commensurate with a finite dimensional Hilbert space.  To distinguish it from the traditional TEE appearing as the subleading correction to an area law, we coin\footnote{Competing nomenclatures: intrinsic, core, and boneless topological entanglement. 
} this entropy \emph{essential topological entanglement}, $\cE$. It comes in two forms, $\cE_\t{mag}$ and $\cE_\t{elec}$, and are related by the subregion electric-magnetic duality mentioned above.

Owing to the power of topological field theory, we will be able to evaluate $\cE$ in arbitrary dimensions, on arbitrary surfaces, and  associated to arbitrary regions.  This allows to us to show that $\cE$ is indeed sensitive to more intricate and long-range forms of topology than that of $\pa\aent$ alone: in both forms it depends on topological aspects of $\pa\aent$, $\Sigma$, and how $\pa\aent$ is embedded into $\Sigma$.  This is, again, innate to the operator algebra definition. Operators in $\fA[\aent]$ must, foremost, be operators in $\fA[\Sigma]$.  It is clear then that cycles of $\pa\aent$ that embed to trivial cycles of $\Sigma$ cannot contribute to $\cE$. 

As we will see below, this topological data appears as the coefficient of what can be regarded as the total quantum dimension of surface operators forming $\fA[\Sigma]$ and a coarse measure of non-trivial surface operator braiding. In this regard $\cE$ mimics the traditional TEE, however differs in some important aspects for its use in diagnosing topological order. For instance, as will explain below, the ETE vanishes for the canonical examples defining the TEE on the plane or sphere, ultimately following from the fact that the algebras associated to these spaces must be trivial. It may then appear that $\cE$ is of limited use as a diagnostic of topological order. However, as previously emphasized, the sensitivity of $\cE$ to long-range topological features (beyond the entangling surface) can provide a view on richer features that a topological phase may be sensitive to. We comment on these points in the discussion.

We pause to mention that similar notions to our definition of $\cE$ have appeared in the context of lattice gauge theories by examining the algebras of \textquote{ribbon operators} which are also naturally topological operator algebras \cite{Delcamp:2016eya}.  We are also aware of upcoming work utilizing similar ideas to discuss the \textquote{area operator} in tensor network models of holographic entanglement \cite{AkersSoni}.  However, the focus on these quoted works is on (2+1) dimensional non-Abelian models on spaces and subregions with simple topology.  Our focus on BF theory allows us to work directly in the continuum and deftly incorporate spaces and entangling regions of arbitrary topology, albeit at the expense of working in an Abelian model.  Because of this it is hard to make a direct comparison between these works and ours at this time. We will comment on this further in \Cref{sect:disc}.

Lastly, we also mention that this paper is the first in a series of papers exploring Abelian topological entanglement in higher dimensions.  In a follow up paper \cite{Fliss:2023uiv} we will investigate the \textquote{traditional} TEE in Abelian BF theory by the methods of the extended Hilbert space and the replica path integral.  There we will show that the edge mode spectrum organizes into characters of an infinite dimensional current algebra which generates the entanglement spectrum.  This leads to a TEE of the more traditional variety: area and sub-area laws plus subleading corrections dependent on the topology of $\pa\aent$ alone.

\subsection{Notation}
We will delineate some basic notation for what follows here.

We will consider theories on torsion-free manifolds of spacetime dimension $d$.  We will denote such manifolds collectively as $X$.  Theories will be quantized on $(d-1)$-dimensional manifolds that we will notate as $\Sigma$. We will often call $\Sigma$ the \textquote{Cauchy slice} simply as a term of familiarity and without reference to any causal structure of the TQFT. In what follows we will make use of the notion of a subregion, $\aent$, which will be the closure of a $(d-1)$-dimensional embedded open submanifold of $\Sigma$. We will denote the interior of $\aent$ as $\overline{\aent}\coloneqq\aent\setminus\pa\aent$. We denote the complement $\coaent$ as the closure of $\Sigma\setminus\aent$.  Note that $\aent\cap\coaent=\pa\aent$.

The space of forms of degree $p$, will be denoted $\Omega^p(\cdot)$. Unless stated otherwise these forms are real valued. Cohomology groups will be denoted with their degree placed upstairs, $\H^p(\cdot)$, while homology groups will be denoted with their degree placed downstairs, $\H_p(\cdot)$. Unless otherwise stated, these groups are always defined with integer coefficients. For compact, boundary-less manifolds, we notate the dimensions of the groups by the Betti number, i.e.:
\begin{equation}
	\b_p(\Sigma)\coloneqq\dim\H^p(\Sigma)=\dim\H_p(\Sigma).
\end{equation}
For (co)homology groups on manifolds with boundary or for relative homology groups we will always write the dimension explicitly.

Given a Hilbert space, $\mc H_\Sigma$, defined on a Cauchy slice, $\Sigma$, we will denote the algebra of bounded operators acting on $\mc H_\Sigma$ as $\fA[\Sigma]$. For a subalgebra $\fA_\t{sub}\subset\fA[\Sigma]$ we will denote the commutant as $\co{\left(\fA_\t{sub}\right)}\coloneqq\set{\mc O\in\fA[\Sigma]\suchthat \comm{\mc O}{\mc O'}=0,\ \fall\mc O'\in\fA_\t{sub}}.$

\section{Quantization of BF theory}\label{sect:canonquant}

We begin by introducing the $p$-form Abelian BF theory, on a \(d\)-dimensional, torsion-free manifold \(X\), with action
\begin{equation}\label{eq:BFact}
	S_\BF\qty[A,B] \coloneqq \frac{\bbK^{\sfI\sfJ}}{2\pi} \int_X B_\sfI\w\dd{A_\sfJ}.
\end{equation}
In the above \(A_\sfI\in \Omega^{p}(X)\) and \(B_\sfI\in \Omega^{d-p-1}(X)\) are vectors of \(p\)-  and \((d-p-1)\)-form gauge fields respectively. We will take $p\neq 0,d-1$.\footnote{We expect much of what follows to morally hold true in these special cases, however some technical details of our proofs would need to be altered.} We have also allowed a possible square, integer, non-degenerate --- but not necessarily symmetric --- \(\bbK\)-matrix of rank \(\kappa\). For notational simplicity we will drop the indices, unless it is necessary.  In \Cref{app:precBF} we provide a more careful treatment of BF theory, allowing for manifolds with torsion. The action \Cref{eq:BFact} possesses a gauge redundancy of the form
\begin{equation}\label{eq:redundancy}
	\var{A}=\dd\alpha\qq{and} \var{B}=\dd{\beta},
\end{equation}
where $\alpha\in\Omega^{p}(X)$ and $\beta\in\Omega^{d-p-2}(X)$.

Let us suppose that $X$ possesses a boundary and discuss the quantization of the theory on $\pa X$.  Much of this procedure follows that of \cite{Bergeron:1994ym} and \cite{Tiwari:2016zru}, however we provide these details for completeness.  We begin with the classical symplectic structure.  The variation of the action takes the form
\begin{equation}
	\var S_\BF\qty[A,B]=\int_X\left(\var B\w \eom[A]+\var A\w\eom[B]\right)+\int_{\pa X}{\boldsymbol\vartheta}\qty[A,B;\var A,\var B],
\end{equation}
where the classical equations of motion are flatness conditions:
\begin{equation}
	\eom[A]=\frac{\bbK}{2\pi}\dd{A}=0\qquad\qquad \eom[B]=(-1)^{(d-p)(p+1)}\frac{\trans\bbK}{2\pi}\dd{B}=0.
\end{equation}
The boundary term defines the symplectic potential, ${\boldsymbol\vartheta}$:
\begin{equation}\label{eq:symppot}
	\int_{\pa X}{\boldsymbol\vartheta}\qty[A,B;\var A, \var B]\coloneqq(-1)^{d-p-1}\frac{\bbK}{2\pi}\int_{\pa X}B\wedge\var A,
\end{equation}
where the pullback along the embedding map, \(\ib:\pd X\hookrightarrow X\) is implicitly understood above.  We see this theory is already in canonical, or Darboux, form, ${\boldsymbol\vartheta}={\cmom}\wedge\star_{\scriptscriptstyle \pd X}\var{\cpos}$, with
\begin{equation}
	\cpos = A \qq{and} {\cmom}=(-1)^{d-p-1}\frac{\bbK}{2\pi}\star_{\scriptscriptstyle \pd X}B,
\end{equation}
which is consistent with fixing $A$ as a boundary condition.  We can switch the role of $(\star_{\scriptscriptstyle \pd X}B,A)\sim({\cmom},{\cpos})$ to $(\star_{\scriptscriptstyle \pd X}B,A)\sim({\cpos},{\cmom})$ by the inclusion of the boundary action
\begin{equation}
	S_{\pa}^{\text{alt.}}[A,B]=(-1)^{d-p}\frac{\bbK}{2\pi}\int_{\pa X}B\wedge A,
\end{equation}
but we will work in the former quantization scheme.  The symplectic form on $\pa X$, given by the variation of $\int_{\pa X}{\boldsymbol\vartheta}$, is
\begin{equation}\label{eq:sympBA}
	{\boldsymbol\Omega}_{\pa X}=(-1)^{d-p-1}\frac{\bbK}{2\pi}\int_{\pa X}\var B\wedge \var A.
\end{equation}
This symplectic form is degenerate due to gauge variations.  We will take care of this soon below.

We will quantize the BF theory on a $(d-1)$-dimensional manifold, \(\Sigma\), by performing the path-integral on \(X = \R\times \Sigma\). Here, \(\R\) is coordinatized by \(t\) and the Cauchy slice at time \(t\) is represented by \(\set{t}\times \Sigma\). The path-integral measure is formally given by
\begin{equation}
	\dd{\mu}(A,B) = \frac{\DD{A}\ \DD{B}}{\vol(\cG_p)\vol(\cG_{d-p-1})}\;\ex{\ii S_\BF[A,B]},
\end{equation}
where \(\cG_p\) and \(\cG_{d-p-1}\) are the gauge groups for the redundancies \Cref{eq:redundancy}. On top of it, it includes a sum over non-trivial bundles. For a full definition of the measure and the gauge groups, we refer the reader to \Cref{app:precBF}. Currently, we consider the case where \(\pd \Sigma = \emptyset\) (we will revisit the case with boundaries in a follow-up paper). Additionally, let \(\iota_\Sigma : \Sigma\hookrightarrow X\) be the embedding of \(\Sigma\) into \(X\).

We can express \(A\) and \(B\) as \(A = A_0 + a\) and \(B = B_0 + b\) respectively, where
\begin{align}
	\iota_\Sigma^* A_0 = 0 & \qquad\eqv\qquad \iota_\Sigma^* A  = \iota_\Sigma^* a \nn
	\iota_\Sigma^* B_0 = 0 & \qquad\eqv\qquad \iota_\Sigma^* B  = \iota_\Sigma^* b. \nonumber
\end{align}
To make the decomposition clearer we can use coordinates \(\set{x_m}_{m=1}^{d-1}\) for \(\Sigma\) which gives us:
\begin{align}
	A & = \qty(A_0)_{m_1\cdots m_{p-1}} \dd{t}\w\dd{x}^{m_1} \w\cdots\w \dd{x}^{m_{p-1}} + a_{m_1\cdots m_p} \dd{x}^{m_1} \w\cdots\w \dd{x}^{m_p} \qq{and}         \\
	B & = \qty(B_0)_{m_1\cdots m_{d-p-2}} \dd{t}\w\dd{x}^{m_1} \w\cdots\w \dd{x}^{m_{d-p-2}} + b_{m_1\cdots m_{d-p-1}} \dd{x}^{m_1} \w\cdots\w \dd{x}^{m_{d-p-1}}.
\end{align}
In these coordinates, let us also write \(\dd = \dd{t}\w\pd_t + \dd{x^m}\w\pd_m \eqqcolon \dd_\R + \sd\). Integrating \Cref{eq:BFact} by parts and utilizing the fact that \(\dd_\R A_0 = 0\) and \(\dd_\R B_0 = 0\) (since they involve \(\dd{t}\w\dd{t}\w\cdots=0\)), we arrive at
\begin{equation}
	S_\BF\qty[A_0+a,B_0+b] = \frac{\bbK}{2\pi}\int_X \qty\Big((-1)^{d-p}\sd{b}\w A_0 + B_0 \w \sd{a} + b\w\dd a). \label{eq:act-split}
\end{equation}
It is easy to see that \(A_0\) and \(B_0\) act as Lagrange multipliers enforcing the \(\Sigma\)-flatness of \(a\) and \(b\):
\begin{equation}\label{eq:GaussLaw}
	\sd a=\sd b=0.
\end{equation}
We will refer to \Cref{eq:GaussLaw} as the \textquote{Gauss law} constraints.  Using the property \Cref{eq:property} of the path-integral measure we can write
\begin{equation}
	\dd{\mu(A,B)} = \dd{\mu(A_0,B_0)}\; \dd{\mu(a,b)}\ \ex{\ii S_\BF[A_0,b]}\ \ex{\ii S_\BF[a,B_0]}
\end{equation}
and performing the integrals over \(A_0\) and \(B_0\) we get \(\dd{\mu(a,b)}\delta\qty[\sd{a}]\delta\qty[\sd{b}]\). The delta-functions force \(a\) and \(b\) to be closed under \(\sd\); Hodge decomposition implies, then, that
\begin{equation}\label{eq:abhodge}
	a = \sd{\psi} + \theta, \qq{and} b=\sd{\chi} + \phi,
\end{equation}
for some \(\psi\in \Omega^{p-1}(X)\), \(\theta\in\iota_\Sigma^* \H^p(\Sigma)\), and \(\chi\in \Omega^{d-p-2}(X)\), \(\phi\in\iota_\Sigma^* \H^p(\Sigma)\). This results into the path-integral measure:
\begin{align}
	\dd{\mu(a,b)}\ \delta\qty[\sd{a}]\delta\qty[\sd{b}]
	 & = \frac{\DD{\psi}\DD{\chi}\DD{\phi}\DD{\theta}}{\vol(\cG_p)\vol(\cG_{d-p-1})}\ \exp(\frac{\ii \bbK}{2\pi}\int_X \phi\w\dd_\R {\theta}).
\end{align}
The integral over \(\psi\) and \(\chi\) over the volumes of the gauge groups yields the Ray--Singer torsion of the manifold \cite{Blau:1989bq,Gegenberg:1993gd},\footnote{for a modern exposition see also \cite{blauMassiveRaySingerTorsion2022}} and so we simply get
\begin{equation}\label{eq:BFPI2}
	\parti_\BF[X] = \qty(\int\DD{\phi}\DD{\theta}\ \exp(\frac{\ii \bbK}{2\pi}\int_X \phi\w\dd_\R {\theta}))\ \mathrm{T}_\t{RS}[X]^{(-1)^{p-1}},
\end{equation}
where \(\mathrm{T}_\t{RS}[X]\) is the Ray--Singer torsion:
\begin{equation}
	\mathrm{T}_\t{RS}[X] \coloneqq \prod_{k=0}^{d} \qty(\frac{\qty(\detp\lapl_k)^{k}}{\det\bbG_k})^{\Half(-1)^{k+1}},
\end{equation}
with \(\lapl_k\) being the Laplacian on the space of \(k\)-forms on \(X\), and \(\bbG_k\) the metric in the space of harmonic \(k\)-forms, defined as follows. Let $\set{\tau^{(k)}_\sfi}_{\sfi=1}^{\b_k(\Sigma)}$ be the topological basis of harmonic $k$-forms, with \({\b_k(\Sigma)}\) the $k^\t{th}$ Betti number of $\Sigma$. This basis is defined such that given a basis of \(k\)-cycles \(\set{\eta^\sfi_{(k)}\in\H_k(\Sigma)}_{\sfi=1}^{\b_k(\Sigma)}\), there is a unique harmonic representative, $\tau_\sfi^{(k)}$, of each cohomology class in \(\H^k(\Sigma)\), such that
\begin{equation}\label{eq:cyclecocycleortho}
	\int_{\eta^\sfi_{(k)}} \tau_\sfj^{(k)} = \delta^\sfi_{\sfj}.
\end{equation}
It is in terms of this basis that the matrices \(\bbG_k\) above are defined. Explicitly:
\begin{equation}
	\qty[\bbG_k]_{\sfi\sfj} \coloneqq \int_\Sigma \tau_\sfi^{(k)} \w\star \tau_\sfj^{(k)},
\end{equation}
where \(\star\) is the Hodge-star on \(\Sigma\). Before moving on let us make a quick digression to mention that the inverse of \(\bbG_k\) is the linking matrix
\begin{equation}
	[\bbG_k]_{\sfi\sfj} [\bbL_k]^{\sfj\sfk} = \delta^\sfk_\sfi,
\end{equation}
which can be alternatively defined as an oriented intersection number in the following way. Let us pick a basis of $k$-cycles $\set{\eta^\sfi}_{\sfi=1}^{{\b_k(\Sigma)}}$ of $\H_k(\Sigma)$ and a basis of $(d-k-1)$-cycles $\set{\sigma^\sfi}_{\sfi=1}^{{\b_k(\Sigma)}}$ of $\H_{d-k-1}(\Sigma)$.  The transversal intersection of $\eta^\sfj$ and $\sigma^\sfi$ in $\Sigma$ is a zero-dimensional manifold (that is, a collection of points) and $[\bbL_k]^{\sfi\sfj}$ counts the number of points signed by their orientation:
\begin{equation}
	[\bbL_k]^{\sfi\sfj}\equiv\bbL_k\qty(\eta^\sfi,\sigma^\sfj)\coloneqq\int_{\eta^\sfi\cap\sigma^\sfj}1.
\end{equation}
Let us focus on the remaining path-integral in \Cref{eq:BFPI2}, which is the quantum mechanics for the large-gauge degrees of freedom, \(\phi\) and \(\theta\). We can expand \(\phi\) and \(\theta\) in terms of the basis $\set{\tau_\sfi}_{\sfi=1}^{{\b_p(\Sigma)}}$. Namely
\begin{equation}
	\phi(t,x) = \phi^\sfi(t)\, \tau_\sfi \qq{and} \theta(t,x) = \theta^\sfj(t) \star \tau_\sfj.
\end{equation}
Note that \Cref{eq:cyclecocycleortho} with \Cref{eq:abhodge} implies
\begin{equation}
	\theta^\sfi=\int_{\eta^\sfi}a\qquad\qquad\phi^\sfi=\int_{\sigma^\sfi}b,
\end{equation}
in terms of our original field variables.
Since \(\phi^\sfi(t)\) and \(\theta^\sfj(t)\) are circle-valued functions on \(\R\) they are identified with
\begin{equation}
	\phi^\sfi(t) \sim \phi^\sfi(t) + 2\pi \qq{and} \theta^\sfj(t)\sim \theta^\sfj(t) + 2\pi.
\end{equation}
All in all the action reduces to
\begin{equation}
	S_\BF^\t{eff}[\phi,\theta] = \frac{\bbK}{2\pi} \qty[\bbG_p]_{\sfi\sfj} \int_{\R} \phi^\sfi \w \dd_\R \theta^\sfj.
\end{equation}
This is a simple quantum mechanical system whose symplectic form reads (reinstating the \(\sfI,\sfJ\) indices)
\begin{equation}\label{eq:sympphitheta}
	{\boldsymbol\Omega}_\Sigma = \frac{(-1)^{d-p-1}}{2\pi} \bbK^{\sfI\sfJ} [\bbG_p]_{\sfi\sfj} \var{\phi^\sfi_\sfI} \w\var{\theta^\sfj_\sfJ}.
\end{equation}
which is the restriction of \Cref{eq:sympBA} to $\phi^\sfi_\sfI$ and $\theta^\sfj_\sfJ$.

Given the our interpretation of $A=\cpos$ coming from the symplectic potential, \Cref{eq:symppot}, we will identify \(\theta^\sfj_\sfJ\) as \textquote{positions} and \(\phi^\sfi_\sfI\) as \textquote{momenta}.\footnote{More correctly handling the index placement, the momenta are \(\cmom_\sfj^\sfJ=(-1)^{d-p-1}\frac{\bbK^{\sfI\sfJ}}{2\pi}[\bbG_p]_{\sfi\sfj}\phi^\sfi_\sfI\).}  Passing from Poisson brackets to commutators, promoting \(\phi\) and \(\theta\) to operators, we arrive at
\begin{equation}
	\comm{\hat{\phi}^\sfi_\sfI}{\hat{\theta}^\sfj_\sfJ} = 2\pi\ii\; (-1)^{d-p-1}\qty[\bbK^{\perp}]_{\sfI\sfJ} [\bbL_p]^{\sfi\sfj},    \label{eq:ccr}
\end{equation}
where \(\bbK^{\perp}\coloneqq \inv{\qty(\trans{\bbK})}\) is the inverse transpose. Hereafter we will drop the index $p$ and the square brackets from $\bbL_p$, and $\bbG_p$ for conciseness.

Since these operators are \(\U(1)\)-valued, we should exponentiate them to construct gauge-invariant Wilson surface operators:
\begin{align}
	\hW_{\eta^\sfj}^{\vec w_\sfj}   & \coloneqq \exp(w^\sfJ_\sfj\ \hat{\theta}^\sfj_\sfJ) = \exp(\int_{\eta^\sfj} w_\sfj^\sfJ\ a_\sfJ)   \\
	\hV_{\sigma^\sfi}^{\vec v_\sfi} & \coloneqq \exp(v^\sfI_\sfi \ \hat{\phi}^\sfi_\sfI) = \exp(\int_{\sigma^\sfi} v^\sfI_\sfi\ b_\sfI),
\end{align}
where $\set{w_\sfj^\sfJ}_{\sfj\in\set{1,\ldots,{\b_p(\Sigma)}}}^{\sfJ\in\set{1,\ldots,\kappa}}$ and $\set{v^\sfI_\sfi\vphantom{(q_\t A)_\sfj^\sfJ}}_{\sfi\in\set{1,\ldots,{\b_p(\Sigma)}}}^{\sfI\in\set{1,\ldots,\kappa}}$ are ${\b_p(\Sigma)}\times\kappa$ collections of integers.  These surface operators are defined with respect to fixed bases of homology $p$- and $(d-p-1)$-cycles, $\set{\eta^\sfj}_{\sfj=1}^{\b_p(\Sigma)}$ and $\set{\sigma^\sfi}_{\sfi=1}^{\b_p(\Sigma)}$, respectively; however it is easy to verify that they are homotopy invariants when acting on gauge-invariant states due to the Gauss-law constraints, \Cref{eq:GaussLaw}, and thus well defined on homology classes.

\subsection{The algebra of Wilson surface operators}

The Wilson surface operators constructed above satisfy a \textquote{clock algebra} which can be easily found using the canonical commutation relations \Cref{eq:ccr}:
\begin{equation}\label{eq:clockalgcomponents}
	\hV_{\sigma^\sfi}^{\vec v_i}\ \hW_{\eta^\sfj}^{\vec w_j} = \ex{2\pi\ii\; (-1)^{d-p-1}\;w_\sfj^\sfJ v_\sfi^\sfI(\bbK^\perp)_{\sfI\sfJ}\bbL^{\sfi\sfj}}\ \hat{\W}_{\eta^\sfj}^{\vec w_\sfj}\ \hV_{\sigma^\sfi}^{\vec v_\sfj}.
\end{equation}
The $\hW$'s commute amongst each other as do the $\hV$'s. From the above algebra we can clearly see that \(\vec w_\sfj\) and \(\vec v_\sfi\) give the same algebra as \(\vec w_\sfj + m\cdot \bbK\) and \(\vec v_\sfi + \bbK\cdot m'\), for arbitrary \(m,m'\in\Z^\kappa\). The entire algebra is, then, generated by operators labelled by charges in the lattices
\begin{equation}
	\vec w_\sfj\in \lattice_{A} \coloneqq \Z^\kappa\big/\im\trans{\bbK} \qq{and} \vec v_\sfi\in \lattice_{B} \coloneqq \Z^\kappa\big/\im\bbK.
\end{equation}
It will be notationally useful to collect $w^\sfJ_\sfj$ and $v^\sfI_\sfi$ as the components of a ${\b_p(\Sigma)}\times\kappa$-dimensional integer vectors denoted as $\fw$ and $\fv$, respectively.  Also for notational convenience, we will define an inner product on these vector spaces as
\begin{equation}
	\Gamma\qty(\fv,\fw)\coloneqq2\pi(-1)^{d-p-1}\fv\cdot\left(\bbK^\perp\otimes\bbL\right)\cdot \fw=2\pi(-1)^{d-p-1}w^\sfJ_\sfj\,v^\sfI_\sfi(\bbK^\perp)_{\sfI\sfJ}\bbL^{\sfi\sfj}.
\end{equation}
Then \Cref{eq:clockalgcomponents} can be written succinctly as
\begin{equation}\label{eq:WValg}
	\hV^{\fv}\hW^{\fw}=\ex{\ii\Gamma\qty(\fv,\fw)}\hW^{\fw}\hV^{\fv},
\end{equation}
with
\begin{equation}
	\hV^{\fv}=\prod_{i=1}^{\b_p(\Sigma)}\hV^{\vec v_i}_{\sigma^i},\qquad \hW^{\fw}=\prod_{i=1}^{\b_p(\Sigma)}\hW^{\vec w_j}_{\eta^j}.
\end{equation}
In this notation, the $\hW$'s and $\hV$'s satisfy an Abelian fusion algebra
\begin{equation}\label{eq:WVfusionalg}
	\hW^{\fw}\hW^{\fw'}=\hW^{\fw+\fw'} \qq{and} \hV^{\fv}\hV^{\fv'}=\hV^{\fv+\fv'},
\end{equation}
where it is understood that the sums are taken in the respective lattices, $(\lattice_{A})^{\b_p(\Sigma)}$ and $(\lattice_{B})^{\b_p(\Sigma)}$.

\subsubsection*{Constructing states}

To construct the states on \(\cH_\Sigma\) we pick a maximal set of commuting operators and use the space of their eigenvectors. For that we can use either \(\set{\hat{\phi}^\sfj}_{\sfj=1}^{{\b_p(\Sigma)}}\) or \(\set{\hat{\theta}^\sfi}_{\sfi=1}^{{\b_p(\Sigma)}}\). We will first use the basis given by \(\set{\hat{\theta}^\sfi}\) eigenvectors which is morally consistent with fixing $a$ as a boundary condition. To construct the states systematically, we will first define a fiducial state \(\ket{0}\) annihilated by all \(\set{\hat\theta^\sfi}\). This is an eigenstate of \(\hat{\W}^{\vec w_\sfj}_{\eta^\sfj}\) with all eigenvalues one:
\begin{equation}
	\hW^{\fw}\ket{0}=\ket{0}.
\end{equation}
We will call this state the \textquote{$p$-surface operator condensate,} or \textquote{the condensate} when the context is clear. We then use \(\hV^{\vec v_\sfi}_{\sigma^\sfi}\) as raising operators.  A general ground state will then be given by
\begin{equation}
	\ket{\fv}\coloneqq \hV^{\fv}\ket{0}=\prod_{\sfi=1}^{\b_p(\Sigma)}\hV_{\sigma^\sfi}^{\vec v_\sfi}\ket{0}.
\end{equation}
for any integer vector, $\fv\in(\lattice_B)^{\b_p(\Sigma)}$.  These are indeed eigenstates of $\hW^{\fw}$ with eigenvalue
\begin{equation}
	\hW^{\fw}\ket{\fv}=\ex{\ii\Gamma\qty(\fv,\fw)}\ket{\fv}.
\end{equation}
Additionally, since
\begin{equation}
	\braket{0}{\fv}=\mel{0}{\hW^{\fw}}{\fv}=\ex{\ii\Gamma\qty(\fv,\fw)}\braket{0}{\fv},
\end{equation}
for any $\fw\in(\lattice_A)^{\b_p(\Sigma)}$ and since $\bbK$ and $\bbL$ (and thus $\Gamma$) are non-degenerate,
\begin{align}\label{eq:latticedelta}
	\braket{0}{\fv}={\boldsymbol\delta}_{\fv}\coloneqq \prod_{\sfi=1}^{{\b_p(\Sigma)}}\delta_{\vec v_\sfi}^{(\lattice_B)}, \qq{with} \delta_{\vec v}^{(\lattice_B)}=\begin{cases}1 & \vec v=0\mod\im\bbK \\
             0 & \text{otherwise}\end{cases}.
\end{align}

This fact, coupled with $(\hV^{\fv})^\dagger=\hV^{-\fv}$, and the fusion algebra, \Cref{eq:WVfusionalg}, implies the full-orthonormality of $\cH_\Sigma=\spn\set{\ket{\fv}}$.  The dimension of the Hilbert space is
\begin{equation}
	\dim\cH_\Sigma = \dim_\Z \qty(\lattice_{B})^{{\b_p(\Sigma)}} =\dim_\Z(\Lambda_A)^{\b_p(\Sigma)} = \abs{\det\bbK}^{{\b_p(\Sigma)}}. \label{eq:dimH}
\end{equation}
We note that the quantization can be repeated in a wholly similar procedure by using $\hW^{\fw}$ as ladder operators (this builds a Hilbert space of eigenvectors of $\hV^{\fv}$) to arrive a Hilbert space of the same dimension. We will call the isomorphism of the Hilbert spaces built on $p$- and $(d-p-1)$-surface operator condensates, \emph{electric-magnetic duality} in this context.\footnote{This duality is simply a statement that the Hilbert space built on the $p$-surface operator condensate is of equal dimension to the Hilbert space built on the $(d-p-1)$-surface operator condensate. They are automatically isomorphic. This is a simple consequence of Hodge duality on $\Sigma$.} Below we will describe how this duality can be refined to a notion of \emph{subregion electric-magnetic duality}.

\section{Subregion algebras and essential topological entanglement}\label{sect:OAent}

We now move to the main act of this paper: how to associate subregion entanglement entropy to this theory after we have \textquote{integrated out} all of the local degrees of freedom.  We will do so in the algebraic approach.  We briefy remind the reader of the broad features of this approach.

Starting with a region, $\aent\subset\Sigma$, one associates a subalgebra, $\fA[\aent]\subset\fA[\Sigma]$, of the operators which act naturally on $\aent$.  The commutant of $\fA[\aent]$ is then associated to the complement of $\aent$: $\fA[\coaent]=\co{\left(\fA[\aent]\right)}$.  Given a state,\footnote{We will generally call density matrices \textquote{states} regardless of their purity.} $\rho$, one can reduce it to $\fA[\aent]$: i.e. $\rho_\aent$ is the unique Hermitian and trace-normalized element of the subregion algebra, $\fA[\aent]$, reproducing the expectation values of all $\mc O_\aent\in\fA[\aent]$. The von Neumann entropy of this reduced density matrix then provides an algebraic definition of the entanglement entropy of $\rho$ reduced to $\aent$:
\begin{equation}\label{eq:algentdef}
	\alge{\fA[\aent]}[\rho]=\vne[\rho_\aent]\coloneqq-\Tr(\rho_\aent\log\rho_\aent).
\end{equation}
This situation is complicated in theories with gauge invariance. The non-local manner in which gauge constraints are applied to states manifests itself in a non-trivial center in the subregion algebra: $\fZ[\aent]=\fA[\aent]\cap\fA[\coaent]$. Operators generating $\fZ[\aent]$ can be simultaneously diagonalized.  The state, $\rho$, and subsequenty, the reduced state, $\rho_\aent$, can be decomposed with respect to the eigenspaces of the operators generating $\fZ[\aent]$:
\begin{equation}
	\rho_\aent=\bigoplus_\alpha\lambda_{(\alpha)}\rho_\aent^{(\alpha)},
\end{equation}
where $\alpha$ labels the eigenspaces. The $\rho_\aent^{(\alpha)}$ can be individually trace-normalized and so $\sum_\alpha\lambda_{(\alpha)}=1$. This leads to a refinement of \Cref{eq:algentdef} where the algebraic entanglement entropy naturally splits into the weighted sum of von Neumann entropies of reduced density matrix projected to fixed eigenspaces plus the Shannon entropy of the probability distribution given by $\set{\lambda_{(\alpha)}}$:
\begin{equation}\label{eq:algentref}
	\alge{\fA[\aent]}[\rho]=\sum_{\alpha}\lambda_{\left(\alpha\right)}\vne\left[\rho_\aent^{\left(\alpha\right)}\right]-\sum_{\alpha}\lambda_{\left(\alpha\right)}\log\lambda_{\left(\alpha\right)}.
\end{equation}
We will make these broad features explicit in what follows and show that the Shannon contribution takes a universal, topological, form. Before doing so we will first need to define a notion of a subregion algebra, $\fA[\aent]$. Given the topological nature of the operators in $\fA[\Sigma]$, we will take care to define it in a manifestly topological manner below.\footnote{As a benefit to these definitions applied to $\fA[\Sigma]$ and its subalgebras: these are all Type I von Neumann algebras acting on finite dimensional spaces.  As such there is no subtlety in defining traces, reduced density matrices, and von Neumann entropies.}

\subsection{Topological subregion algebras}\label{sect:topsubalgs}

We begin by regarding \(\Sigma\) as the union of two, otherwise disjoint, submanifolds \(\Sigma = \aent\sqcup_{\pa\aent} \co{\aent}\) sharing a common boundary, $\pa\aent$.  We will assign an operator algebra, $\fA[{\aent}]$, to \(\aent\) and $\fA[{\co\aent}]$ is then defined as the commutant of $\fA[\aent]$.  There is some ambiguity in this assignment; in what follows we will assign this in a \textquote{natural} way.  Since our operators in this theory are only defined up to homotopy, however, there may be multiple \textquote{natural} ways to associate an algebra to $\aent$.  Different choices of subregion algebra may result in different centers and different definitions of the entanglement entropy.

Let us introduce the following notations. Suppose that \(M\) is a submanifold of \(\Sigma\) and let \(i^M: M\hookrightarrow \Sigma\) be the embedding map. We can use \(i^M\) to push-forward homology groups: \(i^M_\blob:\H_\blob(M)\to\H_\blob(\Sigma)\).\footnote{For notational convenience we will avoid indexing the push-forward with an asterisk or a hash, as is common in the mathematical literature and we will index it solely by the rank of the homology groups it is connecting.} In what follows, given \(\alpha\in\H_k(\Sigma)\) and a map as above, we will denote \(\alpha\hin M\), iff \(\alpha\in\im i^M_k\). In words, $\alpha\hin M$ says that $\alpha$ is continuously deformable within $\Sigma$ to a cycle completely contained in $M$.  Similarly, we denote \(\alpha\nothin M\), for an \(\alpha\in \H_k(\Sigma)\), iff \(\alpha\in\coker i^M_k\). In words, \(\alpha\nothin M\) is not continuously deformable within \(\Sigma\) to a cycle completely contained in \(M\). We will alternate between the \(\im/\coker\) and \(\hin/\nothin\) notation freely. To state the results of the following sections up-front, there are two natural algebras associated to $R$:
\begin{description}
	\item[1. Topological magnetic algebra]
		\begin{align}
			\fA_\t{mag}[\aent] & \coloneqq \fU\set{\hat{\W}_{\eta^\sfi}^{\vec w_\sfi},\hV_{\sigma^\sfj}^{\vec v_\sfj} \suchthat \eta^\sfi\in\im i^\aent_p,\ \sigma^\sfj\in\im i^\aent_{d-p-1}} \\
			                   & \equiv \fU\set{\hat{\W}_{\eta^\sfi}^{\vec w_\sfi},\hV_{\sigma^\sfj}^{\vec v_\sfj} \suchthat \eta^\sfi,\sigma^\sfj\hin \aent},
		\end{align}
		where \(\fU\set{\cdot}\) denotes the universal enveloping algebra. This algebra consists of all surface operators deformable to being completely contained in $\aent$.
	\item[2. Topological electric algebra]
		\begin{align}
			\fA_\t{elec}[R] & \coloneqq \fU\set{\hat{\W}_{\eta^\sfi}^{\vec w_\sfi},\hV_{\sigma^\sfj}^{\vec v_\sfj} \suchthat \eta^\sfi\in\coker i^{\coaent}_p,\ \sigma^\sfj\in\coker i^{\coaent}_{d-p-1}} \\
			                & \equiv \fU\set{\hat{\W}_{\eta^\sfi}^{\vec w_\sfi},\hV_{\sigma^\sfj}^{\vec v_\sfj} \suchthat \eta^\sfi,\sigma^\sfj\nothin \coaent}.
		\end{align}
		This algebra consists of all surface operators that are not deformable to being completely contained in $\co\aent$.
\end{description}
As we will soon explain both of these algebras have non-trivial centers, $\fZ[\aent]$, which we name the \emph{topological magnectic center} and \emph{topological electric center}, respectively. We can seek centerless operator algebras by either systematically removing operators in $\fZ[\aent]$ from $\fA[\aent]$ or by systematically adding operators to $\fA[\aent]$ that do not commute with operators in $\fZ[\aent]$. We do so in \cref{app:centerless}. There we show that it results in two centerless algebras, that have more tenuous relationships to their underlying subregion, while additionally ground states have trivial entanglement with respect to these algebras.

These different choices of subregion algebras are illustrated in  \Cref{fig:3t}.

\begin{figure}[!th]
	\centering
	\def\svgwidth{0.6\textwidth}\import*{figures}{3t-mag.pdf_tex}
	\def\svgwidth{0.6\textwidth}\import*{figures}{3t-el.pdf_tex}
	\caption{The algebra on a 3-torus, generated by a basis of operators along longitude and meridian cycles, $\set{\ell_i,m_i}_{i=1,2,3}$.  To the region, $\aent$, depicted in pink, we associate an algebra $\fA[\aent]$ generated by cycles depicted in \textcolor{rred}{red} and \textcolor{ppurple}{purple}.  The commutant, $\fA[\coaent]$, is generated by cycles depicted in \textcolor{bblue}{blue} and \textcolor{ppurple}{purple}.  The center, $\fZ[\aent]$, is generated by cycles depicted in \textcolor{ppurple}{purple}. Cycles generated neither in $\fA[\aent]$ nor $\fA[\coaent]$ are depicted in \textbf{black}. (Top) The topological magnetic algebra, $\fA_\t{mag}[\aent]$.  (Middle) The topological electric algebra, $\fA_\t{elec}[\aent]$.
    }
	\label{fig:3t}
\end{figure}

\subsubsection{The topological magnetic algebra}

Let us begin the discussion with the topological magnetic algebra
\begin{equation}
	\fA_\t{mag}[\aent] \coloneqq \fU\set{\hat{\W}_{\eta^\sfi}^{\vec w_\sfi},\hV_{\sigma^\sfj}^{\vec v_\sfj} \suchthat \eta^\sfi,\sigma^\sfj\hin \aent},
\end{equation}
The algebra associated with \(\co{\aent}\) is the commutant of \(\fA_\t{mag}[\aent]\). It is clear that the operators that can commute with \(\fA_\t{mag}[\aent]\) are precisely those that cannot link homology cycles in \(\aent\).  This is equivalent to the following.
\begin{equation}\label{eq:claimmag}
	\co{\qty(\fA_\t{mag}[\aent])} = \fU\set{\hat{\W}_{\eta^\sfi}^{\vec w_\sfi},\hV_{\sigma^\sfj}^{\vec v_\sfj} \suchthat  \eta^{\sfi},\sigma^{\sfj}\hin \coaent}\equiv\fA_\t{mag}\qty[\co\aent].
\end{equation}
The proof of this claim is given in \Cref{app:zm}.

As alluded to above, this algebra has a center, $\fZ_\t{mag}$, which is generated by surface operators lying within the entangling surface, $\pa\aent$, itself:
\begin{equation}
	\fZ_\t{mag}[\aent]\coloneqq \fA_\t{mag}[\aent] \cap \fA_\t{mag}[{\co{\aent}}] = \fU\set{\hat{\W}_{\eta^\sfi}^{\vec w_\sfi},\hV_{\sigma^\sfj}^{\vec v_\sfj} \suchthat \eta^\sfi,\sigma^\sfj \hin \pd \aent}.
\end{equation}
The heuristic argument is simple: $\fZ_\t{mag}[\aent]\subset\fA_\t{mag}[\aent]$ since any cycle, $\eta^\sfi\hin\pa\aent$ or $\sigma^\sfj\hin\pa\aent$, can be deformed \textquote{slightly inward} along the flow of an inward-pointing normal vector to be contained completely in $\aent$.  Similarly by flowing in the other direction, \textquote{slightly outward,} any cycle can be contained completely in $\coaent$ and so $\fZ_\t{mag}[\aent]\subset\fA_\t{mag}[\coaent]$. Perhaps one might question if operators in $\fZ_\t{mag}$ actually commute with themselves (as required for $\fZ_\t{mag}$ to be a center).  A potential puzzle arises because $p$-cycles and $(d-p-1)$-cycles have no notion of intersection numbers as defined intrinsically on $\pa\aent$: the intersection of a $p$-cycle and a $(d-p-1)$-cycle on a $d-2$ dimension manifold is not a collection of points, but instead itself a $1$-dimensional manifold.  The key here is the algebra $\fZ_\t{mag}$ is not defined intrinsically on $\pa\aent$ but instead up to homotopy in $\Sigma\supset\pa R$.  It is then clear that all $(d-p-1)$-cycles, $\sigma^\sfj\hin\pa\aent$, can be deformed to have zero linking number with $\eta^\sfi\hin\pa\aent$ by evolving them slightly along an outward pointing normal vector.

We call $\fZ_\t{mag}$ the \emph{topological magnetic center} because of its similarity to the magnetic center of 2+1 dimensional lattice gauge theories \cite{Casini:2013rba,Delcamp:2016eya}, generated by line and/or ribbon operators wrapping $\pa R$.  Here the interplay of the dimensionality, $d$, with the degrees of the gauge fields, $p$ and $d-p-1$, allow for a richer flavour of magnetic center, generated by topological operators of different dimension. Additionally, the topological magnetic center defined here is sensitive to the bulk topology while the magnetic center appearing in \cite{Casini:2013rba} is only sensitive to the intrinsic topology of $\pa\aent$.  Namely, an operator can only appear in $\fZ_\t{mag}$ if its defining cycle is also non-trivial as a cycle on $\Sigma$.

In the interest of counting how many basis operators generate $\fZ_\t{mag}$, it will be useful to formalize the above as follows. The dimension of the magnetic center will be the sum of the number of $p$-cycle surface operators, $\hW^{\vec w}_\eta$, and $(d-p-1)$-cycle surface operators, $\hV^{\vec v}_\sigma$, spanning $\fZ_\t{mag}$:
\begin{equation}
	\abs{\fZ_\t{mag}}=\abs{\det\bbK}^{\left(\h_\t{mag}^p+\h_\t{mag}^{d-p-1}\right)},
\end{equation}
where $\h_\t{mag}^k$ is defined in the following way.  A surface operator, \(\hat{\W}_{\eta}^{\vec{w}}\), in the magnetic center must be supported on a \(p\)-cycle that lies in the intersection of the images of the push-forward maps \(i^\aent_p\) and \(i^{\coaent}_p\). Using the push-out square
\[\begin{tikzcd}[ampersand replacement=\&]
		\pd\aent \& \aent \\
		\coaent \& \Sigma
		\arrow["j^{\pd\aent}", from=1-1, to=1-2]
		\arrow["i^\aent", from=1-2, to=2-2]
		\arrow["{i^{\coaent}}"', from=2-1, to=2-2]
		\arrow["{j^{\pd\coaent}}"', from=1-1, to=2-1]
	\end{tikzcd}\]
we see that the corresponding $p$-cycle then has to lie in the image of push-forward map \(\qty(i^\aent\circ j^{\pd\aent})_p\cong \qty(i^{\coaent}\circ j^{\pd\coaent})_p\). Utilizing the associated long-exact sequence we prove in \Cref{app:zm}
\begin{equation}\label{eq:hpmag}
	\h_\t{mag}^p=\sum_{n=0}^{p-1}(-1)^{p-1-n}\b_n(\pa\aent)+\sum_{n=0}^{p}(-1)^{p-n}\left(\b_n(\Sigma)-\dim \H_n(\Sigma,\pd\aent)\right),
\end{equation}
and similarly for $\h_\t{mag}^{d-p-1}$ via the replacement $p\rightarrow (d-p-1)$.  We remind the reader that $\b_n(\cdot)$ is the $n^\t{th}$ Betti number.  Let us point out two broad features of \Cref{eq:hpmag}.  Firstly we have the alternating sum of Betti numbers intrinsic to $\pd\aent$; as we will show later this will give contributions to the entropy analogous to those found in \cite{Grover:2011fa}.  Secondly, however, we find an interesting dependence on bulk topology relative to how $\pd\aent$ is embedded in $\Sigma$.  Although perhaps initially surprising, we can easily argue why we expect this dependence on the bulk topology to show up: $\fA_\t{mag}$ is defined with respect to homotopy equivalence within $\Sigma$. If $\h^p_\t{mag}$ only detected intrinsic topology of $\pa\aent$ it could easily\footnote{Easy examples are cooked up when $\Sigma$ is topologically trivial, e.g. a $(d-1)$-sphere, $\S^{d-1}$.} count more operators in $\fZ_\t{mag}[\aent]$ than actually exist in $\fA_\t{mag}[\aent]$, or even in $\fA[\Sigma]$! These additional bulk terms are then crucial for ensuring that this counting makes sense.

Summing $\h_\t{mag}^p$ and $\h_\t{mag}^{d-p-1}$ the total dimension of $\fZ_\t{mag}$ can be simplified utilizing the long exact sequence (see \Cref{app:zm}) to
\begin{align}\label{eq:absZmag}
	\log\abs{\fZ_\t{mag}}= & \Bigg[2\sum_{n=0}^{p-1}(-1)^{p-1-n}\b_n(\pa\aent)+\left(\b_p(\Sigma)-\dim\H_p(\Sigma,\pa\aent)\right)\nonumber      \\
	                       & \qquad\qquad+(-1)^{d-p-1}\big(\dim\H_{d-1}(\Sigma,\pa\aent)-\dim\H_0(\Sigma,\pa\aent)\big)\Bigg]\log\abs{\det\bbK}.
\end{align}
Above, all of the bulk dependence has been isolated to dimensions of $p^\t{th}$ absolute and relative homologies, plus the additional, $p$-independent term: a potential mismatch between the bottom and top relative homologies.

\subsubsection{The topological electric algebra}
In contrast with the topological magnetic algebra, whose center is generated by operators \textquote{wrapping} the entangling surface, we will pick the topological electric algebra such to be such that its center is generated by operators \textquote{piercing} the entangling surface.  Specifically, for a region, $\aent$, we define
\begin{equation}
	\fA_\t{elec}[\aent] \coloneqq \fU\set{\hat{\W}_{\eta^\sfi}^{\vec w_\sfi},\hV_{\sigma^\sfj}^{\vec v^\sfj} \suchthat \eta^\sfi,\sigma^\sfj\nothin \coaent}.
\end{equation}
In words, $\fA_\t{elec}[\aent]$ is generated by operators that cannot be deformed to being contained completely in $\coaent$.  The algebra associated to $\coaent$, the commutant of $\fA_\t{elec}[\aent]$, is generated by all operators that do not link with any cycle that cannot be deformed to be contained in $\coaent$.  We claim that this is, in fact, generated by operators that cannot be deformed to be contained in $\aent$:
\begin{equation}\label{eq:claimelec}
	\co{\left(\fA_\t{elec}[\aent]\right)}=\fU\set{\hat{\W}_{\eta^\sfi}^{\vec w_\sfi},\hV_{\sigma^\sfj}^{\vec v^\sfj}\suchthat \eta^\sfi,\sigma^\sfj\nothin\aent}=\fA_\t{elec}[\coaent].
\end{equation}
The proof of this claim is given in \Cref{app:zm}.

In this case, the center is then given by cycles of \(\Sigma\), that cannot be deformed to be contained completely in $\aent$ nor $\coaent$:
\begin{align}
	\fZ_\t{elec}[\aent] & =\fU\set{\hW^{\vec w_\sfi}_{\eta^\sfi},\hV^{\vec v_\sfj}_{\sigma^\sfj}\suchthat \eta^\sfi,\sigma^\sfj\nothin \aent\;\text{and}\;\nothin\coaent} \nn
	                    & =\fU\set{\hW^{\vec w_\sfi}_{\eta^i},\hV^{\vec v_\sfj}_{\sigma^\sfj}\suchthat \eta^\sfi\in\coker i_p^{\aent}\cap\coker i_p^{\coaent},\;\sigma^\sfj\in\coker i_{d-p-1}^{\aent}\cap\coker i_{d-p-1}^{\coaent}}.
\end{align}
$\fZ_\t{elec}[\aent]$ are topological surface operators that \emph{must} cross $\pa\aent$ non-trivially.  We name this center the \emph{topological electric center} on account of its similarity to the electric center of lattice gauge theories generated by link operators emanating transversely from the entangling surface \cite{Casini:2013rba}.  However, let us caution that this is a somewhat shallow comparison: the electric center typically discussed in lattice gauge theories is microscopic, being given by operators acting on all links intersecting $\pa\aent$ in a UV lattice realization of a topological phase.\footnote{As emphasized in \cite{Radicevic:2015sza,Lin:2018bud}, the electric center of lattice gauge theories shares many features with extending the Hilbert space with edge-mode degrees of freedom.  We discuss the extended Hilbert space of BF theory in a follow-up paper \cite{Fliss:2023uiv}.}  Our topological electric center is an extreme course-graining of this, generated by a handful of topological surface operators that are only defined up to homotopy.

With respect to counting the number of surface operators generating $\fZ_\t{elec}$,
\begin{equation}
	\abs{\fZ_\t{elec}}=\abs{\det\bbK}^{\left(\h^p_\t{elec}+\h^{d-p-1}_\t{elec}\right)},
\end{equation}
we show in appendix \Cref{app:zm} that
\begin{equation}
	\h^p_\t{elec}=\h^{d-p-1}_\t{mag},\qquad\qquad \h^{d-p-1}_\t{elec}=\h^p_\t{mag}.
\end{equation}
The heuristic argument for this follows: a $p$-cycle surface operator in $\fZ_\t{elec}$ by definition can't be deformable to either the interiors of $\aent$ or $\co\aent$ and so must wrap a basis $(d-p-1)$-cycle intrinsic to $\pa\aent$.  This cycle is precisely where one would put a $(d-p-1)$-cycle surface operator lying in $\fZ_\t{mag}$.  Consequently
\begin{equation}
	\abs{\fZ_\t{elec}}=\abs{\fZ_\t{mag}}.
\end{equation}

\subsection*{Subregion electric-magnetic duality}

The equivalence of the counting $\abs{\fZ_\t{elec}}=\abs{\fZ_\t{mag}}$ is a particular instance of a refinement of the electric-magnetic duality described in \Cref{sect:canonquant} applied to the region $\aent$ and its operator algebras.  More specifically, in \Cref{app:zm} we prove that the intersection pairing, $\bbL$, induces a one-to-one correspondence between
\begin{equation}
	\coker i_p^{\coaent}\leftrightarrow \im i_{d-p-1}^{\aent}\qq{as well as}\coker i_{d-p-1}^{\coaent}\leftrightarrow \im i_p^{\aent}.
\end{equation}
This then implies a one-to-one correspondence between operators generating $\fA_\t{elec}[\aent]$ and $\fA_\t{mag}[\aent]$.  We refer to this correspondence as \emph{subregion electric-magnetic duality.}

\subsection{Decomposing the Hilbert space}\label{sect:decompHS}

The existence of a center prohibits the tensor factorization of the global Hilbert space, $\mc H_\Sigma$, into Hilbert spaces corresponding to $\aent$ and $\co\aent$ in the following way. We will illustrate this first using the topological magnetic algebra. Currently we are organizing $\mc H_\Sigma$ by the eigenvectors of $\hW^{\fw}$, $\ket{\fv}$, which are created by acting $\hV^{\fv}$ on the condensate.  Given this, we ask: \textquote{Can we partition $\ket{\fv}$ into the eigenvalues of $\hW^{\fw}\in\fA_\t{mag}[\aent]$ and the eigenvalues of $\hW^{\fw}\in\fA_\t{mag}[{\co\aent}]$?}
\begin{equation}
	\ket{\fv}\overset{?}{=}\ket{\set{\fv^\aent},\set{\fv^{\co\aent}}}.
\end{equation}
One obvious obstruction to the above is the possible existence of $\hW^{\fw}\in\fZ_\t{mag}[\aent]$\linebreak$=\fA_\t{mag}[\aent]\cap\fA_\t{mag}[{\co\aent}]$ whose eigenvalues are overcounted in the above partition.  A more subtle obstruction to the above comes from $\hat V^{\fv}\in\fZ_\t{mag}$, which, being deformable to either inside $\aent$ or $\co\aent$, make it ambiguous if their action should shift the $\set{\fv^\aent}$ or the $\set{\fv^{\co\aent}}$ sets of eigenvalues.  To that end let us define the set $\set{\fv^{\inr{\aent}}}$ as the eigenvalues of \linebreak $\set{\hat{\W}_{\eta^\sfi}^{\vec w_\sfi}\suchthat \eta^\sfi\hin \aent\;\t{and}\;\eta^\sfi\nothin \coaent}$ and similarly $\set{\fv^{\inr{\coaent}}}$ the eigenvalues of $\set{\hat{\W}_{\eta^\sfi}^{\vec w_\sfi}\suchthat \eta^\sfi\hin \coaent\;\t{and}\;\eta^\sfi\nothin \aent}$.  We can label the eigenvalues of $\hW^{\fw}\in\fZ_\t{mag}[\aent]$ as $\set{\fv^{\pa\aent}_{\perp}}$.  The \textquote{perpendicular} notation here denotes that because they are measured by $\hW^{\fw}$ operators \textquote{living in $\pa\aent$} they are created by the action of $\hV^{\fv}$ operators which cross the entangling surface transversally. These $\hV^{\fv}$ operators do not belong in either $\fA_\t{mag}[\aent]$ or $\fA_\t{mag}[{\co\aent}]$ (in fact they are $(d-p-1)$-cycle surface operators generating $\fZ_\t{elec}$).  Lastly we will denote by $\set{\fv^{\pa\aent}_{\parallel}}$ the labels for states created by the action of $\hV^{\fv}\in\fZ_\t{mag}[\aent]$ on the condensate.  These states are eigenvectors of $\hW^{\fw}$ which also cross the entangling surface transversally (which belong neither in $\fA_\t{mag}[{\aent}]$ nor $\fA_\t{mag}[{\co\aent}]$).

Thus a general state can be partitioned unambiguously as
\begin{equation}\label{eq:MCstatepartition1}
	\ket{\fv}=\ket{\set{\fv^{\inr{\aent}}},\set{\fv^{\pa\aent}_{\perp},\fv^{\pa\aent}_{\parallel}},\set{\fv^{\inr{\coaent}}}},
\end{equation}
which have natural action by operators in the center.  I.e for all $\hW^{\fw}\in\fZ_\t{mag}[\aent]$:
\begin{equation}
	\hW^{\fw}\ket{\set{\fv^{\inr{\aent}}},\set{\fv^{\pa\aent}_{\perp},\fv^{\pa\aent}_{\parallel}},\set{\fv^{\inr{\coaent}}}}=\ex{\ii\Gamma\qty(\fv^{\pa\aent}_{\perp},\fw)}\ket{\set{\fv^{\inr{\aent}}},\set{\fv^{\pa\aent}_{\perp},\fv^{\pa\aent}_{\parallel}},\set{\fv^{\inr{\coaent}}}},
\end{equation}
while for all $\hat V^{\fv'}\in\fZ_\t{mag}[\aent]$,
\begin{equation}
	\hV^{\fv'}\ket{\set{\fv^{\inr{\aent}}},\set{\fv^{\pa\aent}_{\perp},\fv^{\pa\aent}_{\parallel}},\set{\fv^{\inr{\coaent}}}}=\ket{\set{\fv^{\inr{\aent}}},\set{\fv^{\pa\aent}_{\perp},\fv^{\pa\aent}_{\parallel}+\fv'},\set{\fv^{\inr{\coaent}}}}.
\end{equation}
Under this partitioning the Hilbert space decomposes as
\begin{equation}\label{eq:Hdecomp1}
	\mc H_\Sigma=\bigoplus_{\set{\fv^{\pa\aent}}}\mc H_\Sigma^{\left(\fv^{\pa\aent}\right)},
\end{equation}
where we've used a short-hand, $\set{\fv^{\pa\aent}}\coloneqq\set{\fv^{\pa\aent}_{\perp},\fv^{\pa\aent}_{\parallel}}$. Each block in this decomposition admits a tensor product on $\inr{\aent}$ and $\inr{\coaent}$:
\begin{equation}
	\mc H^{\left(\fv^{\pa\aent}\right)}_\Sigma=\mc H^{\left(\fv^{\pa\aent}\right)}_{\inr{\aent}}\otimes \mc H^{\left(\fv^{\pa\aent}\right)}_{\inr{\coaent}}.
\end{equation}
This partitioning is useful to illustrate the obstruction of a global tensor-product decomposition of $\mc H_\Sigma$, however it is not computationally useful since the action of $\hV^{\fv}\in\fZ_\t{mag}[\aent]$ moves between different blocks of \Cref{eq:Hdecomp1}.  It is more helpful to instead diagonalize their action.  Their eigenvectors are associated with $\hW^{\fw}$ operators crossing the entangling surface transversally and which $\hV^{\fv}\in\fZ_\t{mag}[\aent]$ link non-trivially: these are the $p$-cycle surface operators generating $\fZ_\t{elec}$.  Because of this we will label the eigenvectors with the set $\set{\fw^{\pa\aent}_{\perp}}$ and partition our system as
\begin{equation}\label{eq:magstatepart}
	\ket{\set{\fv^{\inr{\aent}}},\set{\fv^{\pa\aent}_{\perp},\fw^{\pa\aent}_{\perp}},\set{\fv^{\inr{\coaent}}}},
\end{equation}
such that for all $\hV^{\fv}\in\fZ_\t{mag}[\aent]$
\begin{equation}
	\hV^{\fv}\ket{\set{\fv^{\inr{\aent}}},\set{\fv^{\pa\aent}_{\perp},\fw^{\pa\aent}_{\perp}},\set{\fv^{\inr{\coaent}}}}=\ex{\ii\Gamma\qty(\fv,\fw^{\pa\aent}_{\perp})}\ket{\set{\fv^{\inr{\aent}}},\set{\fv^{\pa\aent}_{\perp},\fw^{\pa\aent}_{\perp}},\set{\fv^{\inr{\coaent}}}}.
\end{equation}
This amounts to the change of basis
\begin{equation}\ket{\set{\fv^{\inr{\aent}}},\set{\fv^{\pa\aent}_{\perp},\fv^{\pa\aent}_{\parallel}},\set{\fv^{\inr{\coaent}}}}=\sum_{\fw^{\pa\aent}_{\perp}}\ex{\ii\Gamma\qty(\fv^{\pa\aent}_{\parallel},\fw^{\pa\aent}_{\perp})}\ket{\set{\fv^{\inr{\aent}}},\set{\fv^{\pa\aent}_{\perp},\fw^{\pa\aent}_{\perp}},\set{\fv^{\inr{\coaent}}}}
\end{equation}
on the $\set{\fv^{\pa\aent}_\parallel}$ part of the state.

With respect to this partitioning, we can decompose the global Hilbert space with respect to the eigenvalues of operators spanning $\fZ_\t{mag}[\aent]$:
\begin{equation}\label{eq:Hdecomp2}
	\mc H_\Sigma=\bigoplus_{\set{\mf q^{\pa\aent}_{\perp}}}\mc H_\Sigma^{\left(\mf q^{\pa\aent}_{\perp}\right)},
\end{equation}
where we now use the short-hand, $\set{\mf q^{\pa\aent}_{\perp}}\coloneqq\set{\fv^{\pa\aent}_{\perp},\fw^{\pa\aent}_{\perp}}$.  Again, each block admits a tensor product on $\inr{\aent}$ and $\inr{\coaent}$:
\begin{equation}\label{eq:Htensorprod2}
	\mc H_\Sigma^{\left(\mf q^{\pa\aent}_{\perp}\right)}=\mc H^{\left(\mf q^{\pa\aent}_{\perp}\right)}_{\inr{\aent}}\otimes \mc H^{\left(\mf q^{\pa\aent}_{\perp}\right)}_{\inr{\coaent}}.
\end{equation}
In this set-up it is easy to see from our state partitions, \Cref{eq:magstatepart}, that all $\mc H^{\left(\mf q^{\pa\aent}_{\perp}\right)}_{\inr{\aent}}\otimes \mc H^{\left(\mf q^{\pa\aent}_{\perp}\right)}_{\inr{\coaent}}$ blocks are isomorphic.  To make the following discussion notationally cleaner will often drop the $(\fq_\perp^{\pa\aent})$ superscripts from the tensor factors $\mc H_{\inr\aent}$ and $\mc H_{\inr\coaent}$.

For the topological electric algebra, we can decompose the Hilbert space based on the eigenvalues of the operators generating the center in a similar way.  Again, the partitioning of cycles into those that can and cannot be deformed being contained in $\pa\aent$ provides a useful partition of surface operator charges.  It is precisely the operators contained in $\pa\aent$ (i.e. those in $\fZ_\t{mag}[\aent]$) that can link non-trivally with operators in $\fZ_\t{elec}[\aent]$ and so their charges, $\set{\fv^{\pa R}_{\parallel},\fw^{\pa R}_{\parallel}}$ are the eigenvalues of the basis generating $\fZ_\t{elec}[\aent]$.  A generic state of $\mc H_\Sigma$ then can be partitioned as
\begin{equation}\label{eq:tEstatesplit}
	\ket{\set{\fv^{\inr{\aent}}},\set{\fv^{\pa R}_{\parallel},\fw^{\pa R}_{\parallel}},\set{\fv^{\inr{\coaent}}}},
\end{equation}
corresponding to a Hilbert space decomposition
\begin{equation}\label{eq:Hdecomp3}
	\mc H_\Sigma=\bigoplus_{\set{\mf q^{\pa \aent}_{\parallel}}}\mc H_\Sigma^{\left(\mf q^{\pa\aent}_{\parallel}\right)}~,
\end{equation}
where $\set{\mf q^{\pa \aent}_{\parallel}}\coloneqq\set{\fv^{\pa\aent}_{\parallel},\fw^{\pa\aent}_{\parallel}}$ is a short-hand notation for the central eigenvalues.  Each block of \Cref{eq:Hdecomp3} admits a tensor product on $\inr{\aent}$ and $\inr{\coaent}$:
\begin{equation}\label{eq:Htensorprod3}
	\mc H_\Sigma^{\left(\mf q^{\pa\aent}_{\parallel}\right)}=\mc H_{\inr{\aent}}^{\left(\mf q^{\pa\aent}_{\parallel}\right)}\otimes\mc H_{\inr{\coaent}}^{\left(\mf q^{\pa\aent}_{\parallel}\right)}.
\end{equation}
Again, since these blocks are all isomorphic, in what follows we will omit the $(\fq_\parallel^{\pa\aent})$ superscripts from the tensor factors unless clarity demands it.

\subsection{Decomposing a state and the reduced density matrix}\label{sect:redDM}

Now let us consider a generic state, $\rho$, on $\mc H_\Sigma$.  With respect to either Hilbert space decomposition, \Cref{eq:Hdecomp2} or \Cref{eq:Hdecomp3},
\begin{equation}
	\mc H_\Sigma=\bigoplus_{\set{\fq}}\mc H_\Sigma^{\left(\fq\right)}=\bigoplus_{\set{\fq}}\mc H_{\inr\aent}^{\left(\fq\right)}\otimes\mc H_{\inr\coaent}^{\left(\fq\right)},\qquad \fq\in\set{\fq_\perp^{\pa\aent},\fq_\parallel^{\pa\aent}},
\end{equation}
we can write
\begin{equation}\label{eq:rhodecomp}
	\rho=\bigoplus_{\set{\fq}}\lambda_{\left(\fq\right)}\rho^{\left(\fq\right)},\qquad\qquad \fq\in\set{\fq^{\pa\aent}_\perp,\fq^{\pa\aent}_\parallel},
\end{equation}
which corresponds to diagonalizing $\rho$ as an operator with respect to the respective center, $\fZ[\aent]$.  The coefficients, $\lambda_{\left(\fq\right)}$, are chosen such that $\rho^{\left(\fq\right)}$ are normalized states on $\mc H_\Sigma^{\left(\fq\right)}$:
\begin{equation}
	\Tr_{\mc H_\Sigma^{\left(\fq\right)}}\rho^{\left(\fq\right)}=1,
\end{equation}
which requires
\begin{equation}
	\sum_{\set{\fq}}\lambda_{\left(\fq\right)}=1.
\end{equation}
Within each block $\mc H_\Sigma^{\left(\fq\right)}$ we can construct the reduced density matrix with respect to the tensor product, \Cref{eq:Htensorprod2} or \Cref{eq:Htensorprod3}, by tracing out $\mc H_{\inr{\coaent}}$:
\begin{equation}\label{eq:blockredrho}
	\rho^{\left(\fq\right)}_{\inr{\aent}}=\Tr_{\mc H_{\inr{\coaent}}}\rho^{\left(\fq\right)}.
\end{equation}
This results in a reduced density matrix on $\aent$ which follows from the sum decomposition, \Cref{eq:rhodecomp}:
\begin{equation}\label{eq:redrhodecomp}
	\rho_\aent=\bigoplus_{\set{\fq}}\lambda_{\left(\fq\right)}\left(\rho^{\left(\fq\right)}_{\inr{\aent}}\otimes\tau_{\mc H_{\inr{\coaent}}}\right).
\end{equation}
where $\tau_{\mc H}$ is the trace-normalized identity matrix on a Hilbert space:
\begin{equation}
	\tau_{\mc H}\coloneqq \frac{\hat 1_{\mc H}}{\dim\mc H}.
\end{equation}
Alternatively, we can define this reduced density matrix as the unique unit-trace, Hermitian operator in $\fA[\aent]$ which reproduces expectation values of all other operators in $\fA[\aent]$:
\begin{equation}\label{eq:redrho_algdef}
	\rho_\aent\in\fA[\aent]\qquad\text{such that}\qquad\Tr_{\mc H_\Sigma}\left(\rho_\aent\mc O_\aent\right)=\Tr_{\mc H_\Sigma}\left(\rho\mc O_\aent\right)\qquad\fall\;\mc O_\aent\in\fA[\aent].
\end{equation}
Given the decomposition \Cref{eq:redrhodecomp}, we assign an entanglement entropy to $\rho$ and $\fA[\aent]$ via
\begin{equation}\label{eq:algEEdef}
	\alge{\fA[\aent]}[\rho]\coloneqq\sum_{\set{\fq}}\lambda_{\left(\fq\right)}\vne\left[\rho_{\inr{\aent}}^{\left(\fq\right)}\right]-\sum_{\set{\fq}}\lambda_{\left(\fq\right)}\log\lambda_{\left(\fq\right)},
\end{equation}
where
\begin{equation}\label{eq:vNentdef}
	\vne\left[\rho_{\inr{\aent}}^{(\fq)}\right]\coloneqq-\Tr_{\mc H_{\inr{\aent}}}\left(\rho_{\inr{\aent}}^{(\fq)}\log\rho_{\inr{\aent}}^{(\fq)}\right)
\end{equation}
is the von Neumann entropy of the reduced density matrix in a fixed block,  \Cref{eq:blockredrho}.  The second term of \Cref{eq:algEEdef} is a Shannon entropy of the probability of measuring the set $\set{\fq}$ of eigenvalues in the state $\rho$.  We emphasize that, while ostensibly classical in nature, this Shannon entropy is present even for pure quantum states and is a generic feature of entanglement entropies associated to centerful algebras.  It will play an important role in the following section.  For a centerless algebra (such as $\fA_\t{aus}[\aent]$ or $\fA_\t{greedy}[\aent]$), the entanglement entropy, $\alge{\fA[\aent]}[\rho]$, can be defined in the usual way as the von Neumann entropy of $\rho$ reduced on the corresponding tensor-factorization.

\subsection{Essential topological entanglement}
We now illustrate the entanglement entropy associated to various choices of subregion algebra for a ground state, $\ket{\psi}=|\fv_\star\rangle$, for some fixed $\fv_\star$.  As discussed in \Cref{sect:canonquant}, this state is prepared by the action of $\hV^{\fv_\star}$ on a condensate. In the context of the BF path-integral, these states are natural: they are prepared by path-integral on the interior of a manifold having $\Sigma$ as its boundary and with $\hV$ Wilson surface operators inserted.  In $(2+1)$ dimensional topological phases, such states play a key role in the standard treatment of entanglement entropy (say via the replica trick, lattice regularization, or an extended Hilbert space), yielding the celebrated \textquote{topological entanglement entropy} discussed in \Cref{sect:intro}.  The primary upshot of this section is to show that in our setup such states provide a new smoking gun topological signature in the algebraic entropy, which we name the \textquote{essential topological entanglement.}  There will be two varieties: one associated to the topological magnetic algebra, and one associated to the topological electric algebra:
\begin{equation}
	\cE_{\t{mag/elec}}\coloneqq\alge{\fA_\t{mag/elec}[\aent]}\qty[|\fv_\star\rangle\langle\fv_\star|].
\end{equation}
We will further see that these two essential topological entanglements are related by a electric-magnetic duality.

Let us begin the discussion with the topological magnetic algebra in mind.  We will illustrate the machinery in this instance; the consideration of the other subregion algebras will be wholly clear afterward.  We will consider the global pure state
\begin{equation}\label{eq:state1}
	\rho=\sum_{\fv_1,\fv_2}\psi_{\fv_1}\psi^\ast_{\fv_2}\,\kb{\fv_1}{\fv_2}.
\end{equation}
and reduce it down on $\fA_\t{mag}[\aent]$ for some region $\aent$.  We can express the reduced density matrix of the state, \Cref{eq:state1}, as element of $\fA_\t{mag}[\aent]$ via
\begin{equation}\label{eq:purerdDM}
	\rho_R=\mc N^{-1}\sum_{\fw^\aent}\sum_{\fv^\aent}\sum_{\fv_1}\psi_{\fv_1}\psi^\ast_{\fv_1-\fv^\aent}\ex{-\ii\Gamma\qty(\fv_1,\fw^\aent)}\,\hW^{\fw^\aent}\,\hV^{\fv^\aent}.
\end{equation}
where $\mc N=\dim\mc H_\Sigma=\abs{\det\bbK}^{\b_p(\Sigma)}$ ensures that $\rho_\aent$ is trace normalized over $\mc H_\Sigma$.  See \Cref{app:redDMdecomp} for details on this decomposition.  It will be useful to split this normalization into
\begin{equation}
	\mc N=\mc N_{\inr\aent}\,\mc N_{\pa\aent}\,\mc N_{\inr\coaent}=\abs{\det\bbK}^{\h_{\inr\aent}}\,\abs{\det\bbK}^{\h_{\pa\aent}}\,\abs{\det\bbK}^{\h_{\inr\coaent}},
\end{equation}
where $\h_{\inr\aent}$ is the number of independent $(d-p-1)$-cycles deformable to $\aent$ but not to $\coaent$ (i.e. the number of $(d-p-1)$-cycles spanning the austere algebra for $\aent$).  And vice-versa for $\h_{\inr\coaent}$.  Thus $\mc N_{\inr\aent}$ and $\mc N_{\inr\coaent}$ are simply the dimensions of $\mc H_{\inr\aent}$ and $\mc H_{\inr\coaent}$, respectively, in the decomposition \Cref{eq:Htensorprod2}. Above, $\h_{\pa\aent}$ is the number of $(d-p-1)$-cycles piercing $\pa\aent$, $\h^{d-p-1}_\t{elec}$, (counting the independent $\set{\fv_\perp^{\pa\aent}}$ in the decomposition \Cref{eq:magstatepart}) plus $\h_\t{mag}^{d-p-1}$, the number of $(d-p-1)$-cycles deformable to $\pa\aent$.  As discussed above in \Cref{sect:decompHS}, the latter of these is equal to $\h^p_\t{elec}$, and counts the independent $\set{\fw_\perp^{\pa\aent}}$ in \Cref{eq:magstatepart}.  Thus
\begin{equation}
	\h^{\pa R}=\h^p_\t{elec}+\h^{d-p-1}_\t{elec}=\h^p_\t{mag}+\h^{d-p-1}_\t{mag}.
\end{equation}

In writing \Cref{eq:purerdDM}, some of the $\hW$'s and $\hV$'s belong to the center, $\fZ_\t{mag}[\aent]$.  It will be useful to separate them off as
\begin{align}
	\hW^{\fw^\aent}= & \prod_{\eta^\sfj\hin\aent,\nothin\coaent}\hW_{\eta^\sfj}^{\vec w^\aent_\sfj}\prod_{\eta^\sfj\hin\aent,\hin\coaent}\hW_{\eta^\sfj}^{\vec w^\aent_\sfj}\equiv \hW^{\fw^{\inr{\aent}}}\hW^{\fw^{\pa\aent}_\parallel}\nonumber \\
	\hV^{\fv^\aent}= & \prod_{\sigma^\sfi\hin\aent,\nothin\coaent}\hV_{\sigma^\sfi}^{\vec v^\aent_\sfi}\prod_{\sigma^\sfi\hin\aent,\hin\coaent}\hV_{\sigma^\sfi}^{\vec v^\aent_\sfi}\equiv \hV^{\fv^{\inr{\aent}}}\hV^{\fv^{\pa\aent}_\parallel}.
\end{align}
Diagonalizing these central elements, we can write them in terms of their eigenvalues as
\begin{equation}
	\hW^{\fw^{\pa\aent}_\parallel}=\bigoplus_{\fv^{\pa\aent}_\perp}\ex{\ii\Gamma\qty(\fv_\perp^{\pa\aent},\fw^{\pa\aent}_\parallel)},\qquad\qquad\hV^{\fv^{\pa\aent}_\parallel}=\bigoplus_{\fw^{\pa\aent}_\perp}\ex{\ii\Gamma\qty(\fv_\parallel^{\pa\aent},\fw^{\pa\aent}_\perp)}.
\end{equation}
This leads to a reduced density matrix in block diagonal form, as in \Cref{eq:redrhodecomp}:
\begin{equation}\label{eq:magredrho}
	\rho_R=\bigoplus_{\set{\fv^{\pa\aent}_\perp,\fw^{\pa\aent}_\perp}}\lambda_{(\fv^{\pa\aent}_\perp,\fw^{\pa\aent}_\perp)}\left(\rho^{(\fv^{\pa\aent}_\perp,\fw^{\pa\aent}_\perp)}_{\inr{\aent}}\otimes \tau_{\mc H_{\inr{\coaent}}}\right),
\end{equation}
with
\begin{align}
	\lambda_{(\fv^{\pa\aent}_\perp,\fw^{\pa\aent}_\perp)}=    & \mc N_{\pa\aent}^{-1}\sum_{\fw^{\pa\aent}_\parallel}\sum_{\fv^{\pa\aent}_\parallel}\sum_{\fv_1}\psi^\ast_{\fv_1-\fv^{\pa\aent}_\parallel}\psi_{\fv_1}\ex{-\ii\Gamma\qty(\fv_1-\fv^{\pa\aent}_\perp,\fw^{\pa\aent}_\parallel)+\ii\Gamma\qty(\fv^{\pa\aent}_\parallel,\fw^{\pa\aent}_\perp)}\nonumber \\
	\rho_\aent^{(\fv^{\pa\aent}_\perp,\fw^{\pa\aent}_\perp)}= & \mc N_{\inr\aent}^{-1}\sum_{\fw^{\inr{\aent}}}\sum_{\fv^{\inr{\aent}}}\sum_{\fv_1}\psi^\ast_{\fv_1-\fv^{\inr{\aent}}}\psi_{\fv_1}\ex{-\ii\Gamma\qty(\fv_1,\fw^{\inr{\aent}})}\hW^{\fw^{\inr{\aent}}}\,\hV^{\fv^{\inr{\aent}}},
\end{align}
and $\tau_{\mc H_{\inr{\coaent}}}$ is the trace-normalized unit operator of $\mc H_{\inr{\coaent}}$ with respect to the decomposition \Cref{eq:Htensorprod2}.  We can easily verify that
\begin{equation}
	\sum_{\fv^{\pa\aent}_\perp}\sum_{\fw^{\pa\aent}_\perp}\lambda_{(\fv_\perp^{\pa\aent},\fw^{\pa\aent}_\perp)}=\sum_{\fv_1}\psi^\ast_{\fv_1}\psi_{\fv_1}=1,
\end{equation}
because the sums over $\fw^{\pa\aent}_\perp$ and $\fv^{\pa\aent}_\perp$ enforce delta functions on $\fv^{\pa\aent}_\parallel$ and $\fw^{\pa\aent}_\parallel$, respectively.

We now will take our pure state to be a ground state, $|\psi\rangle=|\fv_\star\rangle$ for fixed $\fv_\star$, which sets $\psi_\fv={\boldsymbol\delta}_{\fv-\fv_\star}$ (defined in \Cref{eq:latticedelta}).  With respect to the decomposition \Cref{eq:Htensorprod2} $|\fv_\star\rangle$ projected onto each central eigenspace remains a product state on $\mc H_{\inr\aent}\otimes\mc H_{\inr\coaent}$ and so $\rho_{\inr\aent}^{(\fv_\perp^{\pa\aent},\fw_\perp^{\pa\aent})}$ is pure on $\mc H_{\inr\aent}$:
\begin{equation}
	\rho_{\inr\aent}^{(\fv_\perp^{\pa\aent},\fw_\perp^{\pa\aent})}=\mc N_{\inr\aent}^{-1}\sum_{\fw^{\inr\aent}}\ex{-\ii\Gamma\qty(\fv_\star,\fw^{\inr\aent})}\,\hW^{\fw^{\inr\aent}}=\kb{\fv^{\inr\aent}_\star}{\fv^{\inr\aent}_\star}.
\end{equation}
As such its von Neumann entropy vanishes, $\vne\left[\rho_{\inr\aent}^{\left(\fq_\perp^{\pa\aent}\right)}\right]=0$. The entanglement entropy of $|\fv_\star\rangle$ then comes entirely from the Shannon entropy of the distribution, $\set{\lambda_{\left(\fq_\perp^{\pa\aent}\right)}}$, which take the form
\begin{equation}
	\lambda_{(\fv_\perp^{\pa\aent},\fw_\perp^{\pa\aent})}=\mc N^{-1}_{\pa\aent}\sum_{\fw_\parallel^{\pa\aent}}\ex{-\ii\Gamma\qty(\fv_\star-\fv_\perp^{\pa\aent},\fw_\parallel^{\pa\aent})}={\boldsymbol\delta}_{\fv_{\star\perp}^{\pa\aent}-\fv_\perp^{\pa\aent}}\abs{\det\bbK}^{-\h_\t{mag}^{d-p-1}}~;
\end{equation}
that is, their support is isolated to the $\fv_\perp^{\pa\aent}$ sector determined by the original ground state, and is maximally mixed on the $\fw_\perp^{\pa\aent}$ eigenvalues.  The algebraic entanglement entropy associated to $\fA_\t{mag}[\aent]$ of $|\fv_\star\rangle$ then determines the essential topological entanglement as
\begin{empheq}[box=\obox]{equation}\label{eq:magEEbasis}
	\cE_\t{mag}=-\sum_{\set{\fv^{\pa\aent}_\perp,\fw^{\pa\aent}_\perp}}\lambda_{\left(\fv^{\pa\aent}_\perp,\fw^{\pa\aent}_\perp\right)}\log\lambda_{\left(\fv^{\pa\aent}_\perp,\fw^{\pa\aent}_\perp\right)}=\h^{d-p-1}_\t{mag}\log\abs{\det\bbK},
\end{empheq}
where we remind the reader
\begin{equation}
	\h_\t{mag}^{d-p-1}=\sum_{n=0}^{d-p-2}(-1)^{d-p-2-n}\b_n(\pa\aent)+\sum_{n=0}^{d-p-1}(-1)^{d-p-1-n}\left(\b_n(\Sigma)-\dim \H_n(\Sigma,\pd\aent)\right).
\end{equation}
Let us dissect this result. We make note of several features of \Cref{eq:magEEbasis}.
\begin{itemize}
	\item $\cE_\t{mag}$ probes non-local operator statistics through $\abs{\det\bbK}$ which can be regarded as a \textquote{total quantum dimension} in this theory.
	\item $\cE_\t{mag}$ is independent of the ground state, $\ket{\fv_\star}$, in which it is evaluated. This is in keeping with this being an Abelian topological phase.  Since operator fusion is unique in $\fA[\Sigma]$, the \textquote{quantum dimension,} $\cD_{\fv_\star}$, of the $\hV^{\fv_\star}$ building $\ket{\fv_\star}$ from the condensate is unity and so any possible contribution going as $\log\cD_{\fv_\star}$ will vanish.
	\item $\cE_\t{mag}$ probes topological features intrinsic to $\pa\aent$: the alternating sum of Betti numbers on $\pa\aent$.  This contribution mimics a proposed (negative) correction to the area law in higher-dimensional topological order described by membrane-net models \cite{Grover:2011fa}. Here we find this term contributes positively and appears without an area law.
	\item $\cE_\t{mag}$ possesses additional terms that depend both on the topology of $\Sigma$ itself, as well as the relative homologies of $\Sigma$ and $\pa\aent$.  Thus, the essential topological entanglement is sensitive to more than the topology of $\pa\aent$ itself, but also how $\pa\aent$ is embedded into $\Sigma$.\footnote{Note, however, that $\cE_\t{mag}$ is insensitive to Euler-character ambiguities, identified in \cite{Grover:2011fa}  as both local and global. See, for instance, \cref{eq:hmag-sum-app} showcasing this fact in an example calculation.}  As we argued in \Cref{sect:topsubalgs}, this has to be the case since the operators counted by $\cE_\t{mag}$ have to descend from non-trivial topological operators on $\Sigma$.
	\item $\cE_\t{mag}$ comes entirely from the Shannon contribution to $\alge{\fA_\t{mag}}[{\kb{\fv_\star}{\fv_\star}}]$, while the contribution from the sum of von Neumann entropies exactly vanish. It was argued that this latter contribution corresponds to the distillable entanglement in gauge theories \cite{Soni:2015yga,VanAcoleyen:2015ccp}. That $\cE_\t{mag}$ entirely enters through the Shannon term is in keeping with its essential non-locality: it cannot be distilled into Bell pairs by local operations.
\end{itemize}
We can repeat this same exercise for the entanglement entropy associated to $\fA_\t{elec}[\aent]$. Running through this process one finds for a pure state the reduced density matrix decomposes in a way wholly similar to \Cref{eq:magredrho}:
\begin{equation}
	\rho_R=\bigoplus_{\set{\fv^{\pa\aent}_\parallel,\fw^{\pa\aent}_\parallel}}\lambda_{(\fv^{\pa\aent}_\parallel,\fw^{\pa\aent}_\parallel)}\left(\rho^{(\fv^{\pa\aent}_\parallel,\fw^{\pa\aent}_\parallel)}_{\inr{\aent}}\otimes \tau_{\mc H_{\inr{\coaent}}}\right),
\end{equation}
with
\begin{align}
	\lambda_{(\fv^{\pa\aent}_\parallel,\fw^{\pa\aent}_\parallel)}=    & \mc N_{\pa\aent}^{-1}\sum_{\fw^{\pa\aent}_\perp}\sum_{\fv^{\pa\aent}_\perp}\sum_{\fv_1}\psi^\ast_{\fv_1-\fv^{\pa\aent}_\perp}\psi_{\fv_1}\ex{-\ii\Gamma\qty(\fv_1-\fv^{\pa\aent}_\parallel,\fw^{\pa\aent}_\perp)+\ii\Gamma\qty(\fv^{\pa\aent}_\perp,\fw^{\pa\aent}_\parallel)}\nonumber \\
	\rho_\aent^{(\fv^{\pa\aent}_\parallel,\fw^{\pa\aent}_\parallel)}= & \mc N_{\inr\aent}^{-1}\sum_{\fw^{\inr{\aent}}}\sum_{\fv^{\inr{\aent}}}\sum_{\fv_1}\psi^\ast_{\fv_1-\fv^{\inr{\aent}}}\psi_{\fv_1}\ex{-\ii\Gamma\qty(\fv_1,\fw^{\inr{\aent}})}\hW^{\fw^{\inr{\aent}}}\,\hV^{\fv^{\inr{\aent}}}.
\end{align}
Again, for a fixed ground state, $\ket{\psi}=\ket{\fv_\star}$, the reduced density matrix projected to a fixed block of \Cref{eq:Hdecomp3} is pure and the electric entanglement entropy comes entirely from the Shannon entropy of $\set{\lambda_{(\fq_\parallel^{\pa\aent})}}$.  This distribution again is isolated onto a specific block of $\fv_\parallel^{\pa\aent}$ eigenvalues determined by the ground state and are maximally mixed onto the $\fw_\parallel^{\pa\aent}$ eigenvalues:
\begin{equation}
	\lambda_{(\fv_\parallel^{\pa\aent},\fw_\parallel^{\pa\aent})}=\mc N^{-1}_{\pa\aent}\sum_{\fw_\perp^{\pa\aent}}\ex{-\ii\Gamma\qty(\fv_\star-\fv_\parallel^{\pa\aent},\fw_\perp^{\pa\aent})}={\boldsymbol\delta}_{\fv_{\star\parallel}^{\pa\aent}-\fv_\parallel^{\pa\aent}}\abs{\det\bbK}^{-\h_\t{elec}^{d-p-1}}.
\end{equation}
$\cE_\t{elec}$, being given by the algebraic entanglement entropy associated to $\fA_\t{elec}[\aent]$ of a ground state, is then
\begin{empheq}[box=\obox]{equation}
	\cE_\t{elec}=\h_\t{elec}^{d-p-1}\log\abs{\det\bbK},
\end{empheq}
where we remind the reader
\begin{equation}
	\h_\t{elec}^{d-p-1}=\h_\t{mag}^p=\sum_{n=0}^{p-1}(-1)^{p-1-n}\b_n(\pa\aent)+\sum_{n=0}^{p}(-1)^{p-n}\left(\b_n(\Sigma)-\dim \H_n(\Sigma,\pd\aent)\right).
\end{equation}
Again we see many familiar features of this essential topological entanglement: the dependence of $\log\abs{\det\bbK}$ with a coefficient displaying topological dependence of $\pa\aent$ as well as $\Sigma$ and how $\pa\aent$ is embedded into $\Sigma$.  Comparing with \Cref{eq:magEEbasis} we also notice
\begin{equation}
	\cE_\t{mag} \xmapsto{p\mapsto (d-p-1)} \cE_\t{elec},
\end{equation}
which ultimately stems from the electric-magnetic duality discussed in \Cref{sect:canonquant}.

Lastly for sake of completeness, we consider the centerless algebras, $\fA_\t{aus}$ and $\fA_\t{greedy}$. Since the ground states are already product states on the tensor factorizations defined by either $\fA_\t{aus}[\aent]$ or $\fA_\t{greedy}[\aent]$, these two algebras yield zero entanglement entropy:
\begin{equation}
	\alge{\fA_\t{aus}[\aent]}\left[\kb{\fv_\star}{\fv_\star}\right]=\alge{\fA_\t{greedy}[\aent]}\left[\kb{\fv_\star}{\fv_\star}\right]=0.
\end{equation}

\section{Discussion}\label{sect:disc}

In this paper we considered the algebraic approach to entanglement entropy applied to the algebra of surface operators in Abelian BF theories. On a technical level the algebraic approach allowed us to address issues of Hilbert space factorization while respecting the non-local and topological nature of the gauge-invariant observables available in the low-energy theory. On a conceptual level this investigation was predicated on finding a suitable definition of topological entanglement that (i) can be defined intrinsically in the IR TQFT (i.e. without the need to embed into a microscopic model, or to extend the Hilbert space with \textquote{edge modes}) and (ii) is sensitive to longer-range and intricate forms of topology than that of the entangling surface itself. To that end we defined two non-trivial algebras that can be assigned to a subregion: the topological magnetic and electric algebras. To each of these algebras we associated an algebraic entanglement entropy which we coin the \textquote{essential topological entanglement.}  Our essential topological entanglement is manifestly finite, positive, and displays a more intricate and long-range features than the topology of the entangling surface itself: namely, how the entangling surface is embedded into the Cauchy slice.

Let us comment on some open questions and open directions implied by this research below.

\subsubsection*{Comparing with traditional TEE}

The essential topological entanglement shares familiar features with traditional TEE: e.g. the log dependence on the total quantum dimension, $\abs{\det\bbK}$, and the appearance of the alternating sum of $\b_k(\pa\aent)$ which has been argued to be the coefficient of the $\log\abs{\det\bbK}$ in higher dimensions \cite{Grover:2011fa}. However, the additional dependence of $\cE$ on bulk topology makes it clear that $\cE$ truly a different object than the TEE. We have emphasized above and will emphasize again that this has to be the case.  A simple example to keep in mind when comparing the two concepts is when the region is a $D$-ball: $\aent=\bbB^D$. There is simply no non-trivial operator one can assign to either $\fA_\t{mag}$ or $\fA_\t{elec}$: physically the BF theory has integrated out all local degrees of freedom and there is no probe that can distinguish $\aent$ from the empty set. It is easy to see that $\cE_\t{mag/elec}=0$ in this case.  However, the TEE proposed by \cite{Grover:2011fa} will generically be non-zero: this is because $\pa \bbB^D=\S^{D-1}$ can support a top and bottom homology group. Again this difference stems from the fact that the TEE arises from the long-range correlations amongst UV degrees of freedom localized to $\pa\aent$ while $\cE$ arises from the long-range correlations of long-range operators delocalized on $\Sigma$.

With that difference stated, we can still speculate on the form of the traditional TEE in BF theory. As discussed above, accessing this TEE is contingent on adding in UV degrees of freedom. However, we can easily do this by extending the Hilbert space using the methods in \cite{Fliss:2017wop,Fliss:2020cos} or by regulating a replica path-integral with \textquote{edge modes.} This is the subject of immediate follow-up to this paper \cite{Fliss:2023uiv}. We will show that the inheritance of gauge transformations on an entangling surface is an infinite dimensional algebra that completely organizes the entanglement spectrum of an edge-mode theory living on $\pa\aent$. This algebra is a direct higher-dimensional analogue of the Kac--Moody algebras arising on the boundaries of Chern-Simons theories and provides a natural procedure for constructing the extended Hilbert space. While in general dimensions this algebra is not conformal, it lends many tools reminiscent of conformal field theory, e.g. a state-operator correspondence for non-local operators \cite{Hofman:2024oze}. The computation of the entanglement entropy of a subregion is entirely controlled by this algebra and leads to area and sub-area laws plus a constant correction depending on Betti numbers of $\pa\aent$.

\subsubsection*{Accessing essential topological entanglement on the lattice}

Our definition of the essential topological entanglement is strongly motivated by the IR effective TQFT described a topological phase.  In practice, however, it is much more useful to work directly with spin lattice models or with tensor network constructions of ground states.  How does one define $\cE$ in these settings?

A natural starting place are the operator algebras defined in lattice gauge theories defined on general graphs. One can then look for a projected set of gauge invariant operators that are both homotopy invariant as well as independent to refining or coarse graining the graph. Such algebras were precisely considered in \cite{Delcamp:2016eya} where an algebra of \textquote{ribbon operators} were used to define algebraic entanglement entropies in lattice gauge theories that are graph-independent, topological, and finite dimensional. These features resonate strongly with our definition of essential topological entanglement.  In that paper all entangling regions have trivial topology and so the contribution to the entanglement entropy comes entirely from surface operators terminating on (non-Abelian) quasi-particle punctures in the region.  In this work we have not considered punctured states; additionally it is unlikely they will contribute to $\cE$ because of the Abelian fusion of surface operators. This makes a direct comparison to difficult.  It would be interesting to extend the methods of \cite{Delcamp:2016eya} to more interesting topologies to investigate if our notions of algebraic entanglement coincide.

More broadly, it is fair to ask if essential topological entanglement, either defined in a TQFT or in a lattice gauge theory, affords any practical advantages over traditional TEE. A well known use for TEE is to diagnose whether a tensor product ansatz for a gapped Hamiltonian truly captures topological order \cite{Jiang:2012uea}. With this regard we do not expect $\cE$ to provide any significant advantages. However, essential topological entanglement likely displays conceptual advantages in models where manifest background independence and diffeomorphism invariance are desired, such as loop quantum gravity \cite{Delcamp:2016eya} or tensor network models of quantum gravity \cite{AkersSoni}.

\subsubsection*{Relation to multi-boundary entanglement} In this paper we have defined and evaluated the essential topological entanglement in fixed ground states. However there may be other states where the essential topological entanglement plays an important role. Notable examples are states prepared by path-integral on three-sphere link complements introduced in \cite{Balasubramanian:2016sro}. The multi-boundary entanglement across different link components escapes the standard paradigm of topological entanglement entropy precisely because each subsystem is collection of tori with no boundary. There are no edge modes and there is no area law; the topological contributions (which are related to link invariants) appear as positive and finite contributions to the entropy. The essential topological entanglement of any basis state vanishes for boundary-less regions in the $d=3$, and $p=1$ theory. However, since the Hilbert spaces appearing in \cite{Balasubramanian:2016sro} are finite dimensional and possess a basis defined by the action of line operators, it is easy to argue that the entropies appearing there can be recast as algebraic entropies of line operators. At least for the Abelian theory, the essential topological entanglement is the right arena to do so. In these cases all of the essential entanglement appears from the mixture of ground states instead of the ground states themselves. This connection also presents possible clues forward for defining $\cE$ in non-Abelian theories.

\subsubsection*{Probing essential topological entanglement in general states}

We can also discuss generic pure and mixed states. The calculation of the essential topological entanglement in these cases follows easily, however the coefficients of superposition will typically pollute the topological aspects of the entanglement.\footnote{The exception being if those coefficients are topological themselves as in the states discussed in the previous subsection.} This also occurs in similar calculations of the TEE using generic pure states; see e.g. \cite{Wen:2016snr} for example calculations. One benefit of the algebraic approach to entanglement is that it is clear how the structure of the subregion algebras and their central elements lead to $\cE$; it would be useful if this structure could be utilized to isolate $\cE$ cleanly in arbitrary states. One such structure is that $\fA_\t{mag}$ and $\fA_\t{elec}$ appear as \emph{complementary operator algebras} \cite{Magan:2020ake}. Viewing $\fA_\t{greedy}=\fA_\t{elec}\vee\fA_\t{mag}$, there is a related structure of complementary conditional expectations, $E$ and $E'$,
\begin{equation}
	\begin{tikzcd}
		\fA_\t{greedy}[\aent] \arrow{r}{E} \arrow{d}{\mathsf{c}} & \fA_\t{mag}[\aent] \arrow{d}{\mathsf{c}} \\%
		\fA_\t{aus}[\co\aent] \arrow{u} &\arrow[swap]{l}{E'} \arrow{u} \fA_\t{mag}[\co\aent]
	\end{tikzcd}
\end{equation}
(and similarly for $\fA_\t{elec}$). One can then try to use entropic certainty relations to place strict bounds on the relative entropies in terms of the index of $[\fA_\t{mag}\colon\fA_\t{greedy}]$ \cite{Magan:2020ake}. As of yet we have been unable to utilize this technology to constrain the entanglement entropy of generic pure states: it is likely possible to construct pure states whose algebraic entropy saturates $\log\dim\mc H_\Sigma$ and so washes out the more intricate features of $\cE$.  Regardless, this avenue and the related avenue of the topological uncertainty principle \cite{Jian:2015wra} are worth exploring further.

\subsubsection*{Finer measures of essential topological entanglement}

In this paper we have focussed on the entanglement entropy of a reduced ground state expressed in a basis of surface operators. This reduced state contains potentially much more information than its von Neumann entropy. It is worth speculating on what can be learned from the finer structures of reduced states in the context of essential topological entanglement. One intriguing avenue is for an entanglement-based classification of $(2+1)$ dimensional topological order beyond modular data: it was demonstrated in \cite{mignard2021modular} that there arbitrarily many examples of distinct modular tensor categories with equivalent modular data. Many of these examples, however, can be distinguished by representations of the mapping class groups of genus-2 surfaces \cite{Wen:2019ylt}. Due to its sensitivity to global topological structures and global braiding of surface operators, this presents a tantalizing advantage of essential topological entanglement over the traditional TEE (which is only sensitive to the number of connected components of the entangling surface in $(2+1)$ dimensions). Our discussion of multi-boundary entanglement above already provides some evidence towards this statement: it has been demonstrated that all of the counterexamples of \cite{mignard2021modular} can be distinguished through link invariants \cite{Bonderson:2018ryx,kulkarni2018topological}. As it stands now, the entanglement spectrum of the reduced ground states in this paper is too simple to be utilized in this fashion. Perhaps multi-partite entanglement measures (such as the entanglement negativity), more appropriately chosen states (such as those in the multi-boundary set-ups), or the essential entanglement spectrum generalized to non-Abelian ground states can provide more leverage. More broadly, the role of correlations that the algebra of higher-form operators induce on states in distinguishing topological order in $(2+1)$ or even in higher dimensions is a future research direction worth pursuing.\footnote{We thank an anonymous referee for suggesting this direction to us.}

\subsubsection*{Applications beyond BF theory: fractons}

As mentioned above, much of this paper and its follow-up is motivated by the question of topological order in higher dimensions. While our focus has been on standard Abelian topological orders, we hope some of our ideas translate to $(3+1)$ gapped fracton phases. This translation is most easily facilitated through the \textquote{foliated field theory} framework to describe Type I, or foliated, fracton order \cite{Slagle:2020ugk}. Fracton phases, foliated phases included, have interesting forms of UV/IR mixing that make the distinction between different UV scales (e.g. the energy cutoff, the momentum cutoff, and the lattice scale) subtle and important. Essential topological entanglement eschews at least some of this subtlety: it does not rely on a UV embedding, but instead utilizes only the structure of symmetry operators (which may still rely on a lattice scale for foliated fracton phases). It would be very interesting if essential topological entanglement can provide a more natural way to extract universal features of foliated fracton phases directly in the continuum.

\subsubsection*{Essential topological entanglement in generic theories}

Although we have focused on essential topological entanglement in Abelian BF theory, the extension to other TQFTs is conceptually straightforward.  However, essential topological entanglement might prove to be a useful concept in generic (non-topological) quantum field theories exhibiting generalized global symmetries.  This follows from the realization that symmetries, and more broadly generalized symmetries, are synonymous with topological operators (of various codimensions, invertible or non-invertible) \cite{Gaiotto:2014kfa,Bhardwaj:2017xup}.\footnote{There is a multitude of results stemming from this realization, see e.g. \cite{Cordova:2022ruw} for a more complete list.} One can then define topological operator algebras in a generic quantum field theory, by restricting to the algebras of symmetry operators. More formally, the sandwich approach \cite{Freed:2022qnc,Gaiotto:2020iye} to global symmetries, provides an avenue to delineate these operators from the rest of the theory: all symmetry operators live in the one-higher-dimensional SymTFT \cite{Apruzzi:2021nmk}, which is a topological field theory in its own right. Therefore the essential topological entanglement applied to this SymTFT is a potential probe of an indistillable, symmetry-induced, entanglement of the original theory.

Along these lines, we can lastly speculate on consequences for gravity. It is strongly believed that quantum gravity has no global symmetries (see e.g. \cite{Kallosh:1995hi,Banks:2010zn,Harlow:2018tng}), although there are exceptions in low-dimensional models excluding black holes \cite{Harlow:2020bee}. It is tempting to phrase this condition in the language of entanglement and conjecture that \emph{quantum gravity has no essential topological entanglement,} which might be a weaker, but more universal condition on quantum gravity.

\section*{Acknowledgements}

We thank Sean Hartnoll, Diego Hofman, Diego Liska, Onkar Parrikar, and Ronak Soni for fun and enlightening discussions.  JRF thanks Rob Leigh and Matthew Lapa for conversations inspiring this work. JRF thanks the University of Amsterdam for hospitality. SV thanks the University of Cambridge 
and the Kavli Instintute for Theoretical Physics at UCSB for hospitality. Research at KITP was supported in part by the National Science Foundation under Grant No. NSF PHY-1748958. JRF is supported by STFC consolidated grant ST/T000694/1 and by Simons Foundation Award number 620869. SV is supported by the NWO Spinoza prize awarded to Erik Verlinde.

\appendix
\section{Precision BF}\label{app:precBF}
\newcommand{\hol}{\operatorname{hol}}
\newcommand{\bil}{\mathrm{q}}

In this appendix we take a closer look at BF theory in cases where the usual definitions are insufficient, such as when the theory is defined on a manifold with torsion. The appropriate language to define BF theory precisely is that of differential cohomology. We use this language to write down the BF action on a generic manifold and give a precise definition of the path-integral measure and its properties. For the following we will take \(X\) to be a \(d\)-dimensional manifold, possibly with boundary and we will denote by \(\ib:\pd X\hookrightarrow X\) the embedding of the boundary.

When studying \(p\)-form gauge theories on a non-trivial manifold the \(p\)-form gauge field is insufficient to capture all the topological properties of the theory. Instead, the relevant degrees of freedom can be captured by a $(p+1)$-cocycle in differential cohomology, $\dH^{p+1}(X)$,\footnote{More precisely, differential cohomology is categorical in spirit, so this is a workable model of it, known as Deligne or Deligne--Beilinson cohomology.} i.e. a triplet $\check{A}=(A,N_A,F_A)$, where $A$ is a regular $p$-cochain (the gauge field), $N_A$ is a $(p+1)$-cocycle (giving rise to the flux, upon integration) and $F_A$ is a closed $(p+1)$-form (the field strength). The constituents of the triplet are related by a constraint: $F_A = \dd{A}+N_A$. To define actions, we also need a product in differential cohomology, \(\ve:\dH^p(X)\times\dH^q(X)\to\dH^{p+q}(X)\), that will replace the usual wedge product of differential forms. For a gentle introduction on the usage of differential cohomology in higher-gauge theories see \cite{Hsieh:2020jpj}, while for a mathematically rigorous approach see  \cite{amabel2021differential}. The bottomline, is that using the product \(\vee\), one can write the BF action, on a generic \(d\)-dimensional manifold $X$, as a pairing in differential cohomology:
\begin{equation}
	S_\BF\qty[\check{A},\check{B}]\coloneqq\int_{X} \check{B}\ve \check{A}, \label{eq:pair}
\end{equation}
The above action action can we written in more user-friendly way as
\begin{equation}
	S_\BF\qty[\check{A},\check{B}] = \int_X B_\t{flat}\w \dd{A} + \int_{\mathrm{PD}[N_A]} B, \label{eq:pair-user-friendly}
\end{equation}
where \(B_\t{flat}\) is, as the name suggests the flat part of \(B\), with \(F_{B_\t{flat}}=0\) and \(\mathrm{PD}[N_A]\in\H_{d-p-1}(X)\) is the Poincaré dual of \(N_A\). Here let us note that while differential cohomology is the correct framework to define BF theory, in practice only the torsion part of \(X\) can have a non-trivial effect. This is most readily seen in \Cref{eq:pair-user-friendly}. Since \(\H_{d-p-1}(X)\cong \mathrm{Tor}_{d-p-1}(X)\oplus\Z^{\b_{d-p-1}(X)}\), Dirac quantization implies that the non-torsion part of the second term contributes as \(2\pi\times\t{integer}\), which is trivial upon exponentiation. Therefore the action is effectively \(S=\int_X B\w\dd{A} +\t{torsion}\), reducing to the usual BF action, \Cref{eq:BFact}, whenever \(X\) has no torsion.

The final step to close this intellectual detour is to define the BF path-integral. The formal path-integral measure is
\begin{equation}
	\dd{\mu}(\check{A},\check{B}) \coloneqq \DD{\check{A}}\DD{\check{B}} \ex{\ii S_\BF\qty[\check{A},\check{B}]}, \label{eq:measure}
\end{equation}
where we are summing over differential cohomology elements. The functional measure \(\DD{\check{A}}\) instructs us to integrate over all closed \((p+1)\)-forms \(F_A\in\Ocl^{p+1}(X)\), integrate over all \(p\)-cochains \(A\in\CC^p(X)\), and finally sum over all \((p+1)\)-cohomology classes, \(N_A\in\H^{p+1}(X)\), respecting the constraint \(F_A = \dd{A}+N_A\). Moreover, to avoid overcounting we need to identify configurations that differ by a flat \((p-1)\)-form field that vanishes on the boundary. The latter we will do simply by dividing out by the volume of these gauge transformations. The integral over \(F_A\) is trivial, due to the constraint, so we are left with
\begin{equation}
	\int_{\dH^{p+1}(X)} \DD{\check{A}} \cO[\check{A}] = \sum_{N_A\in\H^{p+1}(X;\Z)} \int_{\CC^p(X)} \frac{\DD{A}}{\vol(\cG_p(X))} \cO\qty[\qty(A,N_A,\var{A}+N_A)],
\end{equation}
where \(\cO[\check{A}]\) is an arbitrary test-functional.

We will only be dealing with a continuous structure group, therefore, we can safely regard \(A\) as a \(p\)-form. In this case, \(\cG_p\) is defined recursively as (taking \(\cG_0 = \emptyset\))
\begin{equation}
	\cG_p(X) \coloneqq \set{\alpha\in \Omega^{p-1}(X)\suchthat \ib^* \alpha = 0}\Big/\cG_{p-1}(X).
\end{equation}
Note that the gauge transformations act in principle also on $N_A$. Their action has been, however, absorbed into constraining \(N_A\) to be in the cohomology rather than in \(\ZZ^{p+1}(X;\Z)\). Obviously, the story is the same for \(\DD{\check{B}}\), with the appropriate changes in form-degrees.

One important property of the measure \Cref{eq:measure}, which we use in the main text, is that if we shift \(\check{A}\) and \(\check{B}\) by some fixed \(\check{\alpha}\) and \(\check{\beta}\) respectively, the measure transforms as
\begin{equation}
	\dd{\mu}(\check{A}+\check{\alpha},\check{B}+\check{\beta}) = \dd{\mu}(\check{A},\check{B})\ \ex{\ii S_\BF\qty[\check{\alpha},\check{B}]}\ \ex{\ii S_\BF\qty[\check{A},\check{\beta}]}\ \ex{\ii S_\BF\qty[\check{\alpha},\check{\beta}]}. \label{eq:property}
\end{equation}

\section{Algebras with trivial center}\label{app:centerless}

As has been mentioned at several points in the main text, one hallmark of gauge theories is the presence of centers in algebras assigned to subregions.  While this feature is generic, the assignment of an algebra, $\fA[\aent]$, to a region, $\aent$, is ultimately a choice.  As such one may ask if there is enough ambiguity to assign a centerless algebra to $\aent$. To be clear, of course one can always assign, by fiat, a centerless algebra to $\aent$; for two (extreme) instances:
\begin{equation}
	\fA[\aent]=\t{span}_\bbC\set{1}\qquad\t{or}\qquad\fA[\aent]=\fA[\Sigma],
\end{equation}
where $\fA[\Sigma]$ is the full algebra of operators\footnote{We assume that $\fA[\Sigma]$ has no center.  This is true for the algebra of surface operators in this section.  However, in lattice gauge theories, it is not typically true: there are an extensive number of Gauss operators that generate a global center \cite{Lin:2018bud}. Gauge-invariant states in the BF theory (treated as an IR EFT of such a lattice gauge theory) live in a fixed eigenvalue sector of this \textquote{microscopic center.}} on $\mc H_\Sigma$. However, these assignments tell us absolutely no physics about the region in which we are interested!  Barring such \textquote{madman assignments,} centerless subalgebras associated to a subregion $\aent$ are highly non-generic in gauge theories, yet they may still exist. We can attempt to build such an subalgebra by starting from a generic (centerful) algebra, $\fA[\aent]$, and either (i) systematically excluding operators from $\fZ[\aent]$ \cite{Casini:2013rba} (a process we will call \emph{reduction}), or (ii) systematically adding to $\fA[\aent]$ operators from $\fA[\Sigma]$ that do not commute with $\fZ[\aent]$ (what we will call \emph{extension}).  In the course of both of these processes, the resulting $\fA[\aent]$ will have a more tenuous relationship to its associated region $\aent$. However, to maintain at least some degree of association between $\fA[\aent]$ and $\aent$ we will focus on reductions and extensions that are minimal to ensuring $\fA[\aent]$ is centerless.

Starting then from $\fA_\t{mag}[\aent]$, minimal reduction and minimal extension result in two centerless algebras associated to $\aent$, respectively:

\subsubsection*{The austere algebra}

For the process of reduction, we remove from $\fA_\t{mag}[\aent]$ the surface operators homotopic to $\pa\aent$.  Since these are also the operators deformable to being contained in $\coaent$, this equivalent to
\begin{equation}
	\fA_\t{aus}[\aent]\coloneqq\fU\set{\hat{\W}_{\eta^\sfi}^{\vec w_\sfi},\hV_{\sigma^\sfj}^{\vec v^\sfj} \suchthat \eta^\sfi,\,\sigma^\sfj\hin \aent\;\t{and}\;\eta^\sfi,\sigma^\sfj\nothin \coaent}=\fA_\t{mag}[\aent]\cap\fA_\t{elec}[\aent],
\end{equation}
i.e. $\fA_\t{aus}[\aent]$ is generated only by surface operators that, homotopically, \emph{must} be completely in $\aent$.

\subsubsection*{The greedy algebra}
For extension, we now add to $\fA_\t{mag}[\aent]$ the minimal basis of surface operators that \textquote{pierce} $\pa\aent$.  These link with operators on $\pa\aent$ and so prohibit them from forming a center:
\begin{equation}  \label{eq:Agreedy}
	\fA_\t{greedy}[\aent]\coloneqq\fU\set{\hat{\W}_{\eta^\sfi}^{\vec w_\sfi},\hV_{\sigma^\sfj}^{\vec v_\sfj} \suchthat \eta^\sfi,\,\sigma^\sfj\hin \aent\;\t{or}\;\eta^\sfi,\sigma^\sfj\nothin \coaent}=\fA_\t{mag}[\aent]\cup\fA_\t{elec}[\aent]
\end{equation}
i.e. $\fA_\t{greedy}[\aent]$ is generated by all surface operators that are either homotopically in $\aent$, or topologically must have \textquote{a leg in $\aent$}.\footnote{to be precise \(\fA_\t{greedy}[\aent]\) should be the smallest algebra containing the two. That is the generated algebra, $\fA_\t{mag}[\aent]\vee\fA_\t{elec}[\aent]\coloneqq\cco{\qty(\fA_\t{mag}[\aent]\cup\fA_\t{elec}[\aent])}$. But for finite-dimensional operator algebras, as is our case, it coincides with \Cref{eq:Agreedy}, since \(\cco{\fA}=\fA\) is automatically guaranteed. See also \cite{Casini:2019kex}.}

It is clear the same algebras can be constructed from the reduction and extension, respectively, of $\fA_\t{elec}[\aent]$, as well.  It is also clear that
\begin{equation}
	\co{\left(\fA_\t{aus}[\aent]\right)}=\fA_\t{greedy}[\coaent],\qquad\qquad \co{\left(\fA_\t{greedy}[\aent]\right)}=\fA_\t{aus}[\coaent].
\end{equation}

Because these two subalgebras have trivial center, they should correspond to honest tensor factorizations of $\mc H_\Sigma$.  This is indeed the case.  Returning to the partition of a general state, \Cref{eq:MCstatepartition1}, for the austere algebra, we simply group the charges associated with central surface operators with $\coaent$
\begin{equation}
	\ket{\fv}\coloneqq\ket{\set{\fv^{\inr{\aent}}},\set{\fv^{{\coaent}^+}}}\qquad\qquad \set{\fv^{{\coaent}^+}}\coloneqq\set{\fv^{\pa\aent}_{\perp},\fv^{\pa\aent}_{\parallel},\fv^{\inr{\coaent}}},
\end{equation}
while for the greedy algebra, we group them in with $\aent$:
\begin{equation}
	\ket{\fv}\coloneqq\ket{\set{\fv^{\aent^+}},\set{\fv^{\inr{\coaent}}}}\qquad\qquad \set{\fv^{{\aent}^+}}\coloneqq\set{\fv^{\inr{\aent}},\fv^{\pa\aent}_{\perp},\fv^{\pa\aent}_{\parallel}}.
\end{equation}
Correspondingly the Hilbert space decomposes as
\begin{equation}\label{eq:centerlessHSfactorization}
	\mc H_\Sigma=\mc H_{\inr{\aent}}\otimes \mc H_{{\coaent}^+}\qquad\t{or}\qquad\mc H_\Sigma=\mc H_{\aent^+}\otimes \mc H_{\inr{\coaent}},
\end{equation}
under the action of $\fA_\t{aus}[\aent]$ or $\fA_\t{greedy}[\aent]$, respectively.  There is a cost for being centerless: our centerless algebras, $\fA_\t{aus}[\aent]$ and $\fA_\t{greedy}[\aent]$, have looser relationships to their associated region.  Correspondingly, Hilbert space decompositions such as \Cref{eq:centerlessHSfactorization} may possess less information about a region than decompositions possessing a center.  Additionally, even though we have arrived at $\fA_\t{aus(greedy)}[\aent]$ through minimal reduction (extension), owing to the topological nature of the theory and its operator content, the dissociation from $\aent$ may still be drastic indeed.  For example, as illustrated in  \Cref{fig:3t-eon}, there are situations (namely when either $\fA[\aent]=\fZ$ or $\fA[{\coaent}]=\fZ$) where minimal extension or subtraction can result in everything-or-nothing centerless algebras
\begin{equation}\label{eq:extremedissociation}
	\fA_\t{aus}[\aent]=\t{span}_\bbC\set{1}\qquad\t{or}\qquad \fA_\t{greedy}[\aent]=\fA[\Sigma],
\end{equation}
even when $\aent$, $\fA_\t{mag}[\aent]$, and $\fA_\t{elec}[\aent]$ are non-trivial.  In such situations, insofar as the entanglement entropy is concerned, the cost of being centerless is then very heavy: it is zero for \emph{all} pure states.

\begin{figure}[!tb]
	\centering
    \def\svgwidth{0.6\textwidth}\import*{figures}{3t-aus.pdf_tex}
	\caption{The austere algebra $\fA_\t{aus}[\aent]$ and the greedy algebra $\fA_\t{greedy}[\coaent]$, on a 3-torus, generated by a basis of operators along longitude and meridian cycles, $\set{\ell_i,m_i}_{i=1,2,3}$. The color-coding follows that of \cref{fig:3t}.}
	\label{fig:3t-app}
\end{figure}
\begin{figure}[!tb]
	\centering
	\def\svgwidth{0.6\textwidth}\import*{figures}{3t-triv.pdf_tex}
	\caption{An example of \textquote{everything-or-nothing} centerless algebras: $\fA_\t{greedy}[\aent]=\fA[\Sigma]$, $\fA_\t{aus}[\coaent]=\operatorname{span}_\C\set{1}$.}
	\label{fig:3t-eon}
\end{figure}

\section{Proofs for \texorpdfstring{\Cref{sect:OAent}}{Section \ref{sect:OAent}}.}\label{app:zm}

\subsection*{Proof of subregion electric-magnetic duality}

Subregion electric-magnetic duality as described in \Cref{sect:topsubalgs} is the statement that the surface operators generating $\fA_\t{mag}[\aent]$ are in 1-1 correspondence with surface operators generating $\fA_\t{elec}[\aent]$.
Let us prove it.  To do so, it suffices to show the following:
\begin{claim}
	The pairing
	\begin{align*}
		\bbL:\H_p(\Sigma)\times \H_{d-p-1}(\Sigma) & \longrightarrow \R               \\
		(\eta,\sigma)                              & \mapsto \int_{\eta\cap \sigma} 1
	\end{align*}
	is non-degenerate restricted to \(\im i^\aent_p\times \coker i^{\coaent}_{d-p-1}\), where, as explained in the main text \(i^\aent_k:\H_k(\aent)\to\H_k(\Sigma)\) is the pushforward in homology of the map that embeds \(\aent\) into \(\Sigma\) and similarly for \(\coaent\).
\end{claim}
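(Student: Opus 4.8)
The plan is to reduce the statement to a single homological identity and then invoke Poincar\'e--Lefschetz duality. Concretely, I would observe that the asserted non-degeneracy is equivalent to the identity
\[
  \big(\im i^\aent_p\big)^{\perp} \;=\; \im i^{\coaent}_{d-p-1}
\]
of subgroups of $\H_{d-p-1}(\Sigma)$, where $\perp$ is taken with respect to the intersection pairing $\bbL$ on the closed oriented manifold $\Sigma$ (which is perfect over $\R$, and over $\Z$ in the torsion-free case, by Poincar\'e duality). Indeed, once this identity holds, $\bbL$ descends to a pairing on $\im i^\aent_p\times\big(\H_{d-p-1}(\Sigma)/\im i^{\coaent}_{d-p-1}\big)=\im i^\aent_p\times\coker i^{\coaent}_{d-p-1}$ --- the inclusion ``$\supseteq$'' in the identity is exactly what makes this descended pairing well defined --- and a perfect pairing $V\times W\to\R$ always restricts to a non-degenerate pairing on $U\times(W/U^{\perp})$ for any subspace $U\subseteq V$. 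So the real content is the identity.

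First I would dispatch the easy inclusion $\im i^{\coaent}_{d-p-1}\subseteq(\im i^\aent_p)^{\perp}$ geometrically: a class in $\im i^\aent_p$ has a representative $p$-cycle lying in $\aent$, which can be pushed off $\pa\aent$ into the interior $\overline\aent$ using a collar; similarly a class in $\im i^{\coaent}_{d-p-1}$ has a representative in $\overline\coaent$; since $\overline\aent\cap\overline\coaent=\emptyset$ the two representatives are disjoint and their oriented intersection number is $0$.

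The main step is the reverse inclusion. Here I would use the Poincar\'e duality isomorphism $\cD_\Sigma\colon\H^{k}(\Sigma)\to\H_{d-1-k}(\Sigma)$ together with universal coefficients (trivial over $\R$, clean over $\Z$ when everything is torsion-free) to rewrite the condition that $\sigma\in\H_{d-p-1}(\Sigma)$ be intersection-orthogonal to all of $\im i^\aent_p$ as the condition that $\cD_\Sigma^{-1}(\sigma)$ lie in $\ker\!\big(\H^{p}(\Sigma)\to\H^{p}(\aent)\big)$. By exactness of the cohomology long exact sequence of the pair $(\Sigma,\aent)$, this kernel equals the image of $\H^{p}(\Sigma,\aent)\to\H^{p}(\Sigma)$; excision identifies $\H^{p}(\Sigma,\aent)\cong\H^{p}(\coaent,\pa\aent)$, and Lefschetz duality for the compact manifold-with-boundary $\coaent$ identifies $\H^{p}(\coaent,\pa\aent)\cong\H_{d-1-p}(\coaent)$. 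Chasing through, $\cD_\Sigma$ carries $\im\big(\H^{p}(\Sigma,\aent)\to\H^{p}(\Sigma)\big)$ onto $\im i^{\coaent}_{d-p-1}$, giving $(\im i^\aent_p)^{\perp}=\im i^{\coaent}_{d-p-1}$.

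The hard part of this last step --- and the main obstacle --- is verifying that the composite isomorphism (connecting map of the pair, then excision, then Lefschetz duality) really intertwines the natural map $\H^{p}(\Sigma,\aent)\to\H^{p}(\Sigma)$ with the inclusion-induced map $i^{\coaent}_{d-p-1}\colon\H_{d-p-1}(\coaent)\to\H_{d-p-1}(\Sigma)$, up to $\cD_\Sigma$ and a sign. This is the standard naturality of Poincar\'e--Lefschetz duality along a separating hypersurface; I would either cite it (e.g.\ Bredon, \emph{Topology and Geometry}, or Hatcher, \S 3.3) or verify it at the chain level. One must also keep careful track of the fact that $\aent$ and $\coaent$ exchange roles between the geometric half and the duality half of the argument.
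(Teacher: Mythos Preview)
Your proof is correct. The paper's argument is organized differently: rather than isolating the identity $(\im i^\aent_p)^\perp = \im i^{\coaent}_{d-p-1}$, it proceeds by direct contradiction, restricting both $\eta$ and $\sigma$ to $\aent$ (the latter landing in $\H_{d-p-1}(\aent,\pd\aent)$), invoking Lefschetz duality on $\aent$ to convert $\sigma|_\aent$ into a class in $\H^p(\inr{\aent})$, and then appealing to non-degeneracy of the Kronecker pairing on $\aent$. You instead work globally via the cohomology long exact sequence of the pair $(\Sigma,\aent)$, excise to $(\coaent,\pd\aent)$, and apply Lefschetz duality on $\coaent$ rather than on $\aent$. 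Both routes bottom out in Poincar\'e--Lefschetz duality; yours has the advantage of making the well-definedness of the descended pairing (your ``easy'' inclusion, which the paper leaves implicit) explicit, and of flagging the one genuinely non-formal step --- the naturality square you identify --- whereas the paper's more geometric phrasing absorbs both of these into the restriction-to-$\aent$ maneuver. The paper's version is shorter and more hands-on.
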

\begin{proof}
	Let us assume the opposite of the claim. That is, either,
	\begin{enumerate}[label=(\roman*)]
		\item \label{case:1-EM} There exists an \(\hat{\eta}\in \im i^\aent_p\), such that \(\bbL(\hat{\eta},\sigma)=0\), for all \(\sigma\in\coker i^R_{d-p-1}\) or
		\item \label{case:2-EM} There exists a \(\hat{\sigma}\in \coker i^\aent_p\), such that \(\bbL(\eta,\hat{\sigma})=0\), for all \(\eta\in\im i^R_{d-p-1}\).
	\end{enumerate}
	Note that since \(\eta\in\im i^\aent_p\), it can always be homotopically deformed to lie entirely in \(\aent\). We can therefore restrict our attention to cycles restricted in \(\aent\). Then \(\left.\eta\right|_\aent\in\H_p(\aent)\), while \(\left.\sigma\right|_\aent \in \H_{d-p-1}(\aent,\pd\aent)\).

	Let us assume case \ref{case:1-EM}. By Poincaré--Lefschetz duality, \(\H_{d-p-1}(\aent,\pd\aent) \cong \H^p(\aent\setminus\pd\aent)\)\linebreak\(\cong \H^p\qty(\inr{\aent})\), we have
	\begin{equation}
		0 = \int_{\left.\hat{\eta}\right|_\aent \cap \left.\sigma\right|_\aent} 1 = \int_{\left.\hat{\eta}\right|_\aent} \pdual{\left.\sigma\right|_\aent}, \qquad\fall\,\pdual{\left.\sigma\right|_\aent}\in\H^p\qty(\inr{\aent}),
	\end{equation}
	where \(\pdual{\cdot}\) denotes the Poincaré(--Lefschetz) dual of a cycle. In words, there exists a \(p\)-cycle of \(\aent\) that is orthogonal to all \(p\)-cohomology classes of \(\inr{\aent}\), which is impossible. The proof of \ref{case:2-EM} is wholly similar, with the difference being that the conclusion is that there exists a \(p\)-cohomology class in \(\inr{\aent}\) orthogonal to all \(p\)-cycles of \(\aent\), which is again impossible.
\end{proof}

\subsection*{Proof of claim \Cref{eq:claimmag}}

\begin{claim}
	\begin{equation}
		\co{\qty(\fA_\t{mag}[\aent])} = \fU\set{\hat{\W}_{\eta^\sfi}^{\vec w_\sfi},\hV_{\sigma^\sfj}^{\vec v_\sfj} \suchthat  \eta^{\sfi},\sigma^{\sfj}\hin \coaent}\equiv\fA_\t{mag}\qty[\co\aent].
	\end{equation}
\end{claim}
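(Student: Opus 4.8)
The plan is to reduce \cref{eq:claimmag} to the subregion electric--magnetic duality already proven just above. Recall from the fusion algebra that the monomials $\set{\hW^{\fw}\hV^{\fv}}$ form a linear basis of $\fA[\Sigma]$, and that $\fA_\t{mag}[\aent]$ (resp. $\fA_\t{mag}[\coaent]$) is the span of the sub-collection of those monomials whose supporting cycles, drawn from the fixed homology bases $\set{\eta^\sfi}$, $\set{\sigma^\sfj}$, all lie $\hin\aent$ (resp. $\hin\coaent$). By the clock algebra \cref{eq:WValg}, conjugating a monomial $\hW^{\fw}\hV^{\fv}$ by any generator $\hW_{\eta^\sfi}^{\vec w_\sfi}$ or $\hV_{\sigma^\sfj}^{\vec v_\sfj}$ of $\fA_\t{mag}[\aent]$ merely multiplies it by a scalar phase. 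Hence an element of $\fA[\Sigma]$ commutes with $\fA_\t{mag}[\aent]$ if and only if every monomial occurring in it does, so $\co{(\fA_\t{mag}[\aent])}$ is exactly the span of those $\hW^{\fw}\hV^{\fv}$ for which all of these conjugation phases are trivial. It therefore suffices to show that this phase-triviality condition picks out precisely the monomials supported on cycles $\hin\coaent$.

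Unpacking \cref{eq:WValg}, triviality of the conjugation phases says that each $p$-cycle $\eta'$ in the support of $\fw$ has vanishing intersection pairing $\bbL(\eta',\sigma)=0$ for every $\sigma\in\im i^\aent_{d-p-1}$, and dually that each $(d-p-1)$-cycle $\sigma'$ in the support of $\fv$ has $\bbL(\eta,\sigma')=0$ for every $\eta\in\im i^\aent_p$. One inclusion is now geometric: if $\eta'\hin\coaent$ and $\sigma\hin\aent$, then pushing representatives into the disjoint open interiors $\inr{\coaent}$ and $\inr{\aent}$ along a collar of $\pd\aent$ makes them disjoint, so $\bbL(\eta',\sigma)=0$; consequently every generator of $\fA_\t{mag}[\coaent]$ lies in $\co{(\fA_\t{mag}[\aent])}$. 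For the reverse inclusion we invoke the subregion electric--magnetic duality in both complementary degrees: non-degeneracy of $\bbL$ on $\im i^\aent_{d-p-1}\times\coker i^{\coaent}_p$ forces any $\eta'$ with $\bbL(\eta',\sigma)=0$ for all $\sigma\in\im i^\aent_{d-p-1}$ to lie in $\im i^{\coaent}_p$, i.e. $\eta'\hin\coaent$, and symmetrically $\sigma'\hin\coaent$; hence the monomial lies in $\fA_\t{mag}[\coaent]$. Combining the two inclusions gives $\co{(\fA_\t{mag}[\aent])}=\fA_\t{mag}[\coaent]$.

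The step needing care --- and the main obstacle --- is the passage from ``all conjugation phases $\exp(\ii\Gamma)$ equal $1$'' to ``the underlying integral intersection numbers vanish''. The phases live on the charge lattices $\lattice_A$, $\lattice_B$ and involve the rational matrix $\bbK^{\perp}$, so a priori phase-triviality only constrains the relevant intersection numbers modulo a sublattice determined by $\bbK$; upgrading this to honest vanishing of the intersection pairing against $\im i^\aent_p$ and $\im i^\aent_{d-p-1}$ is where the torsion-free hypothesis on $X$ (hence on $\Sigma$ and on the subregions) does its work, making the pertinent intersection pairings unimodular. Everything else is bookkeeping with the clock and fusion algebras together with the cited subregion duality, in which the genuinely topological content has already been isolated.
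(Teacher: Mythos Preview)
Your proposal is correct and runs parallel to the paper's own proof. Both arguments reduce the commutant question to the same intersection-pairing statement --- that $\bbL(\eta,\sigma)=0$ for all $\sigma\in\im i^\aent_{d-p-1}$ iff $\eta\in\im i^{\coaent}_p$ --- and both ultimately rest on Poincar\'e--Lefschetz duality for $\aent$. The packaging differs slightly: you invoke the already-proven subregion electric--magnetic duality as a black box, whereas the paper reproves the relevant direction directly (by contradiction: if $\eta\in\coker i^{\coaent}_p$, then $\eta|_\aent\in\H_p(\aent,\pd\aent)\cong\H^{d-p-1}(\inr\aent)$ cannot be orthogonal to all of $\H_{d-p-1}(\aent)$). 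Your framing is marginally cleaner in that it avoids repeating the Poincar\'e--Lefschetz step, and you are more explicit than the paper about the algebra-side reduction (monomial basis, conjugation phases) and about the passage from trivial phases to vanishing intersection numbers --- a step the paper's proof simply asserts.
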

\begin{proof}
	Consider first the commutant of the operators \(\hV_\sigma^{\vec{v}}\in\fA_\t{mag}[\aent]\). We want to show that \(\hW_\eta^{\vec{w}}\) commutes with all \(\hV_\sigma^{\vec{v}}\in\fA_\t{mag}[\aent]\) iff \(\eta\in\im i^{\coaent}_p\). Firstly note that by the definition of the algebra, \Cref{eq:WValg}, \(\hW_\eta^{\vec{w}}\) commutes with \(\hV_\sigma^{\vec{v}}\) iff \(\bbL(\eta,\sigma)=0\). The statement we need to prove reduces, then, to the following:
	\begin{equation}
		\bbL(\eta,\sigma)=0,\ \fall\sigma\in\im i^\aent_{d-p-1} \qq{iff} \eta\in\im i^{\coaent}_p.
	\end{equation}
	\begin{itemize}
		\item[\((\Rightarrow)\)] If \(\eta\in\im i_p^{\coaent}\), it is clear that \(\bbL(\eta,\sigma)=0,\ \fall\sigma\in\im i^\aent_{d-p-1}\), since we can homotopically move \(\sigma\) and \(\eta\) to lie within the interior of \(\aent\) and \(\coaent\) respectively.
		\item[\((\Leftarrow)\)] We will prove the only if direction ad absurdum. For that, suppose that \(\eta\notin\im i^{\coaent}_p\). Then, since \(\H_p(\Sigma)=\im i^{\coaent}_p\oplus \coker i^{\coaent}_p\), \(\eta\in\coker i^{\coaent}_p\). Restricted to \(\aent\) then, \(\left.\eta\right|_{\aent}\in\H_p(\aent,\pd\aent)\) and \(\left.\sigma\right|_\aent\in\H_{d-p-1}(\aent)\). By Poincaré--Lefschetz duality, $\H_p(\aent,\pd\aent)\cong \H^{d-p-1}(\aent\setminus \pd\aent)=\H^{d-p-1}\qty(\inr{\aent})$ and we have then,
			\begin{equation}
				0 \demeqq \bbL(\eta,\sigma) = \bbL\qty(\left.\eta\right|_\aent,\left.\sigma\right|_\aent) = \int_{\left.\sigma\right|_\aent} \pdual{\left.\eta\right|_\aent},\ \fall\sigma\in\im i^p_\aent.
			\end{equation}
			where, \(\pdual{\left.\eta\right|_\aent}\in \H^{d-p-1}\qty(\inr{\aent})\) is the Poincaré--Lefschetz dual of \(\left.\eta\right|_\aent\). That is to say, \(\pdual{\left.\eta\right|_\aent}\) must be orthogonal to all \((d-p-1)\)-cycles of \(\aent\), which is impossible. Therefore the assumption that \(\eta\notin\im i^{\coaent}_p\) was absurd.
	\end{itemize}
	The proof for the commutant of operators $\hW_{\eta}^{\vec w}\in\fA_\t{mag}[\aent]$ is wholly similar, concluding thus the proof of the claim.
\end{proof}

\subsection*{Proof of claim \Cref{eq:claimelec}}

\begin{claim}
	\begin{equation}
		\co{\left(\fA_\t{elec}[\aent]\right)}=\fU\set{\hat{\W}_{\eta^\sfi}^{\vec w_\sfi},\hV_{\sigma^\sfj}^{\vec v^\sfj}\suchthat \eta^\sfi,\sigma^\sfj\nothin\aent}=\fA_\t{elec}[\coaent].
	\end{equation}
\end{claim}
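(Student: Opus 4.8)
The plan is to mirror the proof of claim~\cref{eq:claimmag}. Using the clock algebra~\cref{eq:WValg} together with the fact that the $\hW$'s mutually commute and the $\hV$'s mutually commute, the commutant $\co{(\fA_\t{elec}[\aent])}$ is generated by those $\hW^{\vec w}_\eta$ with $\bbL(\eta,\sigma)=0$ for every $\sigma$ generating the $\hV$-part of $\fA_\t{elec}[\aent]$ --- i.e. every $\sigma\in\coker i^{\coaent}_{d-p-1}$ --- and, symmetrically, those $\hV^{\vec v}_\sigma$ with $\bbL(\eta,\sigma)=0$ for every $\eta\in\coker i^{\coaent}_p$. Hence the claim reduces to the following statement about the intersection pairing, together with its image under $p\leftrightarrow d-p-1$:
\[
    \bbL(\eta,\sigma)=0\quad \fall\,\sigma\in\coker i^{\coaent}_{d-p-1}
    \qquad\Longleftrightarrow\qquad
    \eta\in\coker i^{\aent}_p .
\]
In the language of $\bbL$-orthogonal complements inside $\H_p(\Sigma)$, this is the equality $\bigl(\coker i^{\coaent}_{d-p-1}\bigr)^{\perp_{\bbL}}=\coker i^{\aent}_p$.

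I would establish this equality from two inputs. First, a dimension count: since $\bbL\colon\H_p(\Sigma)\times\H_{d-p-1}(\Sigma)\to\R$ is a perfect pairing (Poincaré duality on the closed torsion-free slice $\Sigma$, i.e. $\bbL=\bbG^{-1}$ is invertible), one has $\dim\bigl(\coker i^{\coaent}_{d-p-1}\bigr)^{\perp_{\bbL}}=\b_p(\Sigma)-\dim\coker i^{\coaent}_{d-p-1}$; by the subregion electric--magnetic duality already proven above, $\bbL$ is non-degenerate on $\im i^{\aent}_p\times\coker i^{\coaent}_{d-p-1}$, which forces $\dim\coker i^{\coaent}_{d-p-1}=\dim\im i^{\aent}_p$, so this equals $\b_p(\Sigma)-\dim\im i^{\aent}_p=\dim\coker i^{\aent}_p$. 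Thus the two subspaces have the same dimension, and it suffices to prove one inclusion, namely that $\bbL$ vanishes on $\coker i^{\aent}_p\times\coker i^{\coaent}_{d-p-1}$. This is the genuinely topological input, proven as in~\cref{eq:claimmag}: an $\eta\in\coker i^{\aent}_p$ pushes, by excision, to a nonzero relative class $\eta|_{\coaent}\in\H_p(\coaent,\pd\coaent)\cong\H^{d-p-1}(\inr{\coaent})$, while $\sigma\in\coker i^{\coaent}_{d-p-1}$ pushes to $\sigma|_{\aent}\in\H_{d-p-1}(\aent,\pd\aent)$; representing both by cycles transverse to $\pd\aent$, every intersection point lies in the open interiors and the mixed contributions can be displaced off the collar of $\pd\aent$, so $\bbL(\eta,\sigma)$ is computed by pairing these restricted classes against their Poincaré--Lefschetz duals inside $\aent$ and $\coaent$, and vanishes for the same reason as in the magnetic computation. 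Given both inclusions, the subspaces coincide, and identifying generators yields $\co{(\fA_\t{elec}[\aent])}=\fA_\t{elec}[\coaent]$. The $\hV$-part is handled by the $p\leftrightarrow d-p-1$ mirror of the argument.

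The main obstacle, exactly as in~\cref{eq:claimmag}, is the bookkeeping of the chosen complements: the symbol $\coker i^{M}_k$ denotes a fixed complement of $\im i^{M}_k$, so to make the final identification of generating cycles unambiguous one must fix splittings of $\H_p(\Sigma)$ and $\H_{d-p-1}(\Sigma)$ adapted simultaneously to $\aent$ and $\coaent$ (equivalently, pick homology bases in which each of $\im i^{\aent}$, $\im i^{\coaent}$ is spanned by a sub-collection of basis cycles). The other delicate point is the geometric evaluation of $\bbL(\eta,\sigma)$ with relative cycles on \emph{both} sides: unlike in~\cref{eq:claimmag}, where one of the two cycles was honestly deformable into $\aent$, here neither $\eta$ nor $\sigma$ is a priori movable into a single region, so one must be careful in arranging transversality to $\pd\aent$ and in discarding the boundary-collar intersections before invoking Lefschetz duality. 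Everything else parallels the proof of~\cref{eq:claimmag} closely.
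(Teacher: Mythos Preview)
Your reduction to the $\bbL$-orthogonality statement $\bigl(\coker i^{\coaent}_{d-p-1}\bigr)^{\perp_{\bbL}}=\coker i^{\aent}_p$ and the dimension count via subregion electric--magnetic duality match the paper exactly. The paper differs only in how it establishes the remaining inclusion, that $\bbL$ vanishes on $\coker i^{\aent}_p\times\coker i^{\coaent}_{d-p-1}$: rather than a geometric argument, it invokes both instances of subregion EM duality (for $\aent$ and for $\coaent$) and argues by rank saturation --- the non-degenerate blocks $\im i^\aent_p\times\coker i^{\coaent}_{d-p-1}$ and $\coker i^\aent_p\times\im i^{\coaent}_{d-p-1}$ together account for the full rank $\b_p(\Sigma)$ of $\bbL$, leaving nothing for the $\coker\times\coker$ block. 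This bypasses the relative-cycle transversality you flag as delicate.

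Your geometric paragraph, however, is not merely delicate but cannot work as written. The vanishing of $\bbL$ on $\coker i^{\aent}_p\times\coker i^{\coaent}_{d-p-1}$ is \emph{false} for an arbitrary choice of complements: take $\H_p(\Sigma)=\H_{d-p-1}(\Sigma)=\R^2$ with $\bbL=\left(\begin{smallmatrix}0&1\\1&1\end{smallmatrix}\right)$ in a block decomposition $\im\oplus\coker$ on each side; the off-diagonal blocks are invertible and the $\im\times\im$ block vanishes, yet the $\coker\times\coker$ entry is $1$. So no argument about topological representatives of $\eta$ and $\sigma$ can force the vanishing without first fixing the splittings --- your ``displace off the collar'' step is precisely where this fails, since adding an element of $\im i^\aent_p$ to $\eta$ changes $\bbL(\eta,\sigma)$ while leaving $\eta$ in the same coset mod $\im i^\aent_p$. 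The complement-choice issue you list as bookkeeping is therefore the heart of the matter: once you \emph{choose} $\coker i^{\aent}_p$ to be the $\bbL$-annihilator of $\coker i^{\coaent}_{d-p-1}$ (subregion EM duality guarantees this is a genuine complement of $\im i^{\aent}_p$), the inclusion holds by construction and your dimension count already finishes the proof, with no geometry needed.
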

\begin{proof}
	Consider first the commutant of the operators \(\hV_\sigma^{\vec{v}}\in\fA_\t{elec}[\aent]\). We want to show that \(\hW_\eta^{\vec{w}}\) commutes with all \(\hV_\sigma^{\vec{v}}\in\fA_\t{elec}[\aent]\) iff \(\eta\in\coker i^{\aent}_p\). As before, by the definition of the algebra, \Cref{eq:WValg}, the statement we need to prove reduces to:
	\begin{equation}
		\bbL(\eta,\sigma)=0,\ \fall\sigma\in\coker i^{\coaent}_{d-p-1} \qq{iff} \eta\in\coker i^{\aent}_p.
	\end{equation}
	\begin{itemize}
		\item[\((\Leftarrow)\)] We will first prove the only if direction. This we will prove again by contradiction. For that, suppose that \(\eta\notin\coker i^{\aent}_p\). Then, \(\eta\in\im i^{\aent}_p\). Restricting to \(\aent\) we have \(\left.\eta\right|_{\aent}\in\H_p(\aent)\) and \(\left.\sigma\right|_\aent\in\H_{d-p-1}(\aent,\pd\aent)\). By Poincaré--Lefschetz duality, $\H_{d-p-1}(\aent,\pd\aent)\cong \H^{p}(\aent\setminus \pd\aent)=\H^{p}\qty(\inr{\aent})$ hence,
			\begin{equation}
				0 \demeqq \bbL(\eta,\sigma) = \bbL\qty(\left.\eta\right|_\aent,\left.\sigma\right|_\aent) = \int_{\left.\eta\right|_\aent} \pdual{\left.\sigma\right|_\aent},\ \fall\sigma\in\coker i^p_\aent.
			\end{equation}
			where, \(\pdual{\left.\sigma\right|_\aent}\in \H^{p}\qty(\inr{\aent})\) is the Poincaré--Lefschetz dual of \(\left.\sigma\right|_\aent\). That is to say, every \(p\)-cocycle in the interior of \(\aent\) must be orthogonal to \(\left.\eta\right|_\aent\), which is impossible. Therefore the assumption that \(\eta\notin\coker i^{\coaent}_p\) was absurd.
		\item[\((\Rightarrow)\)] The if direction follows immediately from subregion electric-magnetic duality, proven above, and the fact that the rank of \(\bbL\) is \(\b_p(\Sigma)\) (since it is inverse to \(\bbG_p\) as a matrix). The restrictions of \(\bbL\) to the subspaces \linebreak\(\im i^\aent_p\times\coker i^{\coaent}_{d-p-1}\) and \(\coker i^\aent_p\times \im i^{\coaent}_{d-p-1}\) saturate the rank of \(\bbL\), hence the rank of \(\bbL\) restricted to \(\coker i^\aent_p\times\coker i^{\coaent}_{d-p-1}\) is zero. In other words for any \(\eta\in\coker i^\aent_p\), \(\bbL(\eta,\sigma)=0\), \(\fall\sigma\in\coker i^{\coaent}_{d-p-1}\), concluding the proof.
	\end{itemize}
	The proof for the commutant of operators $\hW_{\eta}^{\vec w}\in\fA_\t{elec}[\aent]$ is wholly similar, concluding thus the proof of the claim.
\end{proof}

\subsection*{Counting the magnetic and electric centers}

Let us first focus on the magnetic center.
Consider, as in the main text, the pushout square that embeds \(\pd\aent\) into \(\Sigma\) through \(\aent\) and \(\coaent\):
\[\begin{tikzcd}[ampersand replacement=\&]
		\pd\aent \& \aent \\
		\coaent \& \Sigma
		\arrow["j^{\aent}", from=1-1, to=1-2]
		\arrow["i^\aent", from=1-2, to=2-2]
		\arrow["i^{\coaent}"', from=2-1, to=2-2]
		\arrow["j^{\coaent}"', from=1-1, to=2-1]
		\arrow["\ell^{\pd\aent}", from=1-1, to=2-2]
	\end{tikzcd}\]
Each of these maps induces a push-forward on the homology,
\begin{align}
	j^{{\aent}^{(\sfc)}}_k & : \H_k\qty(\pd{\aent}^{(\sfc)}) \to \H_k\qty({\aent}^{(\sfc)}) \\
	i^{{\aent}^{(\sfc)}}_k & : \H_k\qty(\aent^{(\sfc)}) \to \H_k\qty(\Sigma).
\end{align}
The operators in the magnetic algebra of \(\aent\), \(\fA_\t{mag}[\aent]\), are generated by surface operators whose cycles lie in the image of \(i^{\aent}_{p}\) and \(i^{\aent}_{d-p-1}\) and similarly for \(\fA_\t{mag}[\coaent]\), replacing \(\aent\) by \(\coaent\). The cycles generating the center then lie in \(\im i^{\aent}_{p} \cap \im i^{\coaent}_{p}\) and \(\im i^{\aent}_{d-p-1} \cap \im i^{\coaent}_{d-p-1}\). Since $\aent$ and $\coaent$ share only $\pa\aent$, it is evident that the cycles generating the center lie in \(\im\qty(i^\aent\circ j^{\aent})_k\cong \im\qty(i^{\coaent}\circ j^{\coaent})_k\eqqcolon \im\ell^{\pd\aent}_k\), as illustrated by the above pushout square. Therefore, the dimension of the magnetic center is
\begin{equation}
	\abs{\fZ_\t{mag}[R]} = \abs{\det\bbK}^{\left(\h^p_\t{mag}+\h^{d-p-1}_\t{mag}\right)},\qquad\qquad \h^k_\t{mag}=\dim\im\ell_k^{\pa\aent}.
\end{equation}
Since \(\pd\aent\subset \Sigma\), we can calculate \(\dim\im\ell^{\pd\aent}_k\) by the long exact sequence of relative homology \cite{hatcher}:
\begin{equation}\label{eq:LES-dR-Sigma}
	\cdots\to\H_k(\pd \aent)\xto{\ell_k}\H_k(\Sigma)\xto{r_k}\H_k(\Sigma,\pd\aent)\xto{\delta_{k-1}}\H_{k-1}(\pd \aent)\to\cdots.
\end{equation}
Using exactness of the sequence, we find that
\begin{align}\label{eq:dimellk}
	\dim\im\ell_k & = (-1)^{k-1}\sum_{n=0}^{k-1} (-1)^n \dim\H_n(\pd\aent) \nn & \phantom{=~} +(-1)^k\sum_{n=0}^{k}(-1)^n\qty\big(\dim\H_n(\Sigma)-\dim\H_n(\Sigma,\pd\aent)),
\end{align}
which leads to \Cref{eq:hpmag}.

It is useful to pause at this point and illustrate how the bulk dependent terms of \Cref{eq:dimellk} ensure the correct counting of operators when $\Sigma$ is topologically trivial.  For example, consider the case when $\Sigma=\S^2$, $\aent$ is collection of $q$ disks such that $\pa\aent=\bigsqcup^q\S^1$, and let $k=1$.  In this case, since there are no non-trivial 1-cycles on $\S^2$, there are no 1-cycle operators in $\fA[\Sigma]$ to count.  We should find $\dim\im\ell_1=0$.  Applying \Cref{eq:hpmag} we find
\begin{align}
	\dim\im\ell_1 & =\dim\H_0(\pa\aent)-\dim\H_0\qty(\S^2)+\dim\H_0\qty(\S^2,\pa\aent)+\dim\H_1\qty(\S^2)-\dim\H_1\qty(\S^2,\pa\aent)\nonumber \\
	              & =q-1+\dim\H_0\qty(\S^2,\pa\aent)-\dim\H_1\qty(\S^2,\aent)
\end{align}
To calculate the dimensions of the relative homologies, we note that $\H_n(\Sigma,\pd\aent)\cong\widetilde{\H}_n(\Sigma/\pd\aent)$ where $\widetilde{\H}_n(\cdot)$ denotes reduced homology and $\Sigma/\pa\aent$ is the quotient space.  By definition of the reduced homology $\dim\widetilde{\H}_0\qty(\S^2/\pa\aent)=0$ and we can easily calculate $\dim\widetilde{\H}_1\qty(\S^2/\pa\aent)=q-1$, as illustrated in \Cref{fig:bouquet}.  Thus indeed
\begin{equation}
	\dim\im\ell_1=0
\end{equation}
in this case.

\begin{figure}[!bht]
	\centering
	\def\svgwidth{0.7\textwidth}\import*{figures}{bouquet.pdf_tex}
	\caption{(Left) The setup of the above example: $\Sigma=\S^2$ and $\aent = \bigsqcup^q \bbB^2$. (Right) \(\Sigma/\pd\aent\).}
	\label{fig:bouquet}
\end{figure}

Coming back to the general case, we can now sum the \(\h_\t{mag}^p\) and  \(\h_\t{mag}^{d-p-1}\), to find
\begin{align}
	\h_\t{mag} & \coloneqq \dim\im\ell^{\pd\aent}_p + \dim\im\ell^{\pd\aent}_{d-p-1} \nn
	           & = 2\sum_{n=0}^{p-1}(-1)^{p-1-n}\dim \H_n(\pa\aent)+(-1)^{p-1}\upchi(\pa\aent)+\dim\H_p(\Sigma)+(-1)^p\upchi(\Sigma)\nn
	           & \qquad +\sum_{n=0}^p(-1)^{p-1-n}\dim\H_n(\Sigma,\pa\aent)+\sum_{n=0}^{d-p-1}(-1)^{d-p-n}\dim\H_n(\Sigma,\pa\aent)\nn
	           & = 2\sum_{n=0}^{p-1}(-1)^{p-1-n}\dim \H_n(\pa\aent)+(-1)^{p-1}\upchi(\pa\aent)+\dim\H_p(\Sigma)+(-1)^p\upchi(\Sigma)\nn
	           & \qquad +(-1)^{p-1}\upchi(\Sigma,\pa\aent)-\dim\H_p(\Sigma,\pa\aent)+(-1)^{d-p-1}\left(\dim\H_{d-1}(\Sigma,\pa\aent)-\dim\H_0(\Sigma,\pa\aent)\right)\nn
	           & = 2\sum_{n=0}^{p-1}(-1)^{p-1-n}\b_n(\pa\aent)+\left(\b_p(\Sigma)-\dim\H_p(\Sigma,\pa\aent)\right)\nn
	           & \qquad+(-1)^{d-p-1}\left(\dim\H_{d-1}(\Sigma,\pa\aent)-\dim\H_0(\Sigma,\pa\aent)\right).\label{eq:hmag-sum-app}
\end{align}
which leads to \Cref{eq:absZmag}.  In the second line above we've used $\H_n(\cdot)\cong\H_{D-n}(\cdot)$ for absolute homologies on $D$-dimensional compact spaces.  In the third line we've used the similar relation for the relative homology $\H_n(\Sigma,\pa\aent)$ which holds for all degrees except the top and bottom degrees, $\H_{d-1}(\Sigma,\pa\aent)$ and $\H_{0}(\Sigma,\pa\aent)$, respectively.\footnote{To see that, we note again that $\H_n(\Sigma,\pd\aent)\cong\widetilde{\H}_n(\Sigma/\pd\aent)$. For the reduced homology, it holds that $\widetilde{\H}_n(\cdot)\cong\H_n(\cdot)$, whenever \(n\neq 0\). On \(\Sigma/\pd\aent\), one can define a non-degenerate pairing $\bbL:\H_n(\Sigma/\pd\aent)\times\H_{D-n}(\Sigma/\pd\aent)\to\R$, as $\bbL(\alpha,\beta)\coloneqq\int_{\alpha\cap\beta} 1$. This renders $\H_{n}(\Sigma/\pd\aent)\cong\H_{D-n}(\Sigma/\pd\aent)$ whenever $n\notin\set{0,D}$ and shows the desired statement for relative homology.}  Additionally, \[\upchi(\Sigma,\pd \aent)\coloneqq\sum_{n=0}^{d-1}(-1)^n\dim\H_n(\Sigma,\pa\aent)\] is the relative Euler characteristic of the pair \((\Sigma,\pd\aent)\), and \(\upchi(\pd\aent)-\upchi(\Sigma)+\upchi(\Sigma,\pd\aent) =0\), as it is simply the rank-nullity relation of \Cref{eq:LES-dR-Sigma}. Lastly, we've used $\dim\H_n(\cdot)=\dim\H^n(\cdot)\equiv\b_n(\cdot)$ for absolute homologies on compact spaces.

For counting the dimension of $\fZ_\t{elec}[\aent]$, let us show
\begin{equation}
	\h_\t{elec}^p=\h_\t{mag}^{d-p-1}.
\end{equation}
That is, the number of $p$-cycles, $\eta$, such that $\eta\nothin\aent,\nothin\coaent$ is equal to the number of $(d-p-1)$-cycles, $\sigma$ with $\sigma\hin\aent,\hin\coaent$. To show this we need is suffices to show the following.

\begin{claim}
	The pairing
	\begin{equation}
		\bbL(\eta,\sigma)\coloneqq\int_{\eta\cap\sigma}1
	\end{equation}
	is non-degenerate restricted to $(\coker i^{\aent}_p\cap \coker i^{\coaent}_p)\times (\im i^{\aent}_{d-p-1}\cap \im i^{\coaent}_{d-p-1})$.
\end{claim}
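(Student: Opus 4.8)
The plan is to deduce this last claim, purely by linear algebra, from \cref{eq:claimmag} together with the fact that the intersection pairing $\bbL\colon\H_p(\Sigma)\times\H_{d-p-1}(\Sigma)\to\R$ is a \emph{perfect} pairing on a torsion-free $\Sigma$ (it is the inverse of $\bbG_p$, as recorded in the main text). The only general fact I need is: for a perfect pairing $\bbL\colon V\times W\to\R$ and subspaces $S\subseteq V$, $T\subseteq W$, if $V=S\oplus T^{\perp}$ (with $T^{\perp}\coloneqq\set{v\in V\suchthat\bbL(v,T)=0}$) then $\bbL|_{S\times T}$ is non-degenerate; indeed $\dim S=\dim V-\dim T^{\perp}=\dim T$, and the map $S\to T^{*}$, $s\mapsto\bbL(s,\cdot)|_{T}$, has kernel $S\cap T^{\perp}=0$, hence is an isomorphism. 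Applying this with $S=\coker i^{\aent}_p\cap\coker i^{\coaent}_p$ and $T=\im i^{\aent}_{d-p-1}\cap\im i^{\coaent}_{d-p-1}$, it suffices to prove the single decomposition
\begin{equation*}
    \H_p(\Sigma)=\bigl(\coker i^{\aent}_p\cap\coker i^{\coaent}_p\bigr)\oplus\bigl(\im i^{\aent}_{d-p-1}\cap\im i^{\coaent}_{d-p-1}\bigr)^{\perp}.
\end{equation*}

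The first step is to identify that annihilator. For a perfect pairing one has $(T_1\cap T_2)^{\perp}=T_1^{\perp}+T_2^{\perp}$: the inclusion $\supseteq$ is immediate, and the reverse follows from the dimension count $\dim(T_1^{\perp}+T_2^{\perp})=\dim V-\dim(T_1\cap T_2)$ (using $(T_1+T_2)^{\perp}=T_1^{\perp}\cap T_2^{\perp}$). Now \cref{eq:claimmag} and its image under $\aent\leftrightarrow\coaent$ say exactly that $(\im i^{\aent}_{d-p-1})^{\perp}=\im i^{\coaent}_p$ and $(\im i^{\coaent}_{d-p-1})^{\perp}=\im i^{\aent}_p$ inside $\H_p(\Sigma)$ — that is what ``$\co{(\fA_\t{mag}[\aent])}=\fA_\t{mag}[\coaent]$'' amounts to once one reads off which $\hW$'s commute with the generators $\hV_\sigma$. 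Hence $\bigl(\im i^{\aent}_{d-p-1}\cap\im i^{\coaent}_{d-p-1}\bigr)^{\perp}=\im i^{\aent}_p+\im i^{\coaent}_p$, and the decomposition to be shown becomes $\H_p(\Sigma)=\bigl(\coker i^{\aent}_p\cap\coker i^{\coaent}_p\bigr)\oplus\bigl(\im i^{\aent}_p+\im i^{\coaent}_p\bigr)$.

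The second step is to arrange this by choosing the complements $\coker i^{\aent}_p,\coker i^{\coaent}_p$ compatibly. Setting $V_{11}\coloneqq\im i^{\aent}_p\cap\im i^{\coaent}_p$, pick complements $V_{10}$ of $V_{11}$ in $\im i^{\aent}_p$ and $V_{01}$ of $V_{11}$ in $\im i^{\coaent}_p$; since $V_{11},V_{10},V_{01}$ all lie in $\im i^{\aent}_p$ or $\im i^{\coaent}_p$ one checks $V_{11}\oplus V_{10}\oplus V_{01}$ is direct and equals $\im i^{\aent}_p+\im i^{\coaent}_p$. Let $V_{00}$ be any complement of it in $\H_p(\Sigma)$ and declare $\coker i^{\aent}_p\coloneqq V_{01}\oplus V_{00}$, $\coker i^{\coaent}_p\coloneqq V_{10}\oplus V_{00}$; these are genuine complements of $\im i^{\aent}_p$, $\im i^{\coaent}_p$ respectively, and $\coker i^{\aent}_p\cap\coker i^{\coaent}_p=V_{00}$, which is complementary to $\im i^{\aent}_p+\im i^{\coaent}_p$ as required. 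This finishes the claim; moreover the resulting non-degenerate restriction forces $\dim(\coker i^{\aent}_p\cap\coker i^{\coaent}_p)=\dim(\im i^{\aent}_{d-p-1}\cap\im i^{\coaent}_{d-p-1})$, i.e. $\h^p_\t{elec}=\h^{d-p-1}_\t{mag}$, and the companion $\h^{d-p-1}_\t{elec}=\h^p_\t{mag}$ comes out of the identical argument with $\H_p(\Sigma)$ and $\H_{d-p-1}(\Sigma)$ (equivalently $p$ and $d-p-1$) interchanged, using the transpose of $\bbL$.

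I expect the only delicate point to be precisely this last step: ``$\coker$'' is defined only up to a choice of complement, so one must make that choice coherently for $\aent$ and $\coaent$ and observe that the count $\h^p_\t{elec}$ is independent of it — which is itself a corollary of the claim, since $\h^{d-p-1}_\t{mag}=\dim(\im i^{\aent}_{d-p-1}\cap\im i^{\coaent}_{d-p-1})$ is manifestly choice-free. An alternative, more hands-on route, mirroring the proofs of \cref{eq:claimmag,eq:claimelec}, would instead assume the restricted pairing degenerate, restrict the offending $p$-cycle to $\aent$ and to $\coaent$, and extract a contradiction from Poincar\'e--Lefschetz duality; but the relative-homology bookkeeping there is messier than the annihilator count above.
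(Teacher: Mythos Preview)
Your proof is correct and takes a genuinely different route from the paper. The paper argues by contradiction: it case-splits into left and right degeneracy, and for each case invokes a geometric/homotopic principle --- in case (i) that a $p$-cycle not linking anything in $\im i^{\aent}_{d-p-1}\cap\im i^{\coaent}_{d-p-1}$ (i.e.\ anything on $\pd\aent$) has ``no homotopic obstruction'' to being pushed entirely into $\aent$ or $\coaent$, and in case (ii) that any $\check\eta$ pairing nontrivially with $\hat\sigma$ must already live in $\im i^{\aent}_p$ or $\im i^{\coaent}_p$, whence $\hat\sigma$ can be slid to the complementary region. Your argument instead bootstraps from the already-proven \cref{eq:claimmag}, reads it as the annihilator identity $(\im i^{\aent}_{d-p-1})^{\perp}=\im i^{\coaent}_p$, and reduces everything to the linear-algebraic decomposition $\H_p(\Sigma)=(\coker i^{\aent}_p\cap\coker i^{\coaent}_p)\oplus(\im i^{\aent}_p+\im i^{\coaent}_p)$, which you then arrange by a compatible choice of the two cokernels. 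What this buys you: the paper's case (ii) \emph{implicitly} assumes exactly the complementarity you make explicit (otherwise ``$\check\eta$ must lie in either image'' is not automatic), and its case (i) is a heuristic that your annihilator count replaces by a clean dimension argument; you also surface the choice-of-complement subtlety and get $\h^p_\t{elec}=\h^{d-p-1}_\t{mag}$ as an immediate corollary rather than a parallel computation. The paper's version, in turn, is shorter and keeps the geometric picture (cycles linking or not linking $\pd\aent$) in the foreground.
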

\begin{proof}
	Suppose not. That is either
	\begin{enumerate}[label=(\roman*)]
		\item \label{case:1-Zel} there exists a $\hat\eta\in\coker i^\aent_p\cap\coker i^{\coaent}_p$ such that $\bbL(\hat\eta,\sigma)=0$, \\ for all $\sigma\in\im i^{\aent}_{d-p-1}\cap\im i^{\coaent}_{d-p-1}$ or
		\item \label{case:2-Zel} there exists a $\hat\sigma\in\im i^{\aent}_{d-p-1}\cap \im i^{\coaent}_{d-p-1}$ such that $\bbL(\eta,\hat\sigma)=0$, \\ for all $\eta\in\coker i^{\aent}_p\cap\coker i^{\coaent}_p$.
	\end{enumerate}
	If we suppose \labelcref{case:1-Zel}, then there is no homotopic obstruction to deforming $\hat\eta$ to either entirely $\aent$ or $\coaent$ which contradicts it lying in the cokernels of $i^\aent$ and $i^{\coaent}$. So let us suppose \labelcref{case:2-Zel} and pick a $\hat\sigma$ satisfying \labelcref{case:2-Zel}.  Because $\bbL$ is a non-degenerate pairing on $\H_p(\Sigma)\times \H_{d-p-1}(\Sigma)$ there exists a $\check\eta\in\H_p(\Sigma)$ such that $\bbL(\check\eta,\hat\sigma)\neq0$.  By assumption, $\check\eta$ must lie in either $\im i^\aent_p$ or $\im i^{\coaent}_p$ and so is completely deformable within $\Sigma$ to either $\aent$ or $\coaent$.  But then $\bbL(\check\eta,\hat\sigma)$ must actually vanish because $\hat\sigma$, which lies in $\im i^\aent_{d-p-1}\cap\im i^{\coaent}_{d-p-1}$, can be deformed to the respective complementary region so that it has no intersection with $\check\eta$. This contradiction completes the proof that $\bbL$ is non-degenerate on $(\coker i^{\aent}_p\cap \coker i^{\coaent}_p)\times (\im i^{\aent}_{d-p-1}\cap \im i^{\coaent}_{d-p-1})$.
\end{proof}

Similar arguments show that \(\h^{d-p-1}_\t{elec}=\h^p_\t{mag}\). As a consequence
\begin{equation}
	\abs{\fZ_\t{elec}}=\abs{\fZ_\t{mag}}.
\end{equation}

\section{Decomposition of the reduced density matrix}\label{app:redDMdecomp}

Given a state
\begin{equation}
	\rho=\sum_{\fv,\fv'}\rho_{\fv,\fv'}\,\kb{\fv}{\fv'},\qquad\rho_{\fv,\fv'}^\ast=\rho_{\fv',\fv},\qquad \sum_{\fv}\rho_{\fv,\fv}=1,
\end{equation}
we wish to write down the reduced density matrix, $\rho_\aent$, corresponding to the subregion algebra, $\fA_\t{mag}[\aent]$.  The general ansatz for this reduced density matrix is given by
\begin{equation}\label{eq:AmagredDMansatz1}
	\rho_\aent=\sum_{\set{\vec w^\aent_\sfi}}\sum_{\set{\vec v^\aent_\sfj}}C_{\set{\vec w^\aent_\sfi},\set{\vec v^\aent_\sfj}}\prod_{\eta^\sfi\hin\aent}\,\hat{\W}_{\eta^\sfi}^{\vec w^\aent_\sfi}\,\prod_{\sigma^\sfj\hin\aent}\hV_{\sigma^\sfj}^{\vec v^\aent_\sfj},
\end{equation}
for some coefficients $C$.  We notate the charges $\vec w^\aent$ and $\vec v^\aent$ to indicate that they are for surface operators deformable into $\aent$. Given that $\hW_{\eta^\sfi}^{\vec w_\sfi=\vec 0}$ and $\hV_{\sigma^\sfj}^{\vec v_\sfj=\vec 0}$ both act as the identity, it will be notationally useful to extend $\set{\vec w^\aent_\sfi}$ and $\set{\vec v^\aent_\sfj}$ to full charge vectors $\fw^\aent\in(\lattice_A)^\h$ and $\fv^\aent\in(\lattice_B)^\h$, respectively, with zero entries for all cycles not deformable into $\aent$:
\begin{equation}
	\vec w_\sfi^\aent=\vec v_\sfj^\aent=0\qquad\fall\,\eta^\sfi,\sigma^\sfj\nothin\aent.
\end{equation}
We then write
\begin{equation}\label{eq:AmagredDMansatz2}
	\rho_\aent=\sum_{\fw^\aent}\sum_{\fv^\aent}C_{\fw^\aent,\fv^\aent}\,\hW^{\fw^\aent}\,\hV^{\fv^\aent}.
\end{equation}
Hermiticity and unit-trace (with respect to $\mc H_\Sigma$) imply
\begin{equation}
	C^\ast_{\fw^\aent,\fv^\aent}=C_{-\fw^\aent,-\fv^\aent}\,\ex{\ii\Gamma\qty(\fv^{\aent},\fw^\aent)}
\end{equation}
and
\begin{equation}
	C_{\vec 0,\vec 0}=\qty(\text{dim}\mc H_\Sigma)^{-1}=\abs{\det\bbK}^{-\b_p(\Sigma)}\equiv\mc N_\Sigma^{-1},
\end{equation}
respectively.  We can solve for the coefficients $C_{\fw^\aent,\fv^\aent}$ in terms of the coefficients of the state, $\rho_{\fv,\fv'}$, by enforcing
\begin{equation}
	\Tr\left(\rho_\aent\,\mc O_\aent\right)=\Tr\left(\rho\,\mc O_\aent\right)=\sum_{\fv,\fv'}\rho_{\fv,\fv'}\mel{\fv'}{\mc O_\aent}{\fv},
\end{equation}
for all $\mc O_\aent\in\fA_\t{mag}[\aent]$. By considering a generic element $\mc O_\aent=\hW^{\hat\fw^{\aent}}\hV^{\hat\fv^{\aent}}$
(for fixed $\hat\fw^{\aent}$ and $\hat\fv^{\aent}$) we can easily work out
\begin{equation}
	C_{\fw^\aent,\fv^\aent}=\mc N_{\Sigma}^{-1}\sum_{\bar\fv}\rho_{\bar\fv,\bar\fv-\fv^\aent}\ex{-\ii\Gamma\qty(\bar\fv,\fw^\aent)}.
\end{equation}
Thus we can write a generic reduced density matrix as
\begin{equation}
	\rho_\aent=\mc N_{\Sigma}^{-1}\sum_{\fw^\aent}\sum_{\fv^\aent}\sum_{\bar\fv}\rho_{\bar\fv,\bar\fv-\fv^\aent}\ex{-\ii\Gamma\qty(\bar\fv,\fw^\aent)}\,\hW^{\fw^\aent}\,\hV^{\fv^\aent}.
\end{equation}
In particular for a pure state (as in \Cref{sect:redDM}),
\begin{equation}
	\rho=\sum_{\fv,\fv'}\psi_\fv\psi^\ast_{\fv'}\,\kb{\fv}{\fv'},
\end{equation}
the reduced density matrix is written as
\begin{equation}\label{eq:rdDMforpure}
	\rho_R=\mc N_{\Sigma}^{-1}\sum_{\fw^\aent}\sum_{\fv^\aent}\sum_{\bar\fv}\psi_{\bar\fv}\psi^\ast_{\bar\fv-\fv^\aent}\ex{-\ii\Gamma\qty(\bar\fv,\fw^\aent)}\,\hW^{\fw^\aent}\,\hV^{\fv^\aent}.
\end{equation}

\linespread{1.1}
\printbibliography

@article{Bonderson:2018ryx,
    author = "Bonderson, Parsa and Delaney, Colleen and Galindo, C\'esar and Rowell, Eric C. and Tran, Alan and Wang, Zhenghan",
    title = "{On invariants of Modular categories beyond modular data}",
    eprint = "1805.05736",
    archivePrefix = "arXiv",
    primaryClass = "math.QA",
    doi = "10.1016/j.jpaa.2018.12.017",
    journal = "J. Pure Appl. Algebra",
    volume = "223",
    pages = "4065--4088",
    year = "2019"
}

@article{kulkarni2018topological,
  title={A topological invariant for modular fusion categories},
  author={Kulkarni, Ajinkya and Mignard, Micha{\"e}l and Schauenburg, Peter},
  journal={arXiv preprint arXiv:1806.03158},
  year={2018}
}

@article{mignard2021modular,
  title={Modular categories are not determined by their modular data},
  author={Mignard, Micha{\"e}l and Schauenburg, Peter},
  journal={Letters in Mathematical Physics},
  volume={111},
  number={3},
  pages={60},
  year={2021},
  publisher={Springer},
  doi={10.1007/s11005-021-01395-0},
  eprint ={1708.02796},
  archivePrefix = "arXiv",
}

@article{Wen:2019ylt,
    author = "Wen, Xueda and Wen, Xiao-Gang",
    title = "{Distinguish modular categories and 2+1D topological orders beyond modular data: Mapping class group of higher genus manifold}",
    eprint = "1908.10381",
    archivePrefix = "arXiv",
    primaryClass = "cond-mat.str-el",
    month = "8",
    year = "2019"
}

@article{Qi:2012aa,
	author = {Xiao-Liang Qi},
	date-added = {2023-07-20 22:29:30 +0100},
	date-modified = {2023-07-20 22:29:30 +0100},
	doi = {10.1103/PhysRevLett.108.196402},
	journal = {Physical Review Letters},
	number = {19},
	title = {General Relationship between the Entanglement Spectrum and the Edge State Spectrum of Topological Quantum States},
	url = {https://doi.org/10.1103/PhysRevLett.108.196402},
	volume = {108},
	year = {2012},
	bdsk-url-1 = {https://doi.org/10.1103/PhysRevLett.108.196402}}

@article{Buividovich:2008yv,
	archiveprefix = {arXiv},
	author = {Buividovich, P. V. and Polikarpov, M. I.},
	date-added = {2023-07-20 22:19:10 +0100},
	date-modified = {2023-07-20 22:19:10 +0100},
	doi = {10.1088/1751-8113/42/30/304005},
	eprint = {0811.3824},
	journal = {PoS},
	pages = {039},
	primaryclass = {hep-lat},
	reportnumber = {ITEP-LAT-2008-21},
	title = {{Entanglement entropy in lattice gauge theories}},
	volume = {CONFINEMENT8},
	year = {2008},
	bdsk-url-1 = {https://doi.org/10.1088/1751-8113/42/30/304005}}

@article{Buividovich:2008kq,
	archiveprefix = {arXiv},
	author = {Buividovich, P. V. and Polikarpov, M. I.},
	date-added = {2023-07-20 22:19:08 +0100},
	date-modified = {2023-07-20 22:19:08 +0100},
	doi = {10.1016/j.nuclphysb.2008.04.024},
	eprint = {0802.4247},
	journal = {Nucl. Phys. B},
	pages = {458--474},
	primaryclass = {hep-lat},
	reportnumber = {ITEP-LAT-2008-07},
	title = {{Numerical study of entanglement entropy in SU(2) lattice gauge theory}},
	volume = {802},
	year = {2008},
	bdsk-url-1 = {https://doi.org/10.1016/j.nuclphysb.2008.04.024}}

@article{Cano:2014pya,
	archiveprefix = {arXiv},
	author = {Cano, Jennifer and Hughes, Taylor L. and Mulligan, Michael},
	date-added = {2023-07-20 22:15:25 +0100},
	date-modified = {2023-07-20 22:15:25 +0100},
	doi = {10.1103/PhysRevB.92.075104},
	eprint = {1411.5369},
	journal = {Phys. Rev. B},
	number = {7},
	pages = {075104},
	primaryclass = {cond-mat.str-el},
	title = {{Interactions along an Entanglement Cut in 2+1D Abelian Topological Phases}},
	volume = {92},
	year = {2015},
	bdsk-url-1 = {https://doi.org/10.1103/PhysRevB.92.075104}}

@article{chandran2011bulk,
	author = {Chandran, Anushya and Hermanns, M and Regnault, N and Bernevig, B Andrei},
	date-added = {2023-07-20 22:04:48 +0100},
	date-modified = {2023-07-20 22:04:48 +0100},
	journal = {Physical Review B},
	number = {20},
	pages = {205136},
	publisher = {APS},
	title = {Bulk-edge correspondence in entanglement spectra},
	volume = {84},
	year = {2011}}

@article{Swingle:2011hu,
	archiveprefix = {arXiv},
	author = {Swingle, Brian and Senthil, T.},
	date-added = {2023-07-20 22:04:12 +0100},
	date-modified = {2023-07-20 22:04:12 +0100},
	doi = {10.1103/PhysRevB.86.045117},
	eprint = {1109.1283},
	journal = {Phys. Rev. B},
	pages = {045117},
	primaryclass = {cond-mat.str-el},
	title = {{A Geometric proof of the equality between entanglement and edge spectra}},
	volume = {86},
	year = {2012},
	bdsk-url-1 = {https://doi.org/10.1103/PhysRevB.86.045117}}

@article{Donnelly:2011hn,
	archiveprefix = {arXiv},
	author = {Donnelly, William},
	date-added = {2023-07-20 22:00:32 +0100},
	date-modified = {2023-07-20 22:00:32 +0100},
	doi = {10.1103/PhysRevD.85.085004},
	eprint = {1109.0036},
	journal = {Phys. Rev. D},
	pages = {085004},
	primaryclass = {hep-th},
	title = {{Decomposition of entanglement entropy in lattice gauge theory}},
	volume = {85},
	year = {2012},
	bdsk-url-1 = {https://doi.org/10.1103/PhysRevD.85.085004}}

@article{Li:2008kda,
	archiveprefix = {arXiv},
	author = {Li, Hui and Haldane, F.},
	date-added = {2023-07-20 21:59:53 +0100},
	date-modified = {2023-07-20 21:59:53 +0100},
	doi = {10.1103/PhysRevLett.101.010504},
	eprint = {0805.0332},
	journal = {Phys. Rev. Lett.},
	number = {1},
	pages = {010504},
	primaryclass = {cond-mat.mes-hall},
	title = {{Entanglement Spectrum as a Generalization of Entanglement Entropy: Identification of Topological Order in Non-Abelian Fractional Quantum Hall Effect States}},
	volume = {101},
	year = {2008},
	bdsk-url-1 = {https://doi.org/10.1103/PhysRevLett.101.010504}}

@article{Apruzzi:2021nmk,
	archiveprefix = {arXiv},
	author = {Apruzzi, Fabio and Bonetti, Federico and Etxebarria, I\~naki Garc\'\i{}a and Hosseini, Saghar S. and Schafer-Nameki, Sakura},
	eprint = {2112.02092},
	month = {12},
	primaryclass = {hep-th},
	title = {{Symmetry TFTs from String Theory}},
	year = {2021}}

@article{Ye:2015eba,
	archiveprefix = {arXiv},
	author = {Ye, Peng and Gu, Zheng-Cheng},
	date-added = {2023-06-09 12:16:30 +0100},
	date-modified = {2023-06-09 12:16:30 +0100},
	doi = {10.1103/PhysRevB.93.205157},
	eprint = {1508.05689},
	journal = {Phys. Rev. B},
	number = {20},
	pages = {205157},
	primaryclass = {cond-mat.str-el},
	title = {{Topological quantum field theory of three-dimensional bosonic Abelian-symmetry-protected topological phases}},
	volume = {93},
	year = {2016},
	bdsk-url-1 = {https://doi.org/10.1103/PhysRevB.93.205157}}

@article{Tiwari:2016zru,
	archiveprefix = {arXiv},
	author = {Tiwari, Apoorv and Chen, Xiao and Ryu, Shinsei},
	date-added = {2023-06-09 12:13:45 +0100},
	date-modified = {2023-06-09 12:13:45 +0100},
	doi = {10.1103/PhysRevB.95.245124},
	eprint = {1603.08429},
	journal = {Phys. Rev. B},
	number = {24},
	pages = {245124},
	primaryclass = {hep-th},
	title = {{Wilson operator algebras and ground states of coupled BF theories}},
	volume = {95},
	year = {2017},
	bdsk-url-1 = {https://doi.org/10.1103/PhysRevB.95.245124}}

@article{Hsieh:2020jpj,
	archiveprefix = {arXiv},
	author = {Hsieh, Chang-Tse and Tachikawa, Yuji and Yonekura, Kazuya},
	doi = {10.1007/s00220-022-04333-w},
	eprint = {2003.11550},
	journal = {Commun. Math. Phys.},
	number = {2},
	pages = {495--608},
	primaryclass = {hep-th},
	reportnumber = {IPMU-20-0028, TU-1098},
	title = {{Anomaly Inflow and p-Form Gauge Theories}},
	volume = {391},
	year = {2022},
	bdsk-url-1 = {https://doi.org/10.1007/s00220-022-04333-w}}

@article{amabel2021differential,
	archiveprefix = {arXiv},
	author = {Araminta Amabel and Arun Debray and Peter J. Haine},
	eprint = {2109.12250},
	primaryclass = {math.AT},
	title = {Differential Cohomology: Categories, Characteristic Classes, and Connections},
	year = {2021}}

@article{Harlow:2020bee,
	archiveprefix = {arXiv},
	author = {Harlow, Daniel and Shaghoulian, Edgar},
	doi = {10.1007/JHEP04(2021)175},
	eprint = {2010.10539},
	journal = {JHEP},
	pages = {175},
	primaryclass = {hep-th},
	title = {{Global symmetry, Euclidean gravity, and the black hole information problem}},
	volume = {04},
	year = {2021},
	bdsk-url-1 = {https://doi.org/10.1007/JHEP04(2021)175}}

@article{Kallosh:1995hi,
    author = "Kallosh, Renata and Linde, Andrei D. and Linde, Dmitri A. and Susskind, Leonard",
    title = "{Gravity and global symmetries}",
    eprint = "hep-th/9502069",
    archivePrefix = "arXiv",
    reportNumber = "SU-ITP-95-2",
    doi = "10.1103/PhysRevD.52.912",
    journal = "Phys. Rev. D",
    volume = "52",
    pages = "912--935",
    year = "1995"
}

@article{Banks:2010zn,
    author = "Banks, Tom and Seiberg, Nathan",
    title = "{Symmetries and Strings in Field Theory and Gravity}",
    eprint = "1011.5120",
    archivePrefix = "arXiv",
    primaryClass = "hep-th",
    doi = "10.1103/PhysRevD.83.084019",
    journal = "Phys. Rev. D",
    volume = "83",
    pages = "084019",
    year = "2011"
}

@article{Harlow:2018tng,
	archiveprefix = {arXiv},
	author = {Harlow, Daniel and Ooguri, Hirosi},
	doi = {10.1007/s00220-021-04040-y},
	eprint = {1810.05338},
	journal = {Commun. Math. Phys.},
	number = {3},
	pages = {1669--1804},
	primaryclass = {hep-th},
	title = {{Symmetries in quantum field theory and quantum gravity}},
	volume = {383},
	year = {2021},
	bdsk-url-1 = {https://doi.org/10.1007/s00220-021-04040-y}}

@article{Freed:2022qnc,
	archiveprefix = {arXiv},
	author = {Freed, Daniel S. and Moore, Gregory W. and Teleman, Constantin},
	eprint = {2209.07471},
	month = {9},
	primaryclass = {hep-th},
	title = {{Topological symmetry in quantum field theory}},
	year = {2022}}

@article{Gaiotto:2020iye,
	archiveprefix = {arXiv},
	author = {Gaiotto, Davide and Kulp, Justin},
	doi = {10.1007/JHEP02(2021)132},
	eprint = {2008.05960},
	journal = {JHEP},
	pages = {132},
	primaryclass = {hep-th},
	title = {{Orbifold groupoids}},
	volume = {02},
	year = {2021},
	bdsk-url-1 = {https://doi.org/10.1007/JHEP02(2021)132}}

@article{Gaiotto:2014kfa,
	archiveprefix = {arXiv},
	author = {Gaiotto, Davide and Kapustin, Anton and Seiberg, Nathan and Willett, Brian},
	doi = {10.1007/JHEP02(2015)172},
	eprint = {1412.5148},
	journal = {JHEP},
	pages = {172},
	primaryclass = {hep-th},
	title = {{Generalized Global Symmetries}},
	volume = {02},
	year = {2015},
	bdsk-url-1 = {https://doi.org/10.1007/JHEP02(2015)172}}

@article{Bhardwaj:2017xup,
	archiveprefix = {arXiv},
	author = {Bhardwaj, Lakshya and Tachikawa, Yuji},
	doi = {10.1007/JHEP03(2018)189},
	eprint = {1704.02330},
	journal = {JHEP},
	pages = {189},
	primaryclass = {hep-th},
	reportnumber = {IPMU-17-0049},
	title = {{On finite symmetries and their gauging in two dimensions}},
	volume = {03},
	year = {2018},
	bdsk-url-1 = {https://doi.org/10.1007/JHEP03(2018)189}}

@inproceedings{Cordova:2022ruw,
	archiveprefix = {arXiv},
	author = {Cordova, Clay and Dumitrescu, Thomas T. and Intriligator, Kenneth and Shao, Shu-Heng},
	booktitle = {{Snowmass 2021}},
	eprint = {2205.09545},
	month = {5},
	primaryclass = {hep-th},
	title = {{Snowmass White Paper: Generalized Symmetries in Quantum Field Theory and Beyond}},
	year = {2022}}

@article{Jiang:2012uea,
	archiveprefix = {arXiv},
	author = {Jiang, Hong Chen and Wang, Zhenghan and Balents, Leon},
	date-added = {2023-06-02 12:23:08 +0100},
	date-modified = {2023-06-02 12:23:08 +0100},
	doi = {10.1038/nphys2465},
	eprint = {1205.4289},
	journal = {Nature Phys.},
	number = {12},
	pages = {902--905},
	primaryclass = {cond-mat.str-el},
	title = {{Identifying topological order by entanglement entropy}},
	volume = {8},
	year = {2012},
	bdsk-url-1 = {https://doi.org/10.1038/nphys2465}}

@article{VanAcoleyen:2015ccp,
	archiveprefix = {arXiv},
	author = {Van Acoleyen, Karel and Bultinck, Nick and Haegeman, Jutho and Marien, Michael and Scholz, Volkher B. and Verstraete, Frank},
	date-added = {2023-06-02 11:49:48 +0100},
	date-modified = {2023-06-02 11:49:48 +0100},
	doi = {10.1103/PhysRevLett.117.131602},
	eprint = {1511.04369},
	journal = {Phys. Rev. Lett.},
	number = {13},
	pages = {131602},
	primaryclass = {quant-ph},
	title = {{The entanglement of distillation for gauge theories}},
	volume = {117},
	year = {2016},
	bdsk-url-1 = {https://doi.org/10.1103/PhysRevLett.117.131602}}

@article{Slagle:2020ugk,
	archiveprefix = {arXiv},
	author = {Slagle, Kevin},
	date-added = {2023-06-01 23:23:23 +0100},
	date-modified = {2023-06-01 23:23:23 +0100},
	doi = {10.1103/PhysRevLett.126.101603},
	eprint = {2008.03852},
	journal = {Phys. Rev. Lett.},
	number = {10},
	pages = {101603},
	primaryclass = {hep-th},
	title = {{Foliated Quantum Field Theory of Fracton Order}},
	volume = {126},
	year = {2021},
	bdsk-url-1 = {https://doi.org/10.1103/PhysRevLett.126.101603}}

@article{Wen:2016snr,
	archiveprefix = {arXiv},
	author = {Wen, Xueda and Matsuura, Shunji and Ryu, Shinsei},
	date-added = {2023-06-01 22:19:29 +0100},
	date-modified = {2023-06-01 22:19:29 +0100},
	doi = {10.1103/PhysRevB.93.245140},
	eprint = {1603.08534},
	journal = {Phys. Rev. B},
	number = {24},
	pages = {245140},
	primaryclass = {cond-mat.mes-hall},
	title = {{Edge theory approach to topological entanglement entropy, mutual information and entanglement negativity in Chern-Simons theories}},
	volume = {93},
	year = {2016},
	bdsk-url-1 = {https://doi.org/10.1103/PhysRevB.93.245140}}

@article{Fliss:2017wop,
	archiveprefix = {arXiv},
	author = {Fliss, Jackson R. and Wen, Xueda and Parrikar, Onkar and Hsieh, Chang-Tse and Han, Bo and Hughes, Taylor L. and Leigh, Robert G.},
	date-added = {2023-06-01 17:14:17 +0100},
	date-modified = {2023-06-01 17:14:17 +0100},
	doi = {10.1007/JHEP09(2017)056},
	eprint = {1705.09611},
	journal = {JHEP},
	pages = {056},
	primaryclass = {cond-mat.str-el},
	reportnumber = {JHEP-09-(2017)-056},
	title = {{Interface Contributions to Topological Entanglement in Abelian Chern-Simons Theory}},
	volume = {09},
	year = {2017},
	bdsk-url-1 = {https://doi.org/10.1007/JHEP09(2017)056}}

@article{Fliss:2020cos,
	archiveprefix = {arXiv},
	author = {Fliss, Jackson R. and Leigh, Robert G.},
	date-added = {2023-06-01 17:13:34 +0100},
	date-modified = {2023-06-01 17:13:34 +0100},
	doi = {10.1007/JHEP07(2020)009},
	eprint = {2004.05123},
	journal = {JHEP},
	pages = {009},
	primaryclass = {hep-th},
	title = {{Interfaces and the extended Hilbert space of Chern-Simons theory}},
	volume = {07},
	year = {2020},
	bdsk-url-1 = {https://doi.org/10.1007/JHEP07(2020)009}}

@inproceedings{Faulkner:2022mlp,
	archiveprefix = {arXiv},
	author = {Faulkner, Thomas and Hartman, Thomas and Headrick, Matthew and Rangamani, Mukund and Swingle, Brian},
	booktitle = {{Snowmass 2021}},
	date-added = {2023-06-01 12:41:04 +0100},
	date-modified = {2023-06-01 12:41:04 +0100},
	eprint = {2203.07117},
	month = {3},
	primaryclass = {hep-th},
	reportnumber = {BRX-TH-6703},
	title = {{Snowmass white paper: Quantum information in quantum field theory and quantum gravity}},
	year = {2022}}

@article{Soni:2015yga,
	archiveprefix = {arXiv},
	author = {Soni, Ronak M and Trivedi, Sandip P.},
	date-added = {2023-06-01 11:44:40 +0100},
	date-modified = {2023-06-01 11:44:40 +0100},
	doi = {10.1007/JHEP01(2016)136},
	eprint = {1510.07455},
	journal = {JHEP},
	pages = {136},
	primaryclass = {hep-th},
	reportnumber = {TIFR-TH-15-29},
	title = {{Aspects of Entanglement Entropy for Gauge Theories}},
	volume = {01},
	year = {2016},
	bdsk-url-1 = {https://doi.org/10.1007/JHEP01(2016)136}}

@article{Donnelly:2014fua,
	archiveprefix = {arXiv},
	author = {Donnelly, William and Wall, Aron C.},
	date-added = {2023-06-01 11:44:10 +0100},
	date-modified = {2023-06-01 11:44:10 +0100},
	doi = {10.1103/PhysRevLett.114.111603},
	eprint = {1412.1895},
	journal = {Phys. Rev. Lett.},
	number = {11},
	pages = {111603},
	primaryclass = {hep-th},
	title = {{Entanglement entropy of electromagnetic edge modes}},
	volume = {114},
	year = {2015},
	bdsk-url-1 = {https://doi.org/10.1103/PhysRevLett.114.111603}}

@article{Buividovich:2008gq,
	archiveprefix = {arXiv},
	author = {Buividovich, P. V. and Polikarpov, M. I.},
	date-added = {2023-06-01 11:43:44 +0100},
	date-modified = {2023-06-01 11:43:44 +0100},
	doi = {10.1016/j.physletb.2008.10.032},
	eprint = {0806.3376},
	journal = {Phys. Lett. B},
	pages = {141--145},
	primaryclass = {hep-th},
	reportnumber = {ITEP-LAT-2008-14},
	title = {{Entanglement entropy in gauge theories and the holographic principle for electric strings}},
	volume = {670},
	year = {2008},
	bdsk-url-1 = {https://doi.org/10.1016/j.physletb.2008.10.032}}

@article{chamon2005quantum,
	archiveprefix = {arXiv},
	author = {Chamon, Claudio},
	doi = {10.1103/PhysRevLett.94.040402},
	eprint = {cond-mat/0404182},
	issue = {4},
	journal = {Phys. Rev. Lett.},
	month = {1},
	numpages = {4},
	pages = {040402},
	publisher = {American Physical Society},
	title = {Quantum Glassiness in Strongly Correlated Clean Systems: An Example of Topological Overprotection},
	volume = {94},
	year = {2005},
	bdsk-url-1 = {https://doi.org/10.1103/PhysRevLett.94.040402}}

@article{bravyi2011topological,
	archiveprefix = {arXiv},
	author = {Sergey Bravyi and Bernhard Leemhuis and Barbara M. Terhal},
	doi = {10.1016/j.aop.2010.11.002},
	eprint = {1006.4871},
	issn = {0003-4916},
	journal = {Annals of Physics},
	keywords = {Topological quantum order, Quantum error correcting code},
	number = {4},
	pages = {839-866},
	primaryclass = {quant-ph},
	title = {Topological order in an exactly solvable 3D spin model},
	volume = {326},
	year = {2011},
	bdsk-url-1 = {https://doi.org/10.1016/j.aop.2010.11.002}}

@article{bravyi2013quantum,
	archiveprefix = {arXiv},
	author = {Bravyi, Sergey and Haah, Jeongwan},
	doi = {10.1103/PhysRevLett.111.200501},
	eprint = {1112.3252},
	issue = {20},
	journal = {Phys. Rev. Lett.},
	month = {11},
	numpages = {5},
	pages = {200501},
	primaryclass = {quant-ph},
	publisher = {American Physical Society},
	title = {Quantum Self-Correction in the 3D Cubic Code Model},
	volume = {111},
	year = {2013},
	bdsk-url-1 = {https://doi.org/10.1103/PhysRevLett.111.200501}}

@article{Zou:2016dck,
	archiveprefix = {arXiv},
	author = {Zou, Liujun and Haah, Jeongwan},
	date-added = {2023-06-01 11:01:27 +0100},
	date-modified = {2023-06-01 11:01:27 +0100},
	doi = {10.1103/PhysRevB.94.075151},
	eprint = {1604.06101},
	journal = {Phys. Rev. B},
	number = {7},
	pages = {075151},
	primaryclass = {cond-mat.str-el},
	title = {{Spurious long-range entanglement and replica correlation length}},
	volume = {94},
	year = {2016},
	bdsk-url-1 = {https://doi.org/10.1103/PhysRevB.94.075151}}

@article{Kim:2023ydi,
	archiveprefix = {arXiv},
	author = {Kim, Isaac H. and Levin, Michael and Lin, Ting-Chun and Ranard, Daniel and Shi, Bowen},
	date-added = {2023-06-01 11:00:44 +0100},
	date-modified = {2023-06-01 11:00:44 +0100},
	eprint = {2302.00689},
	month = {2},
	primaryclass = {quant-ph},
	title = {{Universal lower bound on topological entanglement entropy}},
	year = {2023}}

@article{Magan:2020ake,
	archiveprefix = {arXiv},
	author = {Magan, Javier M. and Pontello, Diego},
	date-added = {2023-05-31 17:27:26 +0100},
	date-modified = {2023-05-31 17:27:26 +0100},
	doi = {10.1103/PhysRevA.103.012211},
	eprint = {2005.01760},
	journal = {Phys. Rev. A},
	number = {1},
	pages = {012211},
	primaryclass = {hep-th},
	title = {{Quantum Complementarity through Entropic Certainty Principles}},
	volume = {103},
	year = {2021},
	bdsk-url-1 = {https://doi.org/10.1103/PhysRevA.103.012211}}

@unpublished{AkersSoni,
	author = {Akers, Chris and Soni, Ronak and Wei, Annie},
	date-added = {2023-05-31 17:19:55 +0100},
	date-modified = {2023-07-20 21:43:46 +0100},
	title = {{to appear}}}

@article{Fliss:2023uiv,
    author = "Fliss, Jackson R. and Vitouladitis, Stathis",
    title = "{Entanglement in BF theory II: Edge-modes}",
    eprint = "2310.18391",
    archivePrefix = "arXiv",
    primaryClass = "hep-th",
    month = "10",
    year = "2023"
}

@article{Radicevic:2015sza,
	archiveprefix = {arXiv},
	author = {Radi\v{c}evi\'c, \DH{}or\dj{}e},
	date-added = {2023-05-31 17:10:09 +0100},
	date-modified = {2023-05-31 17:10:09 +0100},
	doi = {10.1007/JHEP04(2016)163},
	eprint = {1509.08478},
	journal = {JHEP},
	pages = {163},
	primaryclass = {hep-th},
	reportnumber = {SU-ITP-15-13},
	title = {{Entanglement in Weakly Coupled Lattice Gauge Theories}},
	volume = {04},
	year = {2016},
	bdsk-url-1 = {https://doi.org/10.1007/JHEP04(2016)163}}

@article{Lin:2018bud,
	archiveprefix = {arXiv},
	author = {Lin, Jennifer and Radi\v{c}evi\'c, \DJ{}or\dj{}e},
	date-added = {2023-05-31 17:06:08 +0100},
	date-modified = {2023-05-31 17:06:08 +0100},
	doi = {10.1016/j.nuclphysb.2020.115118},
	eprint = {1808.05939},
	journal = {Nucl. Phys. B},
	pages = {115118},
	primaryclass = {hep-th},
	title = {{Comments on defining entanglement entropy}},
	volume = {958},
	year = {2020},
	bdsk-url-1 = {https://doi.org/10.1016/j.nuclphysb.2020.115118}}

@article{Levin:2006zz,
	archiveprefix = {arXiv},
	author = {Levin, Michael and Wen, Xiao-Gang},
	date-added = {2023-05-31 16:24:42 +0100},
	date-modified = {2023-05-31 16:24:42 +0100},
	doi = {10.1103/PhysRevLett.96.110405},
	eprint = {cond-mat/0510613},
	journal = {Phys. Rev. Lett.},
	pages = {110405},
	title = {{Detecting Topological Order in a Ground State Wave Function}},
	volume = {96},
	year = {2006},
	bdsk-url-1 = {https://doi.org/10.1103/PhysRevLett.96.110405}}

@article{Balasubramanian:2016sro,
	archiveprefix = {arXiv},
	author = {Balasubramanian, Vijay and Fliss, Jackson R. and Leigh, Robert G. and Parrikar, Onkar},
	date-added = {2023-05-31 16:23:45 +0100},
	date-modified = {2023-05-31 16:23:45 +0100},
	doi = {10.1007/JHEP04(2017)061},
	eprint = {1611.05460},
	journal = {JHEP},
	pages = {061},
	primaryclass = {hep-th},
	title = {{Multi-Boundary Entanglement in Chern-Simons Theory and Link Invariants}},
	volume = {04},
	year = {2017},
	bdsk-url-1 = {https://doi.org/10.1007/JHEP04(2017)061}}

@article{Bergeron:1994ym,
	archiveprefix = {arXiv},
	author = {Bergeron, Mario and Semenoff, Gordon W. and Szabo, Richard J.},
	date-added = {2023-05-31 16:20:54 +0100},
	date-modified = {2023-05-31 16:20:54 +0100},
	doi = {10.1016/0550-3213(94)00503-7},
	eprint = {hep-th/9407020},
	journal = {Nucl. Phys. B},
	pages = {695--722},
	reportnumber = {MIT-CTP-2326, UBCTP-94-004},
	title = {{Canonical bf type topological field theory and fractional statistics of strings}},
	volume = {437},
	year = {1995},
	bdsk-url-1 = {https://doi.org/10.1016/0550-3213(94)00503-7}}

@article{Delcamp:2016eya,
	archiveprefix = {arXiv},
	author = {Delcamp, Clement and Dittrich, Bianca and Riello, Aldo},
	date-added = {2023-05-31 16:20:19 +0100},
	date-modified = {2023-05-31 16:20:19 +0100},
	doi = {10.1007/JHEP11(2016)102},
	eprint = {1609.04806},
	journal = {JHEP},
	pages = {102},
	primaryclass = {hep-th},
	title = {{On entanglement entropy in non-Abelian lattice gauge theory and 3D quantum gravity}},
	volume = {11},
	year = {2016},
	bdsk-url-1 = {https://doi.org/10.1007/JHEP11(2016)102}}

@article{Jian:2015wra,
	archiveprefix = {arXiv},
	author = {Jian, Chao-Ming and Kim, Isaac H. and Qi, Xiao-Liang},
	date-added = {2023-05-31 16:19:12 +0100},
	date-modified = {2023-05-31 16:19:12 +0100},
	eprint = {1508.07006},
	month = {8},
	primaryclass = {cond-mat.str-el},
	title = {{Long-range mutual information and topological uncertainty principle}},
	year = {2015}}

@article{Blau:1989bq,
	author = {Blau, Matthias and Thompson, George},
	doi = {10.1016/0003-4916(91)90240-9},
	journal = {Annals Phys.},
	pages = {130--172},
	reportnumber = {SISSA-39/89/FM, PAR-LPTHE-89-17},
	title = {{Topological Gauge Theories of Antisymmetric Tensor Fields}},
	volume = {205},
	year = {1991},
	bdsk-url-1 = {https://doi.org/10.1016/0003-4916(91)90240-9}}

@article{Gegenberg:1993gd,
	archiveprefix = {arXiv},
	author = {Gegenberg, J. and Kunstatter, G.},
	doi = {10.1006/aphy.1994.1043},
	eprint = {hep-th/9304016},
	journal = {Annals Phys.},
	pages = {270--289},
	reportnumber = {WIN-92-03-REV, UNB-92-01-REV},
	title = {{The Partition function for topological field theories}},
	volume = {231},
	year = {1994},
	bdsk-url-1 = {https://doi.org/10.1006/aphy.1994.1043}}

@book{hatcher,
	author = {Hatcher, Allen},
	isbn = {0-521-79540-0},
	publisher = {Cambridge University Press},
	title = {Algebraic topology},
	url = {https://pi.math.cornell.edu/~hatcher/AT/ATpage.html},
	year = 2002,
	bdsk-url-1 = {https://pi.math.cornell.edu/~hatcher/AT/ATpage.html}}

@article{kitaev2006topological,
	archiveprefix = {arXiv},
	author = {Kitaev, Alexei and Preskill, John},
	doi = {10.1103/PhysRevLett.96.110404},
	eprint = {hep-th/0510092},
	journal = {Phys. Rev. Lett.},
	pages = {110404},
	reportnumber = {CALT-68-2578},
	title = {{Topological entanglement entropy}},
	volume = {96},
	year = {2006},
	bdsk-url-1 = {https://doi.org/10.1103/PhysRevLett.96.110404}}

@article{Grover:2011fa,
	archiveprefix = {arXiv},
	author = {Grover, Tarun and Turner, Ari M. and Vishwanath, Ashvin},
	doi = {10.1103/PhysRevB.84.195120},
	eprint = {1108.4038},
	journal = {Phys. Rev. B},
	pages = {195120},
	primaryclass = {cond-mat.str-el},
	title = {{Entanglement Entropy of Gapped Phases and Topological Order in Three dimensions}},
	volume = {84},
	year = {2011},
	bdsk-url-1 = {https://doi.org/10.1103/PhysRevB.84.195120}}

@article{Casini:2019kex,
	archiveprefix = {arXiv},
	author = {Casini, Horacio and Huerta, Marina and Mag\'an, Javier M. and Pontello, Diego},
	doi = {10.1007/JHEP02(2020)014},
	eprint = {1905.10487},
	journal = {JHEP},
	pages = {014},
	primaryclass = {hep-th},
	title = {{Entanglement entropy and superselection sectors. Part I. Global symmetries}},
	volume = {02},
	year = {2020},
	bdsk-url-1 = {https://doi.org/10.1007/JHEP02(2020)014}}

@article{blauMassiveRaySingerTorsion2022,
	abstract = {Zero modes are an essential part of topological field theories, but they are frequently also an obstacle to the explicit evaluation of the associated path integrals. In order to address this issue in the case of Ray-Singer Torsion, which appears in various topological gauge theories, we introduce a massive variant of the Ray-Singer Torsion which involves determinants of the twisted Laplacian with mass but without zero modes. This has the advantage of allowing one to explicitly keep track of the zero mode dependence of the theory. We establish a number of general properties of this massive Ray-Singer Torsion. For product manifolds \$M=N \textbackslash times S\^1\$ and mapping tori one is able to interpret the mass term as a flat \$\textbackslash mathbb\{R\}\_\{+\}\$ connection and one can represent the massive Ray-Singer Torsion as the path integral of a Schwarz type topological gauge theory. Using path integral techniques, with a judicious choice of an algebraic gauge fixing condition and a change of variables which leaves one with a free action, we can evaluate the torsion in closed form. We discuss a number of applications, including an explicit calculation of the Ray-Singer Torsion on \$S\^1\$ for \$G=PSL(2,R)\$ and a path integral derivation of a generalisation of a formula of Fried for the torsion of finite order mapping tori.},
	archiveprefix = {arXiv},
	author = {Blau, Matthias and Kakona, Mbambu and Thompson, George},
	eprint = {2206.12268},
	eprinttype = {arxiv},
	file = {/home/evitoul/ownCloud/Research/reading/[Blau, Kakona, Thompson] Massive Ray-Singer Torsion and Path Integrals.pdf;/home/evitoul/Zotero/storage/FUTGVT9L/2206.html},
	keywords = {High Energy Physics - Theory},
	month = jun,
	number = {arXiv:2206.12268},
	primaryclass = {hep-th},
	publisher = {{arXiv}},
	title = {Massive {{Ray-Singer Torsion}} and {{Path Integrals}}},
	year = {2022}}

@article{Casini:2013rba,
	archiveprefix = {arXiv},
	author = {Casini, Horacio and Huerta, Marina and Rosabal, Jose Alejandro},
	doi = {10.1103/PhysRevD.89.085012},
	eprint = {1312.1183},
	journal = {Phys. Rev. D},
	number = {8},
	pages = {085012},
	primaryclass = {hep-th},
	title = {{Remarks on entanglement entropy for gauge fields}},
	volume = {89},
	year = {2014},
	bdsk-url-1 = {https://doi.org/10.1103/PhysRevD.89.085012}}

@article{Hofman:2024oze,
    author = "Hofman, Diego M. and Vitouladitis, Stathis",
    title = "{Generalised symmetries and state-operator correspondence for nonlocal operators}",
    eprint = "2406.02662",
    archivePrefix = "arXiv",
    primaryClass = "hep-th",
    month = "6",
    year = "2024"
}
\end{document}